\newcolumntype{b}{X}
\newcolumntype{s}{>{\hsize=.5\hsize}X}
\newcolumntype{t}{>{\hsize=.2\hsize}X}
\newcommand{\D}{\displaystyle}
\newcommand{\de}{\Delta}
\newcommand{\br}[1]{\left\{#1\right\}}
\newcommand{\xqedhere}[2]{%
  \rlap{\hbox to#1{\hfil\llap{\ensuremath{#2}}}}}
\begin{document}


\title{Infection severity across scales in multi-strain immuno-epidemiological Dengue model structured by host antibody level}

\author{Hayriye Gulbudak \and Cameron J. Browne}

\institute{H. Gulbudak \at Mathematics Department, University of Louisiana at Lafayette, Lafayette, LA \\
            \email{hayriye.gulbudak@louisiana.edu}    \and  C.J. Browne \at Mathematics Department, University of Louisiana at Lafayette, Lafayette, LA     
  }

\maketitle

\begin{abstract}
Infection by distinct Dengue virus serotypes and host immunity are intricately linked.  In particular, certain levels of cross-reactive antibodies in the host may actually enhance infection severity leading to Dengue hemorrhagic fever (DHF).  The coupled immunological and epidemiological dynamics of Dengue calls for a multi-scale modeling approach.  In this work, we formulate a within-host model which mechanistically recapitulates characteristics of antibody dependent enhancement (ADE) in Dengue infection.  The within-host scale is then linked to epidemiological spread by a vector-host partial differential equation model structured by host antibody level.  The coupling allows for dynamic population-wide antibody levels to be tracked through primary and secondary infections by distinct Dengue strains, along with waning of cross-protective immunity after primary infection.  Analysis of both the within-host and between-host systems are conducted.  Stability results in the epidemic model are formulated via basic and invasion reproduction numbers as a function of immunological variables.  Additionally, we develop numerical methods in order to simulate the multi-scale model and assess the influence of parameters on disease spread and DHF prevalence in the population. 
\keywords{Dengue hemorrhagic fever (DHF) \and antibody dependent enhancement (ADE) \and multi-scale  \and immuno-epidemiological model \and size-structured partial differential equation (PDE) \and invasion stability analysis }
\end{abstract}


\section{Introduction}
The global burden of Dengue infection has rapidly increased in recent years, with about 400 million dengue infections occurring every year.  Research has delved into the complexities of this mosquito-transmitted disease.  The intricate relationship between host immune response, pathogenesis, viral diversity and epidemiology has received particular attention.  While the immune response ultimately clears Dengue virus from an infected host and provides strain-specific immunity upon recovery, certain levels of cross-reactive antibodies (reacting to multiple Dengue strains) may actually enhance severity of a subsequent (or even a primary) infection manifesting in Dengue hemorrhagic fever (DHF).  Determining the impact of host immune response, distinct viral strains, and population-wide antibody levels on Dengue incidence calls for a multi-scale approach.  The problem is critical for control strategies against Dengue, highlighted by recent debate over vaccination, which by boosting antibody responses, may actually increase DHF prevalence in certain groups.  Herein this paper, we develop a mathematical model linking within-host and between-host scales through host antibody level in order to describe the connection between immunity and Dengue infection dynamics across both scales.     

Dengue fever is caused by four antigenically related but distinct serotypes (DENV-1 to DENV-4). Infection by one serotype confers life-long immunity to that serotype and a period of temporary cross-immunity to other serotypes.  Sequential infection increases the risk of developing severe dengue, due to a process described as antibody-dependent enhancement (ADE), where the pre-existing antibodies to previous dengue infection enhances the new infection \cite{dejnirattisai2010cross}.  The mechanisms behind ADE and consequences on Dengue epidemiology are not completely understood.  However, recent research has found evidence that a certain intermediate window of pre-existent antibody titer in the host population is associated with risk of DHF \cite{katzelnick2017antibody,salje2018reconstruction}.

Previous modeling efforts have studied Dengue infection on either the within-host or between-host scales \cite{ben2015minimal,wearing2006ecological,ferguson2016benefits}.  Several mathematical models investigate multi-strain epidemiological dynamics with potential secondary infection and ADE due to partial or temporary cross-reactivity \cite{aguiar2017mathematical,ferguson1999effect,reich2013interactions,nikin2018unraveling}.  Resulting bi-stable, oscillatory or chaotic dynamics may explain large fluctuations in disease incidence observed in Dengue epidemics \cite{ferguson1999effect,cummings2005dynamic,aguiar2008epidemiology}.  
In all of these epidemic models, ADE is incorporated through parameters associated with secondary infection, however the recent evidence points to pre-existent antibody levels as the determinant of infection severity.  Therefore for a more precise formulation of ADE on the epidemiological scale, a model should track dynamic immune status in host population, which is one of the goals of this paper.  On the within-host scale, several models have considered the phenomenon of ADE \cite{ben2015minimal,gujarati2014virus,nikin2015role}, although explicit dependence of infection severity on \emph{pre-existent} cross-reactive antibody concentration has not been produced.

In addition to the correlation of pre-existent antibody level to risk of developing severe infection, the virus and antibody dynamics within infected hosts determine their inherent infectivity and recovery rates.   With this in mind, multi-scale models linking within-host and between-host dynamics emerge as an appropriate tool for a unified model of Dengue.  In particular, recently studied ``nested immuno-epi'' models offer a useful framework, where a partial differential equation (PDE) epidemiological model includes a structuring variable that also appears on the virus-immune response scale \cite{gilchrist2002modeling,gandolfi2015epidemic,tuncer2016structural,gulbudak2017vector}.  In most nested models, transmission and recovery rates of infected hosts are structured by infection-age $\tau$ depending upon pathogen and immune concentrations within-host $\tau$ units after infection, independent of the epidemic scale and with identical infection course among all hosts.  Recently, more complex scenarios have also  been considered, such as a distribution of immunity among susceptible hosts \cite{pugliese2011role} and a ``pathogen size- structured'' epidemic model with fully coupled feedback through variable initial pathogen load \cite{gandolfi2015epidemic}.  In addition, without explicitly modeling the within-host scale, several works have explored dynamic levels of host immunity in delay differential equation (DDE), PDE, and stochastic epidemic models with re-infection, immune boosting and waning \cite{martcheva2006epidemic,barbarossa2015immuno,veliov2016modelling,diekmann2018waning}.  

Dengue provides a particular example where host immunity has complex and significant effects on infection dynamics across both within-host and between-host scales.  Therefore, in this paper, we construct immunological and epidemiological models that capture signatures of ADE on both scales, connected via a variable tracking host cross-reactive antibody levels through multiple infections by distinct strains, along with recovery and waning.   First, we formulate a within-host model which mechanistically mimics characteristics of ADE in Dengue infection; namely (i) a shorter time to peak viremia, (ii) a higher maximum viral clearance rate, (iii) a higher peak viremia \cite{ben2015minimal}, and (iv) infection severity (measured by peak viremia) modulated by initial antibody concentration with a unimodal relationship \cite{katzelnick2017antibody,salje2018reconstruction}.  Moreover, we prove that our formulation is, in a sense, the minimal model to produce severe infection solely by varying pre-existent antibodies in an intermediate window of concentration.  Next, the within-host scale is linked to epidemiological spread by a PDE model structured by host antibody level.   Stability results in the epidemic model are formulated via basic and invasion reproduction numbers as a function of immunological variables.  Additionally, we develop numerical methods in order to simulate the multi-scale model and assess the influence of ADE on disease spread and burden in the population. Overall our model offers a very promising approach for understanding the connection between immunity and Dengue infection dynamics on both within-host and population scales.

\section{Multi-Scale Dengue Modeling: An Antibody Size-Structured Approach for Sequential Infections}\label{multiscale}
\subsection{Within-host model and analysis}
Here we formulate a within-host model which can describe primary and secondary Dengue infections, along with the host immune response.  We attempt to simplify the within-host dynamics of virus and immune response, while still capturing the mechanisms responsible for Antibody Dependent Enhancement (ADE) of the infection.   In order to model the host immune response, we consider long-lived memory antibodies (IgG) as proxies for the collective immune populations (which includes short-lived innate and IgM antibody responses, along with T-cell responses).  The memory antibodies increase upon infection, and can roughly be grouped into two categories, specific and cross-reactive (non-specific).  Distinct antibody populations target several epitopes (viral proteins) during infection, some of which are common amongst the different Dengue serotypes while other epitopes are \emph{specific} to the infecting strain.  The former are often termed (sero-)\emph{cross-reactive} antibodies and have less affinity to the infecting virus than the more specific antibodies.  This distinction is important as during secondary infection, or even possibly primary infection, pre-existent cross-reactive antibodies within the host may induce ADE.  


\begin{threeparttable}[t]
\caption{Description of the within-host model variables/parameters in \eqref{Secondary} and value chosen for simulations (which mimic qualitative within-host characteristics of Dengue)}
\centering 
\begin{tabularx}{\textwidth}{tbt}
\toprule
Variable/Parameter & \qquad \qquad \qquad Description & Value  \\ [0.5ex]
\toprule
\\
$x(\tau)$ & Dengue virus concentration at $\tau$ days post host infection & -- \\[0.5ex]
$y(\tau)$ & concentration of cross-reactive (IgG) antibodies at host infection age $\tau$ & -- \\[0.5ex] 
$z(\tau)$ &  concentration of specific antibodies at host infection age $\tau$ & --  \\[0.5ex]
$r$ & Within-host virus growth rate & $1$   \\[0.5ex] 
$\alpha_1$ & Viral growth enhancing rate induced by cross-reactive antibody-virus binding (ADE) & $2$   \\[0.5ex] 
$\alpha_2$ & Cross-reactive antibody-virus killing rate upon cooperative binding & $3$  \\[0.5ex] 
$\delta$ & Specific antibody-virus killing rate upon binding & $3.5$ \\[0.5ex] 
$\phi_1$ & Cross-reactive antibody activation rate &  $0.4$  \\[0.5ex]
$\phi_2$ & Specific antibody activation rate & $0.5$ \\[0.5ex]
$k_1, k_2$ & Antibody interference competition coefficient & $0.1$, $0.1$  \\[0.5ex]
$A_1,A_2, C_1$ & Saturation coefficients of Hill functions for cross-reactive antibody & $1,10,10$  \\[0.5ex]
$B,C_2$ & Saturation coefficients of Hill functions for specific antibody & $1,10$  \\[0.5ex]
\bottomrule
\end{tabularx}
\label{immparam} 
\end{threeparttable}\\

Consider the infecting virus strain, $x(\tau)$, specific IgG response to this strain, $z(\tau)$, and the cross-reactive (non-specific) IgG response, $y(\tau)$, where the time variable $\tau$ refers time-since-infection within a host.  The virus is assumed to undergo exponential growth at the rate $r$ for simplicity, and the specific IgG response $z(\tau)$ kills the virus and proliferates according to Michaelis-Menten kinetics.  There are multiple mechanisms for ADE which we include in our model.  First, studies have shown that neutralization of a virion requires more bound cross-reactive antibodies compared to the specific response \cite{dejnirattisai2010cross}.  Thus, we model the neutralization by cross-reactive response $y(\tau)$ with a sigmoidal Hill equation of ``$n=2$'' positive cooperativity \cite{stefan2013cooperative}, as opposed to ``$n=1$'' Michaelis-Menten kinetics.    In this way,  a threshold number of cross-reactive antibodies bound to a virion is required for neutralization and so low concentrations of cross-reactive antibodies have poor neutralization properties, consistent with a ``multiple-hit'' stoichiometry  requirement hypothesis \cite{wahala2011human,ripoll2019molecular,dowd2011antibody}.  Furthermore, \emph{any} antibody-virion binding can actually enhance probability of cell infection \cite{dejnirattisai2010cross}, and thus we add a Michaelis-Menten enhancing term dependent on cross-reactive antibodies $y(\tau)$ to the viral replication rate.  Note that the efficient neutralization by specific antibodies with higher affinity to virions precludes the need to add a similar enhancing term for $z(\tau)$.   Another possible mechanism of ADE is ``original antigenic sin'' where antibody populations compete and interfere with each other \cite{nikin2017modelling}. This is included by interference competition coefficient $k_1$ and $k_2$ inhibiting the proliferation rates of $y$ and $z$.  With these features in mind, the following Dengue within-host model is novel for its enzyme kinetics mechanisms of ADE.

\begin{equation}\label{Secondary} \textbf{Within-host Model\quad} 
\begin{cases}
\D\frac{dx}{d \tau} &\hspace{-4mm}=x\left(r+\D\frac{\alpha_1 y}{A_1+y}-\D\frac{\alpha_2 y^2}{A_2+y^2}-\frac{\delta z}{B+z}\right):=f(x,y,z) \vspace{1.0mm}\\
\D\frac{dy}{d \tau} &\hspace{-4mm}=\D\frac{\phi_1 x y}{C_1+y+k_1 z}:=g(x,y,z) \vspace{1.0mm}\\
\D\frac{dz}{d \tau} &\hspace{-4mm}=\D\frac{\phi_2 x z}{C_2+k_2y+z}:=h(x,y,z)  \vspace{1.0mm}\\
\end{cases}
\end{equation}
\vskip-0.09in 
\noindent
The initial conditions $x(0)=x_0, y(0)=y_0,$ and $z(0)=z_0$ and all parameters are assumed to be non-negative. \\

We remark that the assumed exponential growth of the virus in the absence of immune response is a simplification of viral replication dynamics.  In reality there is a source of target cells which is depleted (and has recruitment), which bounds the virus population dynamics. Target cells have been included in previous models of Dengue infection \cite{ben2015minimal}.  For simplicity, we assume in our model that immune suppression of the virus either overwhelms or is more important than any effect of target cell limitation on the dynamics.  Also, here and as been found in other studies, immune response is necessary to clear Dengue virus \cite{clapham2016modelling}.   The following theorem shows that in our model the immune response is sufficient to clear the virus.  
\begin{proposition}\label{prop1} 
Suppose that $\delta>r+\alpha_1$ and $\alpha_2>r+\alpha_1$ in system \eqref{Secondary}.  If $y_0>0$ or $z_0>0$, then $\lim\limits_{\tau\rightarrow\infty}x(\tau)=0, \lim\limits_{\tau\rightarrow\infty}y(\tau)=\bar y, \lim\limits_{\tau\rightarrow\infty}z(\tau)=\bar z$ where $\bar y,\bar z$ depends on initial conditions.
\end{proposition}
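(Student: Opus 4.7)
The plan is to exploit the monotonicity of the antibody variables $y, z$ (which are non-decreasing whenever $x \geq 0$), then use the two hypothesis inequalities to force the per-capita viral growth rate to eventually become strictly negative, and finally upgrade integrability of $x$ to pointwise decay.

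First I would verify positivity: the $x$-equation is linear in $x$, while $y'$ and $z'$ vanish on $\{y=0\}$ and $\{z=0\}$ respectively, so the non-negative octant is forward invariant. Consequently $y' \geq 0$ and $z' \geq 0$ identically, so $y(\tau)\to \bar y$ and $z(\tau)\to \bar z$ exist in $[0,\infty]$. I then split on which initial antibody is positive; without loss of generality assume $y_0 > 0$ so that $y(\tau) \geq y_0 > 0$ for all $\tau \geq 0$ (the case $z_0 > 0,\ y_0 = 0$ proceeds by the parallel argument with $\delta > r+\alpha_1$ playing the role that $\alpha_2 > r+\alpha_1$ plays below).

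Next I rule out $\bar y = \infty$. Writing the per-capita viral rate as
\[
G(y,z) := r + \frac{\alpha_1 y}{A_1+y} - \frac{\alpha_2 y^2}{A_2+y^2} - \frac{\delta z}{B+z},
\]
the crude bound $G(y,z) \leq r+\alpha_1 - \alpha_2 y^2/(A_2+y^2)$ combined with $\alpha_2 > r+\alpha_1$ yields $Y, \epsilon > 0$ such that $y > Y$ implies $G(y,z) < -\epsilon$. If $\bar y = \infty$, then eventually $y(\tau) > Y$ for $\tau \geq \tau_*$, so $x$ decays exponentially after $\tau_*$; but then $y' \leq \phi_1 x$ gives $\int_{\tau_*}^\infty y'\,d\tau < \infty$, contradicting $\bar y = \infty$. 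An analogous argument using $\delta > r+\alpha_1$ rules out $\bar z = \infty$ whenever $z_0 > 0$ (and $z \equiv 0$, hence $\bar z = 0$, when $z_0 = 0$). With both limits finite, integrating the $y$-equation gives
\[
\bar y - y_0 \;=\; \int_0^\infty \frac{\phi_1 x y}{C_1+y+k_1 z}\,d\tau \;\geq\; \frac{\phi_1 y_0}{C_1+\bar y+k_1\bar z}\int_0^\infty x(\tau)\,d\tau,
\]
so $\int_0^\infty x\,d\tau < \infty$.

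Finally I would upgrade $L^1$-integrability of $x$ to $x(\tau)\to 0$ by noting $|(\ln x)'| = |G(y,z)| \leq M := r+\alpha_1+\alpha_2+\delta$, so $\ln x$ is $M$-Lipschitz. If $x$ did not vanish at infinity, one could choose $c>0$ and $\tau_n\to\infty$ with $x(\tau_n)\geq c$; the Lipschitz bound would force $x \geq c/e$ on intervals of length $1/M$ about each $\tau_n$, producing an infinite integral, a contradiction. The main obstacle is precisely this final Barbalat-style step: since $x$ is not a priori bounded in sup-norm, a standard uniform-continuity argument does not apply directly, and the logarithmic Lipschitz bound is what makes the implication $x \in L^1 \Rightarrow x \to 0$ go through in this setting.
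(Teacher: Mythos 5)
Your proof is correct, and it rests on the same mechanism as the paper's: monotonicity of $y$ and $z$, the hypotheses $\alpha_2>r+\alpha_1$ and $\delta>r+\alpha_1$ forcing the per-capita viral growth rate below $-\epsilon$ once either antibody is large, and the bounds $y'\le\phi_1 x$, $z'\le\phi_2 x$ to cap the antibody limits. The logical organization differs, though, in a way worth noting. The paper argues by contradiction from $\limsup_{\tau\to\infty}x(\tau)>0$: it integrates an inequality for $w=y+z$ to assert $\int_0^\infty x\,ds=\infty$, hence $w\to\infty$, hence eventual exponential decay of $x$ — a contradiction — and only then shows $y,z$ converge. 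You go forward instead: first rule out $\bar y=\infty$ and $\bar z=\infty$, then extract $\int_0^\infty x\,d\tau<\infty$ from the $y$- (or $z$-) equation, and finally upgrade $L^1$-integrability to $x\to 0$ using that $\ln x$ is $M$-Lipschitz. That last Barbalat-style step is the genuine added value: the paper's inference that $\limsup x>0$ forces $\int_0^\infty x\,ds=\infty$ is precisely the contrapositive of your step and is left implicit there, so your version makes explicit a point the published argument glosses over. Two minor housekeeping items: you should record global existence of solutions (immediate from $\lvert(\ln x)'\rvert\le M$ together with $y'\le\phi_1 x$ and $z'\le\phi_2 x$, so nothing blows up in finite time), and your integration step tacitly uses $\phi_1>0$ (resp.\ $\phi_2>0$) — an implicit positivity assumption that the paper's proof also relies on.
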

\begin{proof}
Since the boundary of the positive orthant, $\partial \mathbb R^3_+$, is invariant for \eqref{Secondary}, we find that $\mathbb R^3_+$ is also invariant. Thus solutions remain non-negative for all $\tau$.  Also it is not hard to show that solutions exist for all $\tau$ since the differential inequality $\left(x+y+z\right)'\leq c\left(x+y+z\right)$ can be established for appropriate constant $c$, which yields an exponential bound for the solution.   

Now suppose by way of contradiction that $\limsup\limits_{\tau\rightarrow\infty}x(\tau)>0$.  Let $w=y+z$ and observe that $w'\geq \frac{\phi x w}{C+kw}$, where $\phi=\max(\phi_1,\phi_2)$ and $C=\min(C_1,C_2)$.    Integrating, we find $C\ln\left(\frac{w(\tau)}{w_0}\right) + w(\tau)-w(0)\geq \phi\int_0^\tau x(s)ds$.  Thus $\lim\limits_{\tau\rightarrow\infty}w(\tau)=\infty$.  Therefore, since $\delta>r+\alpha_1$ and $\alpha_2>r+\alpha_1$, there exists $\tau^*: \ \forall \tau>\tau^*$, where $r +\D\frac{\alpha_1 y(\tau)}{A_1+y(\tau)}-\D\frac{\alpha_2 y(\tau)^2}{A_2+y(\tau)^2}- \frac{\delta z(\tau)}{B+y(\tau)}<0$.  Integrating the first equation in \eqref{Secondary}, we find $x(\tau)=x_0\exp\left(\int_0^\tau  \left(r +\D\frac{\alpha_1 y(\tau)}{A_1+y(\tau)}-\D\frac{\alpha_2 y(\tau)^2}{A_2+y(\tau)^2}- \frac{\delta z(\tau)}{B+z(\tau)}\right)ds\right)$. Since the integrand is negative for  $\tau>\tau^*$, we conclude $\lim\limits_{\tau\rightarrow\infty}x(\tau)=0$. Next, since $0\leq y'(\tau)$, $y(\tau)$ is increasing, it has either positive limit or diverges to infinity asymptotically.  Suppose by way of contradiction that $\lim\limits_{\tau\rightarrow\infty}y(\tau)=\infty$.  Then $\frac{dx}{d\tau}\leq -a x(\tau)$ for $\tau$ sufficiently large, say $\tau\geq\tau_0$, where $a=\alpha_2-(r+\alpha_1)+\epsilon>0$ for some $\epsilon$ sufficiently small.   Thus $x(\tau)\leq x(\tau_0)e^{-a(\tau-\tau_0)}$, and thus
$$ y'\leq \frac{\phi_1 x y}{C_1+y}\leq \frac{\phi_1 x }{C_1} \Rightarrow y(\tau)-y(\tau_0)\leq \frac{\phi_1}{C_1}\int_{\tau_0}^{\tau}e^{-a(\tau-\tau_0)}d\tau. $$
Therefore $y(\tau)$ is bounded and there exists $\bar y>0$ such that $\lim\limits_{\tau\nearrow\infty}y(\tau)=\bar y$. Similarly $z(\tau)$ converges monotonically to a limit $\bar z$.  
\end{proof}

The trajectory of the virus population mimics the \textit{general pattern} displayed in data of rise and subsequent decline of virus caused by immune response, characteristic of an acute infection.  We obtain a triangular curve in log scale of viral load as found in other studies \cite{ben2015minimal}, and Proposition \ref{prop1} shows that the virus population converges to zero while (memory) immune responses saturate to an equilibrium level dependent on initial concentrations.  Moreover the within-host model mechanistically mimics characteristics of ADE in Dengue infection; namely \textit{(i) a shorter time to peak viremia, (ii) a higher maximum viral clearance rate, (iii) a higher peak viremia} \cite{ben2015minimal}, and \textit{(iv) infection severity (measured by peak viremia) modulated by initial antibody concentration with a unimodal relationship} \cite{katzelnick2017antibody,salje2018reconstruction}.  Note that fitting the within-host model to data is not a goal of the present work.  However, we do tune parameters in system \eqref{Secondary} to first match infectious period of primary infection (Figure \ref{fig1a}), and after a characteristic period where cross-reactive antibodies, $y$, wane to a certain range, subsequent secondary infection displays features (i)-(iii) associated with DHF induced by ADE, as shown in Figure \ref{fig1b}.

\begin{figure}[t]
\subfigure[][]{\label{fig1a}\includegraphics[width=7.5cm,height=4cm]{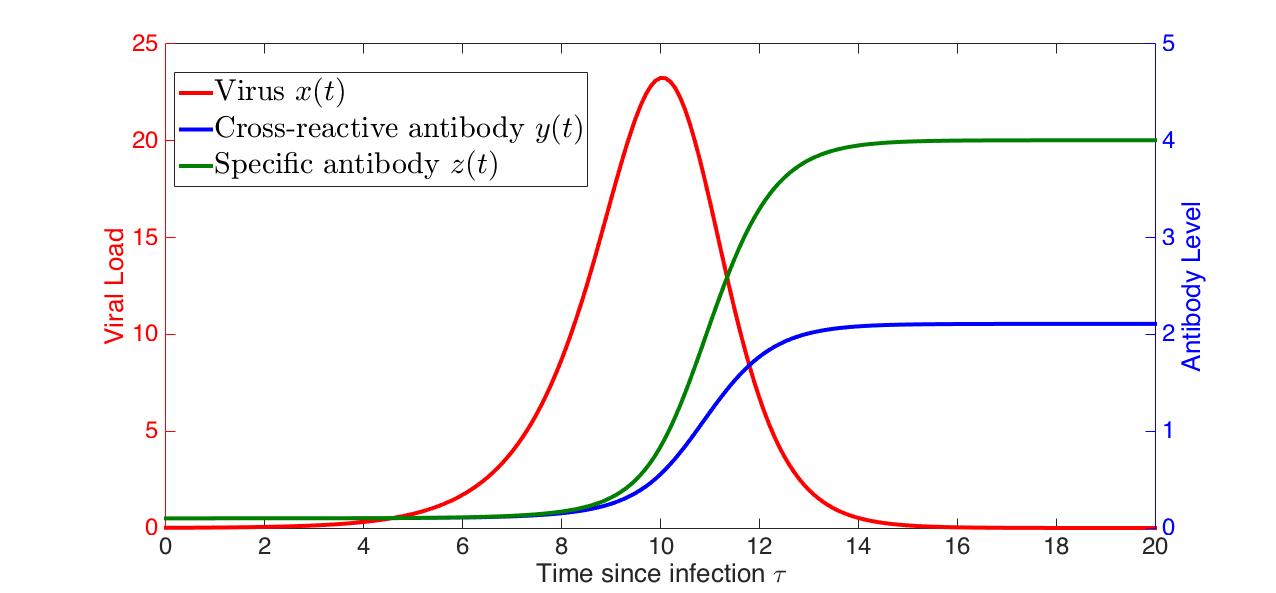}}
\subfigure[][]{\label{fig1b}\includegraphics[width=7.5cm,height=4cm]{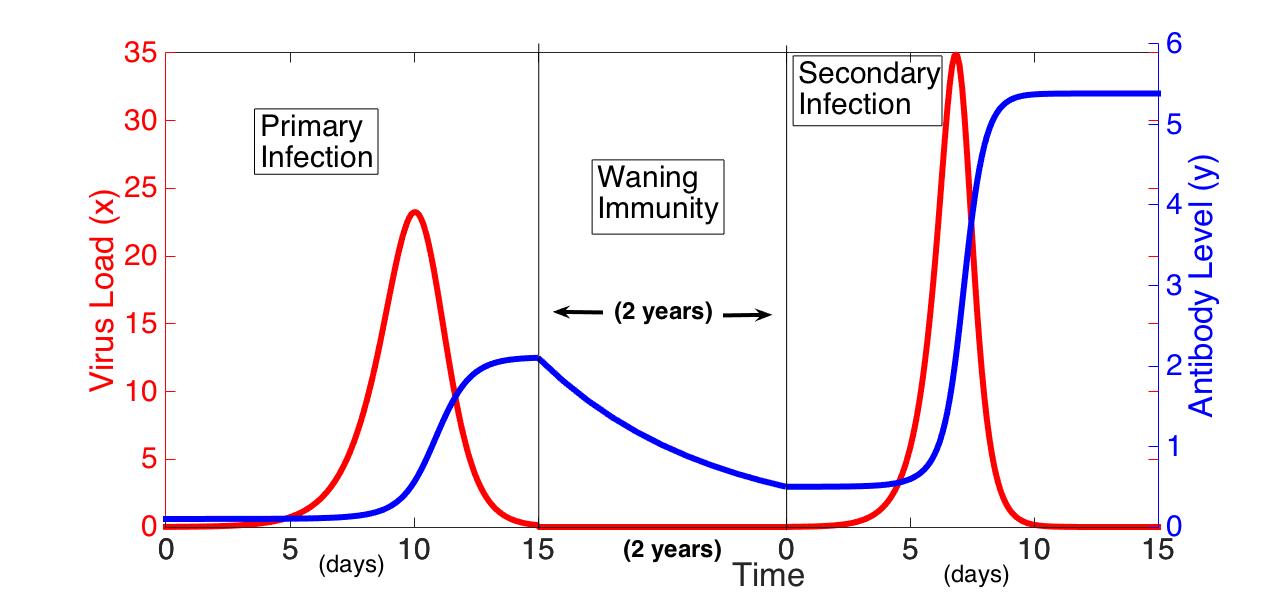}} \\
\subfigure[][]{\label{fig1c}\includegraphics[width=7.5cm,height=4cm]{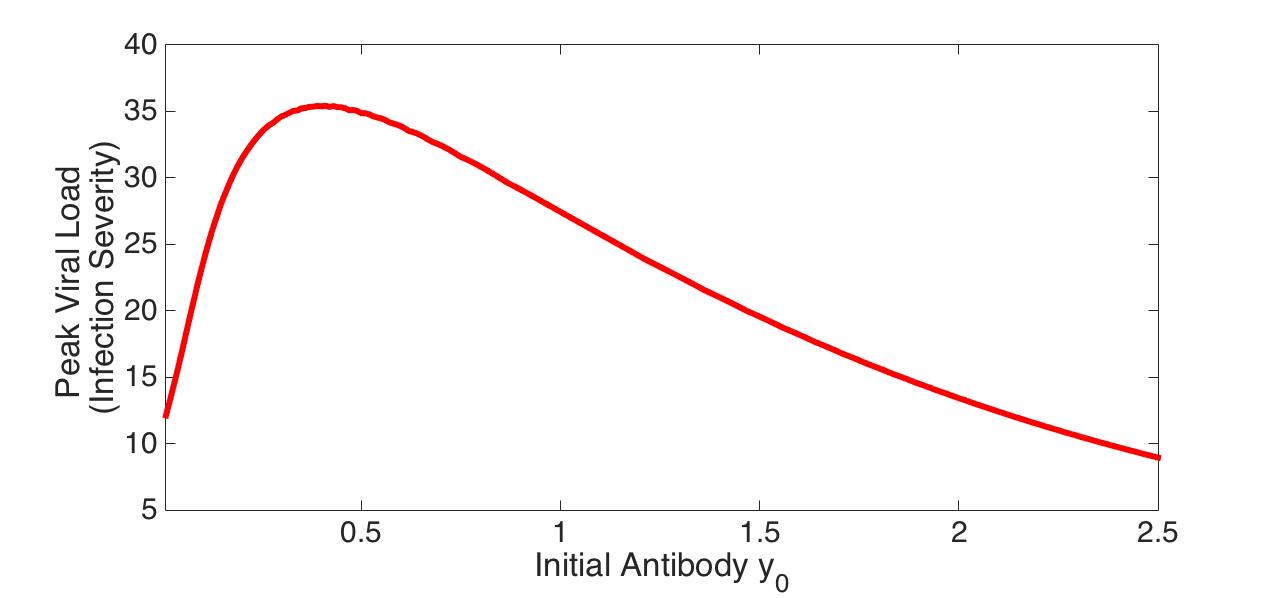}}
\subfigure[][]{\label{fig1d}\includegraphics[width=7.5cm,height=4cm]{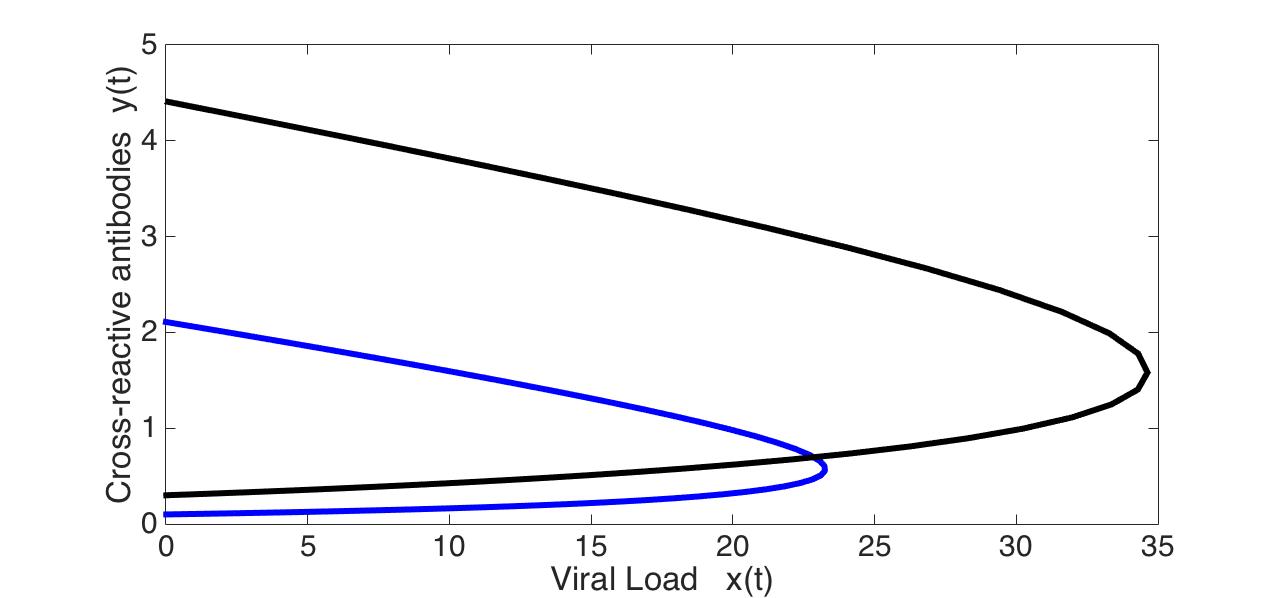}} 
\caption{(a) Example trajectory of within-host model \eqref{Secondary}. (b) Cross-reactive antibody levels ($y$ or $y_0$) are boosted during primary infection and then wane to an intermediate level which can produce severe infection upon secondary infection due to ADE effect. (c) Host infection severity (peak viral load) unimodal function of pre-existent antibody level ($y_0$). (d) Two orbits corresponding to solutions of within-host model \eqref{Secondary}.  The parameters (and initial conditions) of within-host model \eqref{Secondary} are set as follows: (a) $y_0=0.11$; (b) $y_0=0.11$ for primary infection and $y_0=0.5$ for secondary infection (after 2 years of antibody exponential decay given by waning rate \eqref{expwane} with $\xi=0.002$, $y_c=0.02$; (c) $y_0$ varied in range $[0.01,2.5]$; (d) $y_0=0.11$ and $y_0=0.3$.  All other parameters are fixed at $r=1,\alpha_1=2,A_1=1,\alpha_2=3,A_2=10,\delta=3.5,B=1,\phi_1=0.4,\phi_2=0.5,C_1=C_2=10,k_1=k_2=0.1$, and $x_0=0.01,z_0=0.1$. } 
  \label{fig1}
  \end{figure}
  
The system \eqref{Secondary} captures the signature of Dengue infection severity being highest for intermediate pre-existent antibody level ((iv) in previous paragraph), and is, in some sense, the minimal model to produce this unimodal relationship, as the following proposition suggests:
\begin{proposition}\label{maxProp}
Consider system \eqref{Secondary} with $\delta>r+\alpha_1$ and $\alpha_2>r+\alpha_1$.  Let $x(\tau;y_0)$ denote the viral component of solution as a function of initial cross-reactive antibody level $y_0$, and $x_M(y_0):=\max_{\tau\geq 0}x(\tau;y_0)$ denote the peak viral load as a function of $y_0$  (with other parameters fixed).  Then there exists some set of parameters for the system \eqref{Secondary} where $x_M(y_0)$ is an unimodal curve with a single maximum, in particular when $\phi_1=k_2=0,A_2=3,C_2=B$, and $A_1>\frac{8\alpha_1}{3\alpha_2}$.  Furthermore, if $z=0$ ($y=0$), then $x_M(y_0)$ ($x_M(z_0)$) will be a strictly decreasing function.
\end{proposition}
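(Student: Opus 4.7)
The plan is to exploit the three simplifying assumptions $\phi_1=0$, $k_2=0$, and $C_2=B$ in order to reduce \eqref{Secondary} to a one-parameter family of planar ODEs whose peak viral load can be written in closed form.  With $\phi_1=0$ one has $y(\tau)\equiv y_0$, and the two Hill terms in the $x$ equation collapse into a constant effective growth rate
\[
p(y_0) \;:=\; r + \frac{\alpha_1 y_0}{A_1+y_0} - \frac{\alpha_2 y_0^2}{A_2+y_0^2}.
\]
With $k_2=0$ and $C_2=B$, the remaining $(x,z)$ subsystem acquires matching denominators,
\[
\dot x = x\!\left(p(y_0)-\frac{\delta z}{B+z}\right),\qquad \dot z = \frac{\phi_2 x z}{B+z}.
\]
Proposition~\ref{prop1} gives $x(\tau)\to 0$, so $\dot x$ must change sign, and the peak is attained at the unique $\tau^*$ at which $z(\tau^*)=z^*:=B p(y_0)/(\delta-p(y_0))$.

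Next, dividing the two ODEs to eliminate $\tau$ gives $dx/dz=[p(y_0)(B+z)-\delta z]/(\phi_2 z)$, which I will integrate from $z_0$ to $z^*$ to obtain
\[
x_M \;=\; x_0 + \frac{B p(y_0)}{\phi_2}\ln\!\frac{z^*}{z_0} + \frac{p(y_0)-\delta}{\phi_2}(z^*-z_0).
\]
Setting $q:=p(y_0)$ and differentiating in $q$ (the chain-rule contribution through $z^*=Bq/(\delta-q)$ simplifies against the $-B/\phi_2$ term) yields the compact identity
\[
\frac{dx_M}{dq} \;=\; \frac{1}{\phi_2}\bigl[\,B\ln(z^*/z_0) + (z^*-z_0)\,\bigr],
\]
which is strictly positive whenever $z^*>z_0$, i.e.\ whenever the peak is nontrivial.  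Hence $x_M$, viewed as $X_M(q)$, is strictly increasing, and so the shape of $y_0\mapsto x_M(y_0)$ is dictated entirely by the shape of $y_0\mapsto p(y_0)$.

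It then remains to show that $p$ has a unique local maximum under $A_2=3$ and $A_1>8\alpha_1/(3\alpha_2)$.  One computes
\[
p'(y_0) \;=\; \frac{\alpha_1 A_1}{(A_1+y_0)^2} - \frac{2\alpha_2 A_2 y_0}{(A_2+y_0^2)^2}.
\]
The subtracted term $2\alpha_2 A_2 y/(A_2+y^2)^2$ attains its global maximum (when $A_2=3$) at $y_0=1$ with value $3\alpha_2/8$, while $\alpha_1 A_1/(A_1+1)^2<\alpha_1/A_1<3\alpha_2/8$ under the hypothesis, so $p'(1)<0$.  Together with $p'(0)=\alpha_1/A_1>0$ and $p'(y_0)>0$ for large $y_0$ (the first term decays like $y_0^{-2}$, the second like $y_0^{-3}$), this forces at least two sign changes of $p'$.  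Clearing denominators in $p'=0$ produces the quartic $\alpha_1 A_1(3+y^2)^2 - 6\alpha_2 y(A_1+y)^2 = 0$ whose coefficient sign pattern is $(+,-,-,-,+)$ (using $\alpha_1<\alpha_2$), so by Descartes' rule there are at most two positive roots.  Hence $p$ has exactly one local maximum, and composing with the monotone $X_M$ gives the claimed unimodal shape of $x_M(y_0)$.

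Finally, the degenerate ``furthermore'' cases follow from a direct integration argument: if $z\equiv 0$, then $\dot x=xp(y)$ with $\dot y=\phi_1 xy/(C_1+y)>0$, so $y$ rises monotonically to the smallest positive zero $y^*$ of $p$, at which the peak is attained; dividing the two ODEs (the $x$'s cancel in $dx/dy$) and integrating yields
\[
x_M-x_0 \;=\; \frac{1}{\phi_1}\int_{y_0}^{y^*}\frac{p(y)(C_1+y)}{y}\,dy,
\]
an integral of a positive function over a shrinking interval, hence strictly decreasing in $y_0\in(0,y^*)$.  The $y\equiv 0$ case is symmetric.  The main obstacle is the polynomial-root step three: ruling out more than one local maximum of $p$ requires both the Descartes sign count and the carefully chosen evaluation at $y_0=1$, precisely the abscissa where the negative Hill derivative is maximized under $A_2=3$, and which is where the hypothesis $A_1>8\alpha_1/(3\alpha_2)$ enters to produce $p'(1)<0$.
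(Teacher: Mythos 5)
Your proof is correct and, for the main unimodality claim, follows essentially the same route as the paper: set $\phi_1=0$ to freeze $y\equiv y_0$, show the effective growth rate $p(y_0)$ has a unique local maximum via the quartic for $p'=0$ (Descartes' sign count $(+,-,-,-,+)$ plus the evaluation at $y_0=1$ where the hypothesis $A_1>\tfrac{8\alpha_1}{3\alpha_2}$ forces $p'(1)<0$), and then show the peak viral load is monotone in the growth rate by dividing the $(x,z)$ equations and integrating up to $z^*=Bq/(\delta-q)$. Two points where you genuinely diverge are worth noting. First, your explicit formula $\frac{dx_M}{dq}=\frac{1}{\phi_2}\bigl[B\ln(z^*/z_0)+(z^*-z_0)\bigr]$, with the chain-rule terms through $z^*$ cancelling exactly, is a cleaner (and correctly computed) version of the paper's monotonicity step, whose displayed separable form carries sign typos (the denominator should read $rB+(r-\delta)z$, not $rB+(\delta-B)z$); your version makes transparent that positivity holds precisely when the peak is nontrivial ($z^*>z_0$), which is also the right caveat for the flat tail where $x_M\equiv x_0$. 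Second, for the one-antibody cases you replace the paper's phase-plane non-crossing argument with a direct quadrature, $x_M-x_0=\frac{1}{\phi_1}\int_{y_0}^{y^*}\frac{p(y)(C_1+y)}{y}\,dy$, an integral of a positive integrand over a shrinking interval. This is more quantitative than the paper's geometric argument and yields strict monotonicity for $y_0$ below the unique zero $y^*$ of $p$; both arguments share the same implicit caveat that $x_M$ is merely constant ($=x_0$) once $y_0$ exceeds $y^*$.
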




\begin{proof}

In order to prove the first statement, consider the special case where $\phi_1=0$, for simplicity, since we just need to show it for some parameter set.  
In this case the cross-reactive antibody concentration $y$ does not change during infection, i.e. $y(t)\equiv y_0$.   Then the infection dynamics are  
\begin{equation}\label{Sec_1}
\begin{cases}
\D\frac{dx}{d \tau} &\hspace{-4mm}=x\left(r(y_0)-\frac{\delta z}{B+z}\right)
\vspace{1.0mm}\\
\D\frac{dz}{d \tau} &\hspace{-4mm}=\D\frac{\phi_2 x z}{C(y_0)+z}\vspace{1.0mm},\\
\end{cases}
\end{equation}
\vskip-0.09in 
\noindent
where $r(y_0)=r+\frac{\alpha_1 y_0}{A_1+y_0}-\frac{\alpha_2 y_0^2}{A_2+y_0^2}$ and $C(y_0)=C_2+k_2y_0$.   We assume that $\alpha_2>r+\alpha_1$ as in Proposition \ref{prop1}.  We first claim the function $r(y_0)$ is unimodal. Since $r'(y_0)=\frac{\alpha_1A_1}{(A_1+y_0)^2}-\frac{2\alpha_2 A_2y_0}{(A_2+y_0^2)^2}$, we find the following polynomial equation for the roots of $r'(y)$ (which are the critical points of $r(y)$):  
\begin{align}
\alpha_1 A_1 y^4-2\alpha_2A_2y^3+ 2A_1A_2(\alpha_1-\alpha_2)y^2-2A_2A_1^2\alpha_2 y + \alpha_1 A_1A_2^2=0 .
\end{align}
Since $\alpha_1 -2\alpha_2<0$ by assumption, there are either two or zero positive roots by Descarte's rule of signs.  We now find parameters where there are two positive roots, $y_1^*<y_2^*$.   Let $p(y)=\frac{\alpha_1A_1}{(A_1+y)^2}$ and $q(y)=\frac{2\alpha_2 A_2y}{(A_2+y^2)^2}$, and observe that $r'(y)=p(y)-q(y)$.  Note that $p(0)=\frac{\alpha_1}{A_1}>0$, $p(y)$ is decreasing, $q(0)=0$ and $q'(0)>0$.  The goal is to obtain conditions where  $\max\limits_{y>0}q(y)>p(0)$ which would guarantee intersection of $p(y)$ and $q(y)$, and hence a positive root.  It can be shown that $q'(y)=0\Rightarrow A_2+y-4y^2=0$.  Let $A_2=3$, then the maximum of $q(y)$ occurs at $\hat y=1$, and $q(1)=\frac{3}{8}\alpha_2>\frac{\alpha_1}{A_1}=p(0)$ if $A_1>\frac{8\alpha_1}{3\alpha_2}$.  Thus if $A_1=A_2=3$, then there are two positive roots.  Furthermore since $r'(0)>0$, $y_1^*$ is a local maximum of $r(y)$ and $y_2^*$ is a local minimum.  Also $r(\infty)<0$ implies that there exists unique positive root $\overline y$ of $r(y)$, where $r(y)>0$ for $0\leq y< \overline{y}$ and $r(y)<0$ for $y> \overline{y}$ (where $\overline{y}<y_2^*$).  Clearly if $y_0>\overline{y}$, then $\dot{x}(t)<0$ for all $t\geq 0$, so $x_M(y_0)=x_0$ in this case.  Furthermore  we show that peak viral load $x_M(y_0)$ is increasing with respect to the growth rate $r(y_0)$ for the special case $k_2=0,C_2=B$.  Indeed in this case, peak viral load occurs at $z_c(r):=\frac{rB}{\delta-r}$, and dividing equations in \eqref{Sec_1}
\begin{align*}
\frac{dz}{dx}=\frac{\phi_2 z}{rB+(\delta-B)z} &\Rightarrow \int_{z_0}^{z_c(r)} dz\left(\frac{rB}{\phi_2z}+\frac{\delta-B}{\phi_2}\right)= \int_{x_0}^{x_M(r)} dx \\
&\Rightarrow x_M'(r)=\frac{B\delta}{(\delta-r)^2}\left(\frac{\delta-r}{\phi_2}+\frac{\delta-B}{\phi_2}\right)>0
\end{align*} 
Therefore it follows that the peak viral load $x_M(y_0)$ is unimodal with a single maximum.    

 Now to show the second statement.  For given parameters satisfying $\delta>r+\alpha_1$ and $\alpha_2>r+\alpha_1$ with either $z=0$ or $y=0$, Proposition \ref{prop1} implies the pathogen load will tend to zero while the single present antibody population is increasing.  Viewed in the phase plane of pathogen ($x$) and single antibody population (without loss of generality, $y$), the orbits form arcs connecting initial antibody level, $y_0$, with final antibody level $y^*(y_0)=\lim\limits_{\tau\rightarrow\infty}y(\tau;y_0)>y_0$, as $x$ increases to peak and decays to zero.  When increasing initial antibody level, $y_0$, in order for the peak viral load to increase, solutions would have to cross violating flow property of solutions.  Thus peak viral load can then only be a decreasing function of initial antibody load in this case.  
\end{proof}

Proposition \ref{maxProp} implies that the presence of both cross-reactive and specific antibodies within infection are necessary to produce the ADE phenomenon of severe infection for intermediate level of pre-existing antibodies.  The full model \eqref{Secondary} with both antibody types produces the unimodal curve for infection severity versus initial antibody load (Fig. \ref{fig1c}), similar to data from recent epidemiological studies.  Observe in Fig. \ref{fig1d}, the ``crossing'' of two solution \emph{projections} on the $xy$-plane when both antibody components $y,z$ are present and $y_0$ is varied.  In contrast, Proposition \ref{maxProp} states that peak viral load cannot increase with $y_0$ when $z=0$.   Moreover sufficient conditions to generate the unimodal pattern are dominance of (cross-reactive) antibody enhanced viral infection rate at low concentrations (controlled by $A_1$)  switching to dominance of neutralization at higher concentrations (controlled by $\alpha_2$), consistent with the observed mechanisms responsible for ADE in experiments \cite{dejnirattisai2010cross}.





\subsection{Between-host model and linking scales} Here we detail our antibody structured vector-host epidemiological model which links to the within-host model, tracking evolving antibody levels as illustrated in Fig. \ref{fig1bh}. Let $s(t,y)$, $i_k(t,y,y_0)$, $r_k(t,y)$, $i_{kj}(t,y,y_0)$, $r_{kj}(t,y)$ be the density with respect to (cross-reactive) antibody level $y$ (and initial antibody level $y_0$ at time of infection) at time $t,$ of susceptible, primary strain-$k$ infected, primary strain-$k$ infection recovered, secondary strain-$j$ infected hosts, and secondary infected recovered individuals, respectively.   In the vector model, we consider $S_v(t), I_v(t)$ as the number of susceptible and infected vectors, respectively. Vectors are the only mechanism transmitting the disease to susceptible hosts. The host compartments structured by antibody levels, $y$, can be integrated over $y$ (and $y_0$ in the case of infected classes) in order to obtain the number of individuals in each compartment.  For example, the number of susceptible individuals is given by $S(t)=\int\limits_0^\infty s(t,y) \, dy$ and the number of individuals infected by strain $k$ is $\int\limits_{0}^{\infty}\int\limits_{y_0}^{\infty} i_k(t,y,y_0) \,dy \,dy_0$.   The epidemiological dynamics is given by the following vector-host model:

\begin{align}\label{host_epi}
\frac{\partial s(t,y)}{\partial t} &= \Lambda(y) - s(t,y)\sum_{k}\beta_{v}^k(y) I_v^k(t) - \mu s(t,y) \\
\frac{\partial i_k(t,y,y_0)}{\partial t}+\frac{\partial (g_k(y,y_0) i_k(t,y,y_0))}{\partial  y} 
&=  
     - \left(\gamma_k(y,y_0)+\mu\right)i_k(t,y,y_0)   \\
  \frac{\partial r_k(t,y)}{\partial t}- \frac{\partial (\omega_k(y) r_k(t,y))}{\partial y} &=\int_0^{\infty}\gamma_k(y, y_0)i_k(t,y,y_0)dy_0-\mu r_k(t,y)- \beta_v^j(y) r_k(t,y)I_v^j(t) \notag \\
\frac{\partial i_{kj}(t,y,y_0)}{\partial t}+\frac{\partial (g_{kj}(y, y_0) i_{kj}(t,y, y_0))}{\partial y} 
&= - \left(\gamma_{kj}(y, y_0)+\mu\right) i_{kj}(t,y, y_0)  \notag   \\
  \frac{\partial r_{kj}(t,y)}{\partial t}- \frac{\partial (\omega_{kj}(y) r_{kj}(t,y))}{\partial y} &=\int_0^{\infty}\gamma_{kj}(y, y_0)i_{kj}(t,y,y_0)dy_0-\mu r_{kj}(t,y), \quad k\neq j, \notag
 \end{align}

\begin{align}\label{vector_epi}
\frac{dS_v}{dt} &= \Lambda_v - S_v\sum\limits_{k=1}^2\int\limits_{0}^{\infty}\int\limits_{y_0}^{\infty}\beta_k(y,y_0) i_k(t,y,y_0) \,dy \,dy_0- S_v\sum\limits_{\substack{k,j=1 \\ k\neq j}}^2\int\limits_{0}^{\infty}\int\limits_{y_0}^{\infty}\beta_{kj}(y,y_0) i_{kj}(t,y,y_0) \,dy \,dy_0 - \mu_v S_v  \notag \\
\frac{dI_v^k}{dt} &= S_v\int\limits_{0}^{\infty}\int\limits_{y_0}^{\infty}\beta_k(y,y_0) i_k(t,y,y_0) \,dy \,dy_0 + S_v\int\limits_{y_0}^{\infty}\int\limits_{y_0}^{\infty}\beta_{jk}(y,y_0) i_{jk}(t,y,y_0) \,dy \,dy_0 - \mu_v I_v^k,
\end{align}
with the following boundary conditions
\begin{align}\label{BC}
 g_k(y_0,y_0) i_k(t,y_0,y_0) & = \beta_v^k(y_0)s(t,y_0) I_v^k(t), \quad \omega_k(y_c) r_k(t,y_c)= \lim_{y\rightarrow\infty}\omega_k(y)r_k(t,y)=0, \\ g_{kj}(y_0, y_0)i_{kj}(t,y_0, y_0)&=\beta_v^j(y_0) r_k(t,y_0)I_v^j(t), \quad \omega_{kj}(y_{c,2}) r_{kj}(t,y_{c,2})= \lim_{y\rightarrow\infty}\omega_{kj}(y)r_{kj}(t,y)=0, \ \ k\neq j, \notag
\end{align}
and the following initial conditions
\begin{align}\label{IC}
 s(0,y_0)=s_0(y_0), \ \ y_0\geq y_s\geq 0, \quad i_k(0,y,y_0)=i_k^0(y,y_0), \ \ y\geq y_0\geq y_s \\ 
 r_k(0,y)=r_k^0(y), \ \  y\geq y_c \geq 0, \quad i_{kj}(0,y,y_0)=i_{kj}^0(y,y_0), \ \ y\geq y_0\geq y_c,   \notag \\
  r_{kj}(0,y)=r_{kj}^0(y), \ \  y\geq y_c \geq 0, \quad S_v(0)=S_v^0, \quad I_v(0)=I_v^0.  \notag
\end{align}
  
  The host initial conditions are assumed to be non-negative (Lesbesgue) integrable functions, i.e. in $L^1_+$, on their domains specified above, and vector initial conditions are non-negative, i.e. $S_v^0,I_v^0\in\mathbb R_+$.  The parameters $\Lambda(y)$ and $\Lambda_v$ denote the host and vector recruitment rates, and $\mu$ and $\mu_v$ represents the host and vector natural death rates, respectively.  The vector to host transmission rate may depend on the host antibody level, so in general we have this rate as $\beta_v^k(y)$.  We assume $\Lambda(y)$ and $\beta_v^k(y)$ are bounded, measurable non-negative functions, i.e. in $L_+^\infty(0,\infty)$.   The other parameter functions linking antibody levels $y$ and $y_0$ to epidemiological quantities will be detailed in following paragraphs.    First note that assumptions may be relaxed at times to allow for point measure distributions, e.g. all susceptible individuals have the same initial naive amount of cross-reactive antibodies, $y_s$, so that $\Lambda(y)=\Lambda \delta(y_s), s_0(y)=S_0\delta(y_s)$ where $\delta(y)$ is the Dirac delta measure at $y$, and $\Lambda$ is constant.  In this case, we can consider an ODE for $S(t):=\int_0^{\infty}s(t,y)\delta(y_s)dy=s(t,y_s)$.  Also, we remark that the secondary recovered component, $r_{kj}$, decouples from the rest of the system.
  
  \begin{figure}[t!]
\includegraphics[width=15cm,height=4cm]{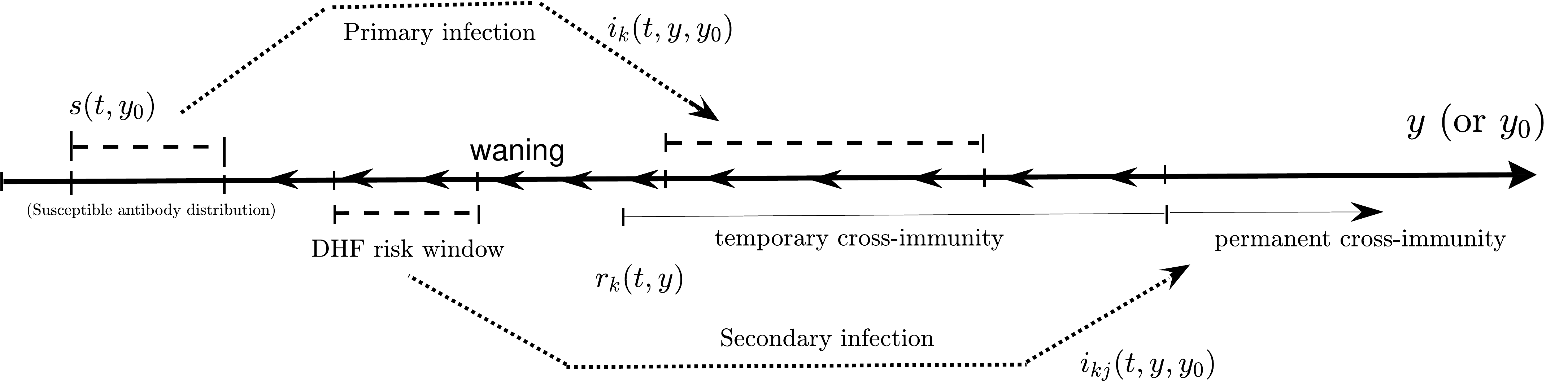}
\caption{Schematic diagram of multi-scale Dengue model \eqref{host_epi}-\eqref{vector_epi} viewed through evolving host antibody level.} 
  \label{fig1bh}
  \end{figure}
  
  \begin{table}[t]
\caption{Definitions of the between-host model variables}
\centering 
\begin{tabularx}{\textwidth}{>{} lX}
\toprule
Variable  & Description \\ [0.5ex]
\toprule
\\
$s(t,y)$ & density of susceptible hosts at time $t$ with (cross-reactive) antibody level $y,$ \\[0.5ex]
$ i_k(t,y,y_0)$ & density of primary strain-$k$ infected hosts at time $t$ with antibody level $y$ and pre-existent antibody level $y_0$, \\[0.5ex]
$r_k(t,y)$ & density of strain-$k$ recovered hosts at time $t$ with antibody level $y,$ \\[0.5ex]
$i_{kj}(t,y,y_0)$ & density of secondary strain-$j$ infected hosts at time $t$ with antibody level $y$ and pre-existent antibody level $y_0$, \\[0.5ex]
$S_v(t)$ & number of susceptible vectors at time $t,$ \\[0.5ex]
$I_v(t)$ & number of infected vectors at time $t,$ \\[0.5ex]
\bottomrule
\end{tabularx}
\label{table:variables} 
\end{table}

 \begin{threeparttable}[t]
\caption{Definition of the between-host model parameters}
\centering 
\begin{tabularx}{\textwidth}{tbs}
\toprule
Parameter  & Meaning & Values (range) \\ [0.5ex]
\toprule
\\
$ \Lambda (y)$ & recruitment rate of susceptible hosts with antibody level $y$ & $\mathbb E(\Lambda)=100$  (variable distribution centered around $y_m=0.11$) \\[0.5ex]
$  \Lambda_v$ & susceptible vector recruitment rate & $0.02$ \\[0.5ex]
$\beta^k_v (y)$ & transmission rate from k-strain infected vector to susceptible hosts with antibody level $y$ & $\begin{cases} 0.00025 & y<y_p \\ 0 & y>y_p \ \end{cases}$  \\[0.5 ex]
$\beta_{k}(y,y_0)$  & transmission rate from primary strain-$k$  infected hosts, with antibody level $y$ (and initial antibody level $y_0$), to susceptible vectors &  \tnote{a}\\[0.5ex]
$\beta_{kj}(y,y_0)$  & transmission rate from secondary strain-$j$  infected hosts, with antibody level $y$ (and initial antibody level $y_0$), to susceptible vectors &  \tnote{a}\\[0.5ex]
$\gamma_k (y,y_0)$ & recovery rate of  primary strain-$k$ infected hosts, with antibody level $y$ (and initial antibody level $y_0$)&  \tnote{a}\\[0.5ex]
$\gamma_{kj} (y,y_0)$ & recovery rate of secondary strain-$j$ infected hosts, with antibody level $y$ (and initial antibody level $y_0$) &  \tnote{a}\\[0.5ex]
$g_k (y,y_0)$ & antibody growth rate ($\frac{dy}{d\tau}$ in \eqref{Secondary}) during  primary infection &  \tnote{a}\\[0.5ex]
$g_{kj} (y,y_0)$ & antibody growth rate ($\frac{dy}{d\tau}$ in \eqref{Secondary}) during  secondary infection &  \tnote{a} \\[0.5ex]
$\omega_k (y)$ & (cross-reactive) antibody waning rate after primary infection &  \tnote{a} \\[0.5ex]
$\omega_{kj} (y)$ & (cross-reactive) antibody waning rate after  secondary infection &  \tnote{a}  \\[0.5ex]
$\mu$ & Host natural death rate & $1/(10\times 365), 1/(55\times 365)$ \\[0.5 ex]
$\mu_v$ & Vector natural death rate   & $1/20$  \\[0.5 ex]
\bottomrule
\end{tabularx}
\label{table:parameters} 
 \begin{tablenotes}
     \item[a] see Table \ref{Secondary} \& \ref{table:linkparameters}, along with linking functions \eqref{recov1}, \eqref{recov1}, and \eqref{transmis}. 
   \end{tablenotes}
\end{threeparttable}
 
The functions $g_k(y,y_0)$ and $g_{kj}(y,y_0)$ represent the memory antibody concentration growth rates corresponding to primary infection with strain $k$ and secondary infection by strain $j$, respectively.  They can be formally defined as follows.    Consider the solution to the within-host system \eqref{Secondary} with parameters corresponding to the particular infection type (strain and primary/secondary).  Note that within the general within-host model \eqref{Secondary}, parameters may differ between strain and whether it is primary or secondary infection.  Given the solution $x(\tau):=x(\tau;x_0,y_0,z_0), y(\tau):=y(\tau;x_0,y_0,z_0), z(\tau):=z(\tau;x_0,y_0,z_0)$, define inverse map $\tau=\tau(y;y_0)$ corresponding to time since infection, noting that $y(\tau;y_0)$ is strictly increasing function of $\tau$ for each $y_0>0$ (holding $x_0,z_0$ fixed).  For primary or secondary infection, we assume fixed initial concentration of pathogen $x_0$ and fixed initial specific naive (memory) specific antibody concentration of $z_0>0$.  The initial cross-reactive antibody concentration, $y_0=\tilde y$, is given by the structuring variable of the susceptible host which becomes infected, $s(t,\tilde y)$, in the case of primary infection.   Then $g_k(y,y_0)=g(x(\tau;x_0,\tilde y,z_0),y(\tau;x_0,\tilde y,z_0),z(\tau;x_0,\tilde y,z_0)))$ where $g$ is the second component of the vector field in the within-host system \eqref{Secondary} with parameters corresponding to primary infection by strain $k$.  Similarly $g_{kj}(y,y_0)=g(x(\tau;x_0,\tilde y,z_0),y(\tau;x_0,\tilde y,z_0),z(\tau;x_0,\tilde y,z_0))$ with parameters corresponding to secondary infection by strain $j$ in system \eqref{Secondary}.

  Multiple studies have shown that following primary infection, individuals have a temporary period of immunity to different serotypes induced by cross-reactive antibodies primed by the primary infecting serotype.  This immunity can be generated in our within-host models via the rise of antibody concentration during primary infection to levels sufficient for inhibition of secondary infection.  However, in reality the immunity can wane through time allowing for secondary infection by serotypes distinct from the primary strain, potentially manifesting in dengue hemorrhagic fever caused by ADE at intermediate levels of cross-reactive antibody.  Note that although the antibody level wanes through time, recovered individuals remain immune to the primary infecting strain, thus the \emph{total} antibody levels (in particular specific antibodies) stay above some critical level for strain-specific immunity.    With these features in mind, we include a drift term for waning antibody level.  Suppose that antibody levels change according to $\dot y=-\omega_k(y)$, where $\omega_k \geq 0$ is the rate of antibody waning after recovery from primary infection by strain $k$.  We assume that $\omega_k(y)\rightarrow0$ as $y\rightarrow y_c^+$, so antibody levels stay above some $y_c\geq 0$, as studies show that antibodies do not wane completely.  Similar assumptions are made for individuals recovered from secondary infection, however since they have permanent immunity to both strains, the $r_{kj}$ equation is decoupled from the system.   A specific example, supported by a study of waning IgG  \cite{antia2018heterogeneity}, is exponential decay of memory antibodies, above the critical level $y_c$:
 \begin{align}
 \omega_k(y)&=\xi (y-y_c),  \qquad
 \text{so that} \  \ y(\tau)=(y_0-y_c)e^{-\xi\tau} + y_c . \label{expwane}
 \end{align}
This exponential rate form is utilized in between primary and secondary infection in Fig. \ref{fig1b} with $\xi$ and $y_c$ calibrated to produce the displayed waning antibody level in the characteristic period of 2 years corresponding to loss of cross-immunity \cite{reich2013interactions}.

\begin{figure}[t]
\subfigure[][]{\label{fig2a}\includegraphics[width=7.5cm,height=4cm]{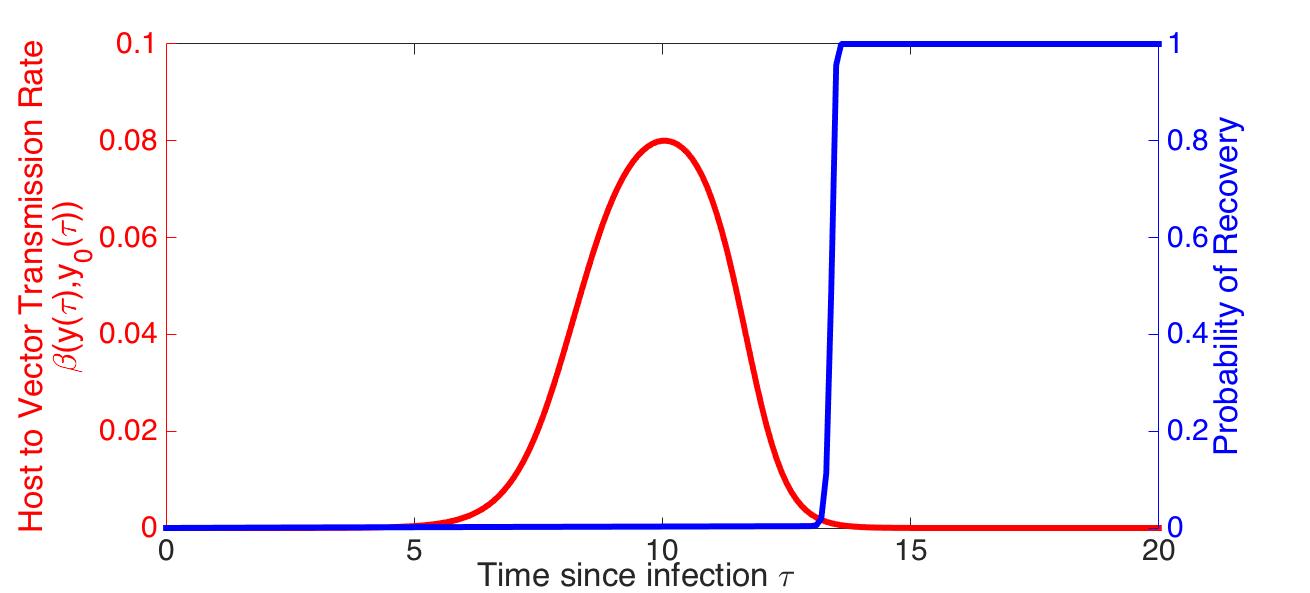}}
\subfigure[][]{\label{fig2b}\includegraphics[width=7.5cm,height=4cm]{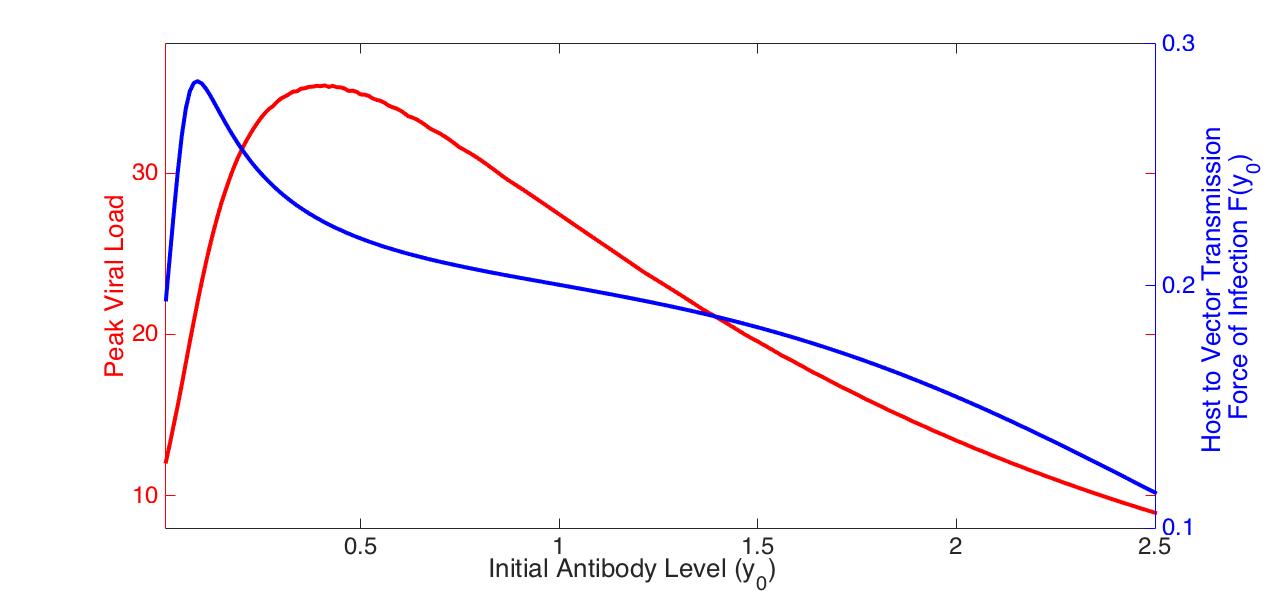}}
\caption{(a)  Transmission rate, $\beta_{\ell}(\tau(y,y_0))$, and probability of exiting infectious period (through recovery or death), given by $1-\pi_{\ell}(\tau(y,y_0))$, as functions of time since infection $\tau=\tau(y,y_0)$. (b) Infection severity (as shown in Fig. \ref{fig1c}) and host-vector force of infection $\mathcal F_k(y_0)$ as functions of pre-existent antibody level $y_0$.  The parameters utilized for the within-host model \eqref{Secondary} are the same as Fig. \ref{fig1} with $y_0=0.11$ for (a).  For the linking functions, $\beta(\tau(y,y_0))$ and $\gamma(\tau(y,y_0))$, we utilize \eqref{transmis} with $\psi=0.001$ and $C_{\ell}=135$ (based on \cite{nguyen2013host}), and \eqref{recov1} with $\rho=\nu=10$, respectively.  } 
  \label{fig2}
  \end{figure}

On the host population scale, the preexistent cross-reactive antibody level, $y_0$, can vary according to the distribution $\Lambda(y_0)$, leading to variable within-host primary infection dynamics in the population.  The recovery and waning process creates more heterogeneity in antibody level among the host population, leading to variable responses to secondary exposure.  In order to formulate the recovery rate, first note that Dengue is an acute infection with an approximate triangular viral load trajectory in log scale, suggested by both data and our within-host model.   In addition, recovery from primary infection induces lifelong immunity to the primary infecting serotype.  With these features in mind, we consider a few possible recovery rate forms.  First, we assume the recovery rate increases as viral load slope becomes negative and the viral load becomes small.   The second rate form specifies that viral load should be decreasing, i.e. viral load slope becomes negative, which is a necessary condition for protective immunity against the same strain.  Last, we suppose that recovery occurs at a fixed level of antibody.  In particular, we give the following three examples of recovery rates;
\begin{align} 
(i) \ & \gamma_k(\tau(y,y_0))= e^{-\left(\rho x(\tau)+ \nu(\log x(\tau))'\right)}, \label{recov1}\\
(ii) \ & \gamma_k(\tau(y,y_0))= \nu e^{-\rho x(\tau)} \mathds{1}_{\left\{f_k(\tau)<0\right\}}, \label{recov2}\\
(iii) \ & \gamma_k(y,y_0)=  \delta(y-y^*(y_0)), \label{recov3}
\end{align}
where $\rho, \nu$ are factors determining the distribution of recovered hosts with respect to infection dynamics and $f_k(\tau)=f_k(x(\tau),y(\tau),z(\tau))$ is the pathogen growth rate.  For (ii), note that $\gamma$ is zero prior to the critical time when the pathogen begins to decrease (when $f_k(\tau)=0$).  For (iii), $y^*$ corresponds to a constant antibody level dependent on initial conditions at which infected hosts recover.  Note that this case covers the situation where the virus must decline below a fixed threshold, $x^*$, which can be related to $y^*$ via the inverse map.  Also for (iii), the infectious period for strain-k infected hosts with initial  level $y_0$ is given by $T(y_0)=\frac{1}{\int_{y_0}^{y^*(y_0)} g_k(y,y_0)dy}$.   

The host to vector transmission rate also depends upon the within-host infection dynamics.  
Data suggest that the probability of an mosquito getting infecting by a bite from an infected individual is a Holling type III function of the pathogen load at a given time-since-infection $\tau$ (\cite{handel2015crossing}, \cite{nguyen2013host},\cite{tuncer2016structural}). Thus, the form of host to vector transmission rate utilized is as follows;
\begin{equation} \label{transmis}
\beta_{\ell}(y,y_0)= \psi \frac{(x(\tau(y,y_0)))^2}{C_{\ell} + (x(\tau(y,y_0)))^2},  \quad  \ell\in \left\{1,2,12,21\right\}
\end{equation}
where  $C_{\ell},$ and $\psi$ are half saturation and transmission constants.  We utilize data of DENV-1 from \cite{nguyen2013host} in order to parameterize the half-saturation constant $C:=C_{\ell}$.  In particular, although we do not concern about the scale of viral load in our simulations (e.g. Fig. \ref{fig1a}), $C$ is chosen so that the ratio of peak viral load and viral load causing $50\%$ infectiousness does match the data in \cite{nguyen2013host}. Both the transmission rate, $\beta_{\ell}(\tau(y,y_0))$, and probability of exiting infectious period (through recovery or death), given by $1-\pi_{\ell}(\tau(y,y_0))$ (defined later by formula \eqref{piF}), are simulated in Fig. \ref{fig2a}.

 \begin{threeparttable}[t]
\caption{Description of linking parameters in \eqref{recov1}, \eqref{recov1}, and \eqref{transmis}.}
\centering 
\begin{tabularx}{\textwidth}{tbs}
\toprule
Parameter  & Description & Value  \\ [0.5ex]
\toprule
\\
$\psi$ & max host-vector transmission rate &  $0.001$   \tnote{a}  \\[0.5ex]
$C$ & host-vector transmission saturation constant & $135$ \tnote{a} \\[0.5ex]
$\rho$ & host recovery shape parameter for viral load & 10\\[0.5 ex]
$\nu$ & host recovery shape parameter for viral load slope &  10 \\[0.5ex]
$\xi$  & antibody waning rate  & 0.002 \tnote{b}\\[0.5ex]
$y_c$ & antibody lower bound & 0.02 \tnote{b} \\[0.5ex]
\bottomrule
\end{tabularx}
\label{table:linkparameters} 
 \begin{tablenotes}
     \item[a] Chosen to match data in \cite{nguyen2013host}.
     \item[b] Corresponds to approximately 2 years before recovered primary are susceptible to severe secondary infection.
   \end{tablenotes}
\end{threeparttable}

\begin{remark}\label{remark1}
Finally, we remark that the host population can be \emph{equivalently} structured by both antibody variables $y$ and $z$.  Then, for example, the following equation for $\tilde i_k:=\tilde i_k(t,y,z,y_0,z_0)$ would appear:
\begin{align}\label{host_epiG}
\frac{\partial \tilde i_k}{\partial t}+\frac{\partial (g_k \tilde i_k)}{\partial  y} +\frac{\partial (h_k \tilde i_k)}{\partial  z} 
&=  
     - \gamma_k\tilde i_k    \\
  (g_k(y_0,z_0,y_0,z_0)+h_k(y_0,z_0,y_0,z_0)) \tilde i_k(t,y_0,z_0,y_0,z_0) & = \beta_v^k(y_0,z_0)s(t,y_0,z_0) I_v^k(t)  \notag 
\end{align}
In such a model, $y$ might be interpreted as antibodies specific to strain 1 and $z$ as antibodies specific to strain 2.  In this way, the model affords flexibility in terms of how one defines specific versus non-specific antibodies.  While  tracking multiple antibody variables may seem to complicate matters, observe that there is a 1-1 relationship between $y$ and $z$, where the additional variable $z$ is mapped onto $y$ via the inverse map.  Thus we can utilize our original system \eqref{host_epi} (with additional ``static'' variable $z_0$) and simply calculate $z(y)$ for each cohort.  Note also that the infection-age $\tau$ is in 1-1 relationship with $y$, therefore we can transform the system to an age-structured model as done in \cite{gandolfi2015epidemic}.  However since we are interested in tracking antibody level in host population, we only pursue this direction when it can be advantageous for numerical simulations in the special case when susceptible antibody level ($\Lambda(y),s_0(y)$) is a Dirac point measure distribution (see Section \ref{SecEx}) .   
\end{remark}

\section{Analysis of between-host model} \label{secAnal}
The aims of this paper are \emph{multi-scale} model formulation,  and equilibrium, linearized stability and numerical analysis, with the goal of capturing Dengue \emph{ADE} across scales. We do not delve into establishing existence, uniqueness, regularity and positivity of
solutions of \eqref{host_epi}-\eqref{vector_epi}.   However, in a sequel to this paper, we will analyze the uniform persistence of solutions, which will require rigorous proof of model well-posedness.  Thus we reserve such questions addressing existence, regularity and global properties of solutions for our follow-up study. 
We do remark here that use of abstract semigroup theory \cite{thiemea1990integrated} or transformation to an age-structured model as noted in Remark \ref{remark1} combined with fixed point techniques applied to an integral form of the system, as in \cite{browne2013global}, can yield existence and uniqueness
results.  In both approaches, assuming antibody-dependent rates, e.g. $\Lambda(y)$, to be $L^{\infty}_+(0,\infty)$ should be sufficient for finding unique solutions in an appropriate product space consisting of $L^{1}_+(0,\infty)$ components of non-negative Lebesgue functions defined on $(0,\infty)$.  The main challenge, different from most structured population models but similar to \cite{barbarossa2015immuno}, is to properly control the evolution of the recovered distributions, $r_k(t,y)$, through waning (decreasing) antibody level $y$ at rates $\omega_k(y)$.  We conjecture that the proportional waning rate (exponential decay of $y$ with lower bound $y_c$), given by \eqref{expwane}, will ensure $r_k(t,y)$ remains in $L^{1}_+(y_c,\infty)$, as we show for the equilibrium solutions in Section \ref{SecEx}.   Restriction to Sobolev spaces can further strengthen linearized stability results obtained later in this current section.    Also, while considering the more general setting of measures \cite{carrillo2012structured} can allow for Dirac delta measures for certain rates and components, it may be easier to transform the model to equivalent ordinary differential or delay equations in the specific examples where we utilize Dirac delta point measures.  

To begin our analysis, define the total vector population, $N_v(t)=S_v(t)+\sum\limits_{k=1}^2 I_v^k(t)$, and total host population as
\begin{align*}
N_h(t)&=\int\limits_0^\infty s(t,y) \, dy + \sum\limits_{\substack{k,j=1 \\ k\neq j}}^2 \left( \int\limits_{0}^{\infty}\int\limits_{y_0}^{\infty} i_k(t,y,y_0) \,dy \,dy_0 +\int\limits_0^\infty r_k(t,y)\,dy+\int\limits_{0}^{\infty}\int\limits_{y_0}^{\infty} i_{kj}(t,y,y_0) \,dy \,dy_0\right).
\end{align*}
\begin{proposition}\label{bounded}
Solutions to the system \eqref{host_epi}-\eqref{vector_epi} remain bounded in forward time, and moreover $$\limsup\limits_{t\rightarrow\infty} N_h(t)=\frac{\int_0^\infty \Lambda(y)dy}{\mu}, \quad  \limsup\limits_{t\rightarrow\infty} N_v(t)=\frac{\Lambda_v}{\mu_v}. $$
\end{proposition}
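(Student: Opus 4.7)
The plan is to obtain closed scalar ODEs (or differential inequalities) for the aggregated quantities $N_h(t)$ and $N_v(t)$ by integrating the PDE system \eqref{host_epi}--\eqref{vector_epi} against the structuring variables, and then conclude via the standard linear comparison principle. The vector case is essentially immediate: summing the equations in \eqref{vector_epi} for $S_v$ and each $I_v^k$, the $S_v\int\int \beta\, i\,dy\,dy_0$ cross terms cancel, leaving $\frac{dN_v}{dt}=\Lambda_v-\mu_v N_v$. This linear ODE is solved explicitly and yields $N_v(t)=N_v(0)e^{-\mu_v t}+\frac{\Lambda_v}{\mu_v}(1-e^{-\mu_v t})\to\Lambda_v/\mu_v$, which gives both forward boundedness (since all components are nonnegative) and the stated $\limsup$.

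For the host side, the key step is to control the convective boundary terms arising from the transport operators $\partial_y(g_k i_k)$, $\partial_y(g_{kj} i_{kj})$, $-\partial_y(\omega_k r_k)$ and $-\partial_y(\omega_{kj} r_{kj})$. I would integrate the $i_k$ equation over $\{(y,y_0):y\geq y_0\geq 0\}$, using the fundamental theorem of calculus in $y$ together with the boundary conditions \eqref{BC}. The outer boundary $\lim_{y\to\infty}g_k(y,y_0)i_k(t,y,y_0)=0$ is needed for integrability, while the inner boundary $g_k(y_0,y_0)i_k(t,y_0,y_0)=\beta_v^k(y_0)s(t,y_0)I_v^k(t)$ supplies precisely the influx that cancels the mass loss in the $s$-equation. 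The analogous cancellation occurs between the $r_k$ and $i_{kj}$ equations via the second boundary condition in \eqref{BC}. The waning drift in the recovered equations integrates out thanks to $\omega_k(y_c)r_k(t,y_c)=\lim_{y\to\infty}\omega_k(y)r_k(t,y)=0$.

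Summing the resulting ODEs for $S$, $I_k$, $R_k$, and $I_{kj}$, all transmission and recovery cross terms telescope, leaving the scalar differential inequality
\begin{equation*}
\frac{dN_h}{dt}\;=\;\int_0^\infty\Lambda(y)\,dy-\mu N_h\;-\;\sum_{k\neq j}\int_0^\infty\!\!\int_{y_0}^\infty \gamma_{kj}(y,y_0)\,i_{kj}(t,y,y_0)\,dy\,dy_0\;\leq\;\overline\Lambda-\mu N_h,
\end{equation*}
where $\overline\Lambda:=\int_0^\infty\Lambda(y)\,dy<\infty$ and the extra nonpositive term accounts for the flux into the decoupled secondary-recovered compartment $r_{kj}$. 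The classical linear comparison theorem then gives $N_h(t)\leq N_h(0)e^{-\mu t}+\tfrac{\overline\Lambda}{\mu}(1-e^{-\mu t})$, which delivers forward boundedness and $\limsup_{t\to\infty}N_h(t)\leq\overline\Lambda/\mu$. Sharpness of the $\limsup$ is recovered by working instead with the \emph{full} total $\tilde N_h:=N_h+\sum_{k\neq j}R_{kj}$, for which the same integration argument yields the exact linear ODE $\dot{\tilde N}_h=\overline\Lambda-\mu\tilde N_h$, so $\tilde N_h(t)\to\overline\Lambda/\mu$; combined with $N_h\leq\tilde N_h$ and nonnegativity of each component, this pins down the asserted equality.

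The main obstacle is purely technical: justifying that these formal manipulations are legitimate for the class of solutions under consideration. Specifically, one needs enough decay of $i_k,r_k,i_{kj}$ as $y\to\infty$ to make the outer boundary terms vanish, and enough regularity in $y_0$ to apply Fubini when collapsing the double integrals. As the authors note in Section \ref{secAnal}, a rigorous existence/regularity theory is deferred to a sequel, so the present proposition should be read in the sense of classical solutions (or strong solutions in the product $L^1_+$ space) for which these integrations by parts and the derived ODE balance are valid; given that, the argument above is essentially a few lines of calculation plus invocation of scalar comparison.
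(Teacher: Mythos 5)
Your proposal follows the same route as the paper: integrate each equation of \eqref{host_epi}--\eqref{vector_epi} over the structuring variables, use the fundamental theorem of calculus together with the boundary conditions \eqref{BC} to eliminate the transport boundary terms, let the transmission and recovery cross terms telescope, and close with the scalar linear comparison principle; the vector equation is handled identically. The one place your write-up slips is the final inference: from $N_h\le\tilde N_h$ and $\tilde N_h\to\overline{\Lambda}/\mu$ (where $\overline{\Lambda}=\int_0^\infty\Lambda(y)\,dy$) you can only conclude $\limsup_{t\to\infty}N_h(t)\le\overline{\Lambda}/\mu$, not the asserted equality — in principle the decoupled secondary-recovered compartment $r_{kj}$ could retain a positive fraction of the population asymptotically, leaving the sub-total strictly below the limit. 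This is really an artifact of the paper's own bookkeeping: its definition of $N_h$ omits $\int r_{kj}\,dy$, yet its displayed computation includes the $r_{kj}$ equation, so the exact ODE $\dot N_h=\overline{\Lambda}-\mu N_h$ it derives is precisely the ODE for your augmented total $\tilde N_h$, for which the $\limsup$ is in fact a genuine limit. Read with $N_h$ as the full host population, your argument is complete and coincides with the paper's; read literally, both your proof and the paper's establish only the upper bound for the sub-total, and your explicit isolation of the nonpositive $\gamma_{kj}i_{kj}$ flux term makes this distinction clearer than the original.
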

\begin{proof}
Consider the differential equation satisfied by $N_h(t)$ derived from \eqref{host_epi}:
\begin{align*}
\frac{dN_h(t)}{dt}&=\int\limits_0^\infty \Lambda(y)dy -\mu N_h(t)- \int\limits_0^\infty s(t,y)\sum_{k}\beta_{v}^k(y)I_v^k(t)dy \\
& \qquad +\sum_k \int\limits_{0}^{\infty}\int\limits_{y_0}^{\infty}\left(-\frac{\partial (g_k(y,y_0) i_k(t,y,y_0))}{\partial  y} 
-(\mu+ \gamma_k(y,y_0)i_k(t,y,y_0)) \right) dydy_0 \\
& \qquad +\sum_k \int\limits_{0}^{\infty}\left(\frac{\partial (\omega_k(y) r_k(t,y))}{\partial  y} +\int_0^{\infty}\gamma_k(y, y_0)i_k(t,y,y_0)dy_0
-\mu r_k(t,y)- \beta_v^j(y) r_k(t,y)I_v^j(t) \right) dy \\
& \qquad +\sum_{j\neq k} \int\limits_{0}^{\infty}\int\limits_{y_0}^{\infty}\left(-\frac{\partial (g_{kj}(y, y_0) i_{kj}(t,y, y_0))}{\partial  y} 
- (\gamma_{kj}(y, y_0)+\mu)\right) i_{kj}(t,y, y_0) dydy_0 \\
& \qquad +\sum_{j\neq k} \int\limits_{0}^{\infty}\left( \frac{\partial (\omega_{kj}(y) r_{kj}(t,y))}{\partial y} +\int_0^{\infty}\gamma_{kj}(y, y_0)i_{kj}(t,y,y_0)dy_0-\mu r_{kj}(t,y) \right) dy
\end{align*}
Applying the fundamental theorem of calculus, along with the boundary conditions of \eqref{host_epi} (noting for example that $\omega_k(y) r_k(t,y) =0$ for $y\leq y_c\geq 0$ and all the populations decay to zero as $y\rightarrow\infty$), several cancellations occur and we simply obtain
\begin{align*}
\frac{dN_h(t)}{dt}&=\int\limits_0^\infty \Lambda(y)dy -\mu N_h(t).
\end{align*}
Similar conclusion holds for $N_v(t)$ and the result follows.
\end{proof}

\subsection{Reproduction Number \& Dynamical Properties of the Nested Systems}
Define the following quantities giving the probability of host recovery and host-vector force of infection as functions of antibody levels:
\begin{align}
\pi_k(y,y_0)=e^{-\int_{y_0}^{y}\frac{\gamma(a,y_0)+\mu}{g_k(a,y_0)}\,da}, \qquad \mathcal F_k(y_0) = \int\limits_{y_0}^{\infty}\frac{\pi_k(y,y_0)}{g_k(y,y_0)}\beta_k(y,y_0)\,dy \label{piF}
\end{align}
Then the \textit{antibody level dependent} basic reproduction number for each strain is given by:
     \begin{align}
     \mathcal R_0^k = \D\frac{ N_v}{\mu\mu_v} \int\limits_{0}^{\infty}\Lambda(y_0)\beta_{v}^k(y_0) \mathcal F_k(y_0)\,dy_0 ,  \label{R0f}
     \end{align}
where  $N_v =\D\frac{\Lambda_v}{\mu_v}$.  Note that host-vector force of infection $\mathcal F_k(y_0)$ may not mimic infection severity measured by peak viral load within-host, as shown in Fig. \ref{fig2b} where $\mathcal F_k(y_0)$ achieves maximum levels for smaller magnitudes of pre-existent antibody level $y_0$ than infection severity (due to decreasing infectious period associated with ADE).  
 
 The following proposition shows that the basic reproduction number $\mathcal R_0^k$ is a threshold for disease extinction (locally) and existence of an endemic equilibrium. 
\begin{proposition} \label{disfree_prop}
If $\mathcal R_0^k<1, \ k=1,2$, then the DFE $\mathcal E_0$ is locally asymptotically stable.  If $\mathcal R_0^k>1,$ there exists a single strain boundary equilibrium $\mathcal E_k$ and $\mathcal E_0$ is unstable.  
\end{proposition}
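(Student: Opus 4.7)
The plan is to attack local stability of the disease-free equilibrium $\mathcal{E}_0=(\Lambda(y)/\mu,0,\ldots,0,\Lambda_v/\mu_v,0)$ by a characteristic-equation argument, and to establish the boundary equilibrium $\mathcal{E}_k$ by a monotone scalar-equation argument. First I would linearize \eqref{host_epi}-\eqref{vector_epi} around $\mathcal{E}_0$ and observe that, strain by strain, the only block that couples at leading order is the pair $(\tilde i_k,\tilde I_v^k)$: the secondary-infection perturbations $\tilde i_{kj},\tilde r_{kj}$ are driven by products of two infinitesimal factors (so their boundary data vanish at $\mathcal{E}_0$) and contribute only stable modes; $\tilde s$, $\tilde S_v$, $\tilde r_k$ satisfy linear transport equations with strict damping $-\mu$ or $-\mu_v$ forced by $\tilde i_k,\tilde I_v^k$ and hence inherit their decay. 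This reduces linear stability of $\mathcal{E}_0$ to that of the infected subsystem for each $k$.

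Next, seek exponential modes $\tilde i_k(t,y,y_0)=e^{\lambda t}\phi_k(y,y_0)$, $\tilde I_v^k(t)=e^{\lambda t}J_k$. Setting $J:=g_k\phi_k$ turns the transport equation into $J_y=-\frac{\lambda+\gamma_k+\mu}{g_k}J$, and integration with the linearized boundary datum $g_k(y_0,y_0)\phi_k(y_0,y_0)=\beta_v^k(y_0)(\Lambda(y_0)/\mu)J_k$ gives
\begin{equation*}
\phi_k(y,y_0)=\frac{\beta_v^k(y_0)\,\Lambda(y_0)\,J_k}{\mu\, g_k(y,y_0)}\,\pi_k(y,y_0)\,\exp\!\left(-\lambda\!\int_{y_0}^{y}\!\frac{da}{g_k(a,y_0)}\right).
\end{equation*}
Plugging into the linearized vector equation and cancelling $J_k$ yields the characteristic equation $H_k(\lambda)=1$ with
\begin{equation*}
H_k(\lambda):=\frac{N_v}{\mu(\lambda+\mu_v)}\int_0^{\infty}\!\Lambda(y_0)\beta_v^k(y_0)\!\int_{y_0}^{\infty}\!\frac{\beta_k(y,y_0)\pi_k(y,y_0)}{g_k(y,y_0)}\exp\!\left(-\lambda\!\int_{y_0}^{y}\!\frac{da}{g_k}\right)dy\,dy_0,
\end{equation*}
and comparison with \eqref{R0f} gives $H_k(0)=\mathcal{R}_0^k$. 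Since $g_k>0$, $H_k$ is strictly decreasing on $\mathbb{R}$ and the elementary modulus estimate $|H_k(\lambda)|\leq H_k(\mathrm{Re}\,\lambda)$ holds for $\mathrm{Re}\,\lambda>-\mu_v$. Hence if $\mathcal{R}_0^k<1$ every root of $H_k(\lambda)=1$ lies in $\{\mathrm{Re}\,\lambda<0\}$, giving linear stability of $\mathcal{E}_0$; and if $\mathcal{R}_0^k>1$ the inequalities $H_k(0)>1>H_k(+\infty)$ supply a positive real root by the intermediate value theorem, so $\mathcal{E}_0$ is unstable.

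For the existence of the boundary equilibrium $\mathcal{E}_k$, setting time derivatives to zero and integrating along characteristics yields $s^*(y)=\Lambda(y)/(\mu+\beta_v^k(y)I_v^{k*})$ and $i_k^*(y,y_0)=\beta_v^k(y_0)s^*(y_0)I_v^{k*}\pi_k(y,y_0)/g_k(y,y_0)$, while $r_k^*$ is then determined explicitly by solving its first-order ODE in $y$ with the boundary/decay conditions in \eqref{BC}. Summing the equilibrium forms of the $S_v$ and $I_v^k$ equations gives $S_v^*+I_v^{k*}=N_v$, and insertion into the vector infection equation reduces the system to the scalar equation $\Psi(I_v^{k*})=0$ with
\begin{equation*}
\Psi(I):=I-N_v+\frac{\mu_v}{W(I)},\qquad W(I):=\int_0^\infty\!\frac{\Lambda(y_0)\beta_v^k(y_0)\mathcal{F}_k(y_0)}{\mu+\beta_v^k(y_0)I}\,dy_0.
\end{equation*}
Because $W$ is strictly decreasing and positive, $\Psi$ is strictly increasing; using $\mathcal{R}_0^k=N_vW(0)/\mu_v$ one gets $\Psi(0)=-N_v(1-1/\mathcal{R}_0^k)<0$ when $\mathcal{R}_0^k>1$, and $\Psi(N_v)=\mu_v/W(N_v)>0$, yielding a unique $I_v^{k*}\in(0,N_v)$ and hence a unique $\mathcal{E}_k$ with non-negative components.

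The main obstacle is the spectral step: in the infinite-dimensional setting one must show that the eigenvalues $\lambda$ solving $H_k(\lambda)=1$ genuinely control the growth bound of the linearized semigroup, i.e., that the essential spectrum lies strictly inside $\{\mathrm{Re}\,\lambda<0\}$ and that the spectrum-determined growth property holds. Because the linearization is a relatively compact perturbation of a strictly dissipative transport--reaction semigroup on $L^1_+$-type spaces, the standard framework for positive $C_0$- or integrated semigroups applies; as the paper defers well-posedness and regularity to its sequel, I would invoke this machinery at the level of generality used in the surrounding exposition rather than reconstruct it in detail here.
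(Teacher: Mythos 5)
Your proposal is correct and follows essentially the same route as the paper: an exponential-mode ansatz leading to the characteristic equation $H_k(\lambda)=1$ with $H_k(0)=\mathcal R_0^k$ (the paper's $\Psi_k$), a modulus estimate for $\mathcal R_0^k<1$ and the intermediate value theorem for $\mathcal R_0^k>1$, together with a monotone scalar fixed-point equation for the existence of $\mathcal E_k$ (your $\Psi(I)=0$ is a rearrangement of the paper's $F=G$ balance). Your closing caveat about essential spectrum and spectrum-determined growth is a fair acknowledgment of a gap the paper also leaves implicit by deferring well-posedness to a sequel.
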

\begin{proof}
For the single-strain equilibrium $\mathcal E_k$, we derive the equilibrium equation for $\bar I_v^k$:
$$\frac{\mu_v}{N_v-\bar I_v^k}= \int\limits_{0}^{\infty}\frac{\Lambda(y_0)}{\beta_{v}^k(y_0) \bar I_v^k+\mu}\beta_{v}^k(y_0) \mathcal F_k(y_0)\,dy_0 $$
Let the right-hand side of the above equation be denoted $F(\bar I_v^k)$ and the left-hand side $G(\bar I_v^k)$.  Then $G$ is increasing on $[0,N_v)$, approaching $+\infty$ as $\bar I_v^k$ approaches $N_v$ from the left, and $F$ is decreasing on $[0,N_v)$.  Note that $F(0)>G(0) \Leftrightarrow \mathcal R_0^k>1$.  Thus there exists a positive equilibrium value $\bar I_v^k$ if and only if $\mathcal R_0^k>1$.  

Now for the stability of $\mathcal E_0$, we will consider the linearized equation for deviations of solutions: $\tilde s(t,y)=s(t,y)-\bar s(y), \tilde i_k(t,y,y_0),   \tilde r_k(t,y), \tilde i_{jk}(t,y,y_0), \tilde S_v(t)=S_v(t)-\bar S_v, \tilde I_v^k(t)$.  After discarding higher order terms in \eqref{host_epi}-\eqref{vector_epi}, we obtain the following linearized equations:

\begin{align}\label{host_epi_lin}
\frac{\partial \tilde s(t,y)}{\partial t} &= -\mu \tilde s(t,y)-\bar s(y)\sum_{k}\beta_{v}^k(y) \tilde I_v^k(t)   \notag \\
\frac{\partial \tilde i_k(t,y,y_0)}{\partial t}+\frac{\partial (g_k(y,y_0) \tilde i_k(t,y,y_0))}{\partial  y} 
&=  
     - \left(\gamma_k(y,y_0)+\mu\right)\tilde i_k(t,y,y_0)  \notag  \\
  g_k(y_0,y_0) \tilde i_k(t,y_0,y_0) & = \beta_v^k(y_0)\bar s(y)\tilde I_v^k(t)  \notag \\
\frac{\partial \tilde r_k(t,y)}{\partial t}- \frac{\partial (\omega_k(y)\tilde r_k(t,y))}{\partial y} &=\int_0^{\infty}\gamma_k(y, y_0)\tilde i_k(t,y,y_0)dy_0-\mu \tilde r_k(t,y) \\
\frac{\partial\tilde i_{kj}(t,y,y_0)}{\partial t}+\frac{\partial (g_{kj}(y, y_0)\tilde i_{kj}(t,y, y_0))}{\partial y} 
&= - \left(\gamma_{kj}(y, y_0)+\mu\right)\tilde i_{kj}(t,y, y_0), \quad \tilde i_{kj}(t,y_0, y_0)=0  \notag 
\end{align}

\begin{align}\label{vector_epi_lin}
\frac{d\tilde S_v}{dt} &= -\mu_v\tilde S_v- N_v\left(\sum\limits_{k=1}^2\int\limits_{0}^{\infty}\int\limits_{y_0}^{\infty}\beta_k(y,y_0) \tilde i_k(t,y,y_0) \,dy \,dy_0+\sum\limits_{\substack{k,j=1 \\ k\neq j}}^2\int\limits_{0}^{\infty}\int\limits_{y_0}^{\infty}\beta_{kj}(y,y_0) \tilde i_{kj}(t,y,y_0) \,dy \,dy_0 \right)  \notag \\
\frac{d\tilde I_v^k}{dt} &= N_v\left(\int\limits_{0}^{\infty}\int\limits_{y_0}^{\infty}\beta_k(y,y_0) \tilde i_k(t,y,y_0) \,dy \,dy_0 + \int\limits_{y_0}^{\infty}\int\limits_{y_0}^{\infty}\beta_{jk}(y,y_0) \tilde i_{jk}(t,y,y_0) \,dy \,dy_0 \right)- \mu_v \tilde I_v^k
\end{align}

We assume exponential form of the deviations of solutions from $\mathcal E_0$ (using separation of variables for the PDE's), and thus insert the following variables into the linearized system:  $\tilde s(t,y)=\hat s(y)e^{\lambda t}, i_k(t,y,y_0)=\hat i_k(y,y_0)e^{\lambda t},   r_k(t,y)=\hat r_k(y)e^{\lambda t}, i_{jk}(t,y,y_0)=\hat i_{jk}(y,y_0)e^{\lambda t}, S_v(t)=\hat S_ve^{\lambda t}, I_v^k=\hat I_v^ke^{\lambda t}$.  After some simplification, we arrive at the following equations for $\lambda\in\mathbb C$ and $ \hat S_v,\hat I_v^1, \hat I_v^2 \in \mathbb R_+$: 
\begin{align}
\lambda \hat S_v&=-\mu_v\tilde S_v- N_v\left(\sum\limits_{k=1}^2\int\limits_{0}^{\infty} \beta_v^k(y_0)\bar s(y_0)\hat I_v^k\mathcal L(y_0)\br{\varphi_1}(\lambda)dy_0\right) \label{eigEq0} \\
\lambda \hat I_v^k&=-\mu_v \hat I_v^k +N_v\int\limits_{0}^{\infty} \beta_v^k(y_0)\bar s(y_0)\hat I_v^k\mathcal L(y_0)\br{\varphi_1}(\lambda)dy_0 \notag 
\end{align}
where $\varphi_k(y,y_0)=\frac{\pi_k(y,y_0)}{g_k(y,y_0)}\beta_k(y,y_0)$, $\mathcal L(y_0)\br{\cdot}(\lambda)$ denotes Laplace transform (with additional variable $y_0$).  The $\hat I_v^k$ equation above yields the characteristic equation:
\begin{align}\label{chareqnk}
1 &=\frac{1}{\lambda+\mu_v} \frac{N_v}{\mu}\int\limits_{0}^{\infty} \beta_v^k(y_0)\Lambda (y_0)\mathcal L(y_0)\br{\varphi_1}(\lambda)dy_0:=\Psi_k(\lambda)
\end{align}
Then $\Psi_k(0)=\mathcal R_0^k$ and $\lim\limits_{\lambda\rightarrow\infty} \Psi_k(\lambda)=0$ for $\lambda\in\mathbb R$. Thus we readily infer that if $\mathcal R_{0}^k>1$, then $\mathcal E_0$ is unstable since there exists eigenvalue $\lambda>0$ corresponding to eigenvector with $\hat I_v^k>0$.  On the other hand suppose that  $\mathcal R_{0}^k<1, \ k=1,2$.  Suppose by way of contradiction that there exists an eigenvalue with non-negative real part; $\lambda=a+bi, \ a\geq 0$.  Cleary that can not happen if $\hat I_v^k=0, \ k=1,2$, since in that case $\lambda=-\mu_v$.  So assume $\hat I_v^k>0$.  Then taking modulus of \eqref{chareqnk}, we find that
\begin{align*}
1=|\Psi_k(\lambda)| & \leq \frac{1}{\lambda+\mu_v} \frac{N_v}{\mu}\int\limits_{0}^{\infty} \beta_v^k(y_0)\Lambda (y_0)|\mathcal L(y_0)\br{\varphi_1}(\lambda)|dy_0\leq \Psi_k(0) = \mathcal R_0^k<1,
\end{align*}
which gives a contradiction.  
\end{proof}
In addition to existence of unique single strain equilibria, $\mathcal E_k$, when $\mathcal R_0^k>1,$ when $\beta_v^k$ is constant (does not depend on host antibody level $y_0$),  $\mathcal E_k$ can be explicitly found to have the following positive components:
\begin{align}
\bar I_v^k&=\frac{\mu_v\mu}{\beta_v^k(\mu_v+\int_{0}^{\infty}\Lambda(y_0)\mathcal F_k(y_0)\,dy_0)} \left(\mathcal R_0^k-1\right), \quad \bar S_v^k=N_v-\bar I_v^k, \quad \bar s(y)=\frac{\Lambda (y)}{\mu+\beta_v^k\bar I_v^k}, \notag \\
\bar i_k(y,y_0)&=\beta_v^k\bar s(y_0)\bar I_v^k \frac{\pi_k(y,y_0)}{g_k(y,y_0)}, \quad \bar r_k(y)=\frac{1}{\omega_k(y)}\int\limits_y^{\infty}e^{\int\limits_a^y \frac{\mu}{\omega_k(s)}ds} \int\limits_0^{\infty} \gamma_k(a,y_0) \bar i_k(a,y_0) dy_0 da. \label{ssEq}
\end{align}

Next, we define the following host to vector ``force of infection'' quantities with respect to primary and secondary infections, respectively:
 \begin{align} 
 \mathcal G_k&=\int\limits_{0}^{\infty}\Lambda(y_0) \mathcal F_k(y_0)\,dy_0, \qquad \text{if} \ \beta_v^k \ \text{constant}, \ \mathcal G_k=\frac{\mu\mu_v}{N_v\beta_v^k} \mathcal R_0^k,\label{invG} \\
 \mathcal H_{kj}&= \int\limits_{0}^{\infty}\frac{1}{\omega_k(y_0)}\int\limits_{y_0}^{\infty}e^{\int\limits_a^{y_0} \frac{\mu}{\omega_k(s)}ds} \int\limits_0^{\infty} \gamma_k(a,z) \Lambda(z) \frac{\pi_k(y_0,z)}{g_k(y_0,z)}dz da \int\limits_{y_0}^{\infty}\beta_{kj}(y,y_0)\frac{\pi_{kj}(y,y_0)}{g_{kj}(y,y_0)} \,dy\,dy_0,  \label{invH}
 \end{align}
Note that in the absence of waning, i.e. $\omega_k\equiv 0$, then $\bar r_k(y)$ is proportional to the probability density corresponding to the antibody concentration after primary infection: 
\begin{align}
\bar r_k(y)&=\int\limits_{y_0}^{\infty}\frac{\gamma_k(y,y_0)\bar i_k(y,y_0)}{\mu} dy_0, \notag\\
 \mathcal H_{kj}&= \int\limits_{0}^{\infty}\int\limits_{y_0}^{\infty} \frac{\gamma_k(y_0,z)}{\mu} \Lambda(z) \frac{\pi_k(y_0,z)}{g_k(y_0,z)}dz  \int\limits_{y_0}^{\infty}\beta_{kj}(y,y_0)\frac{\pi_{kj}(y,y_0)}{g_{kj}(y,y_0)} \,dy\,dy_0.  \label{invHnoW}
 \end{align}

The invasion reproduction number for strain $j$ invading strain $k$ is the following: 
\begin{align} \label{invR0}
\mathcal R_{inv}^j&= \frac{\mathcal R_0^j}{\mathcal R_0^k} +\frac{1}{\mu_v}\int\limits_{0}^{\infty}\beta_v^j(y_0)\bar r_k(y_0)\int\limits_{y_0}^{\infty} \beta_{kj}(y,y_0)\frac{\pi_{kj}(y,y_0)}{g_{kj}(y,y_0)} dy \,dy_0 
\end{align}
Plugging in equilibrium value of $\bar r_k(y)$ to \eqref{invR0}, for the case of constant vector to host transmission rate $\beta_v^k$, we obtain $$\mathcal R_{inv}^j= \frac{\mathcal R_0^j}{\mathcal R_0^k} +\frac{\mu_v\beta_v^j\left(\mathcal R_0^k-1\right)}{\mathcal R_0^k\mu_v+\mathcal G_k} \mathcal H_{kj}. $$

\begin{theorem}\label{invThm} Consider the case that $\beta_v^k$ is constant (does not depend on host antibody level $y_0$).  Let $j,k\in \left\{1,2\right\}$, $j\neq k$ and suppose that $\mathcal R_0^k>1$.  If $\mathcal R_{inv}^j<1$, then $\mathcal E_1^k$ is locally asymptotically stable.    If $\mathcal R_{inv}^j>1$, then $\mathcal E_1^k$ is unstable.
\end{theorem}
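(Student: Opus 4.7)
The plan is to linearize \eqref{host_epi}-\eqref{vector_epi} about $\mathcal E_1^k$, insert the exponential ansatz $\tilde s(t,y)=\hat s(y)e^{\lambda t}$ (and analogously for the other perturbations), and reduce the spectral problem to a single characteristic equation whose value at $\lambda=0$ equals $\mathcal R_{inv}^j$. First I would observe that the linearized generator is block triangular with respect to the splitting into \emph{resident} variables $(\hat s,\hat i_k,\hat r_k,\hat S_v,\hat I_v^k)$ and \emph{invader} variables $(\hat i_j,\hat r_j,\hat i_{kj},\hat i_{jk},\hat I_v^j)$: at $\mathcal E_1^k$ one has $\bar I_v^j=\bar i_j=\bar r_j=\bar i_{kj}=\bar i_{jk}=0$, so every term in the resident equations that is sourced by an invader variable carries a vanishing coefficient. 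Consequently the spectrum decomposes as a union, and the resident block is stable under $\mathcal R_0^k>1$ by arguments analogous to those in Proposition \ref{disfree_prop} applied to the single-strain endemic problem; all instability must therefore arise in the invader block.

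For the invader block, I would solve the transport PDE for $\hat i_j$ along its antibody characteristic using the linearized boundary condition $g_j(y_0,y_0)\hat i_j(y_0,y_0)=\beta_v^j(y_0)\bar s(y_0)\hat I_v^j$ (the $\bar I_v^j=0$ contribution drops out), obtaining a representation of $\hat i_j(y,y_0)$ as $\hat I_v^j$ multiplied by the kernel $\pi_j(y,y_0)/g_j(y,y_0)$ and the delay factor $e^{-\lambda\int_{y_0}^{y}da/g_j(a,y_0)}$. The same calculation applies to $\hat i_{kj}$, with $\bar s(y_0)$ replaced by the equilibrium recovered density $\bar r_k(y_0)$ from \eqref{ssEq} since the secondary boundary datum becomes $g_{kj}(y_0,y_0)\hat i_{kj}(y_0,y_0)=\beta_v^j(y_0)\bar r_k(y_0)\hat I_v^j$. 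Substituting these representations into the linearized $\dot{\hat I}_v^j$ equation of \eqref{vector_epi_lin} and cancelling $\hat I_v^j$ produces the characteristic equation $1=\Phi(\lambda)$ with
\[
\Phi(\lambda)=\frac{\bar S_v\beta_v^j}{\lambda+\mu_v}\!\left[\int_0^\infty\!\bar s(y_0)\!\int_{y_0}^\infty\!\beta_j(y,y_0)\frac{\pi_j(y,y_0)}{g_j(y,y_0)}e^{-\lambda T_j}dy\,dy_0+\int_0^\infty\!\bar r_k(y_0)\!\int_{y_0}^\infty\!\beta_{kj}(y,y_0)\frac{\pi_{kj}(y,y_0)}{g_{kj}(y,y_0)}e^{-\lambda T_{kj}}dy\,dy_0\right],
\]
where $T_\ell=T_\ell(y,y_0)=\int_{y_0}^{y}da/g_\ell(a,y_0)$. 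Using the equilibrium identities in \eqref{ssEq} together with $\mathcal R_0^k$ from \eqref{R0f} and the single-strain relation $\bar S_v\beta_v^k\!\int\!\bar s\,\mathcal F_k=\mu_v$ obtained from the $\bar I_v^k$ equation, the first summand at $\lambda=0$ collapses to $\mathcal R_0^j/\mathcal R_0^k$, while substituting the formula for $\bar r_k(y_0)$ from \eqref{ssEq} into the second summand and invoking \eqref{invH} yields $\Phi(0)=\mathcal R_{inv}^j$ exactly.

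The stability dichotomy then follows from monotonicity and a modulus estimate analogous to the endgame of Proposition \ref{disfree_prop}. The function $\Phi$ is real-analytic and strictly decreasing on $(-\mu_v,\infty)\subset\mathbb R$, with $\Phi(\lambda)\to 0$ as $\lambda\to\infty$, so if $\mathcal R_{inv}^j>1$ then $\Phi(0)>1$ produces a positive real root and $\mathcal E_1^k$ is unstable. Conversely, if $\mathcal R_{inv}^j<1$ and one supposes a root $\lambda=a+bi$ with $a\ge 0$, then the estimate $|e^{-\lambda T_\ell}|=e^{-aT_\ell}\le 1$, combined with non-negativity of every other factor inside $\Phi$, gives $1=|\Phi(\lambda)|\le\Phi(a)\le\Phi(0)=\mathcal R_{inv}^j<1$, a contradiction; hence every eigenvalue has negative real part and $\mathcal E_1^k$ is locally asymptotically stable.

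I expect the principal obstacles to be twofold. First, rigorously justifying the block-triangular decoupling at the level of the full spectrum requires some care in the infinite-dimensional setting; this should reduce, by standard semigroup arguments for transport-type PDEs (alluded to in the opening of Section \ref{secAnal}), to controlling the strongly continuous evolutions generated by each diagonal block and showing that the off-diagonal couplings are relatively compact perturbations leaving the point spectrum invariant. Second, the algebraic identification $\Phi(0)=\mathcal R_{inv}^j$ requires a Fubini interchange in the triple integral obtained after substituting the explicit form of $\bar r_k(y_0)$ (itself a weighted integral over the waning kernel $\omega_k$) against $\int\beta_{kj}\pi_{kj}/g_{kj}\,dy$; this interchange is legitimate because all integrands are non-negative, but the bookkeeping must be done carefully in the general waning case so that the result matches the definition of $\mathcal H_{kj}$ in \eqref{invH}.
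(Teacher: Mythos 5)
Your proposal follows essentially the same route as the paper: linearize about the single-strain equilibrium, insert the exponential ansatz, reduce the invader block to a scalar characteristic equation $1=\Phi(\lambda)$ whose value at $\lambda=0$ is $\mathcal R_{inv}^j$ (your $e^{-\lambda T_\ell}$ factors are exactly the paper's Laplace transforms $\mathcal L(y_0)\br{\varphi_\ell}(\lambda)$ in \eqref{chareqn0}), then get instability from $\Phi(0)>1$ together with $\Phi\to 0$ on the real axis, and stability from the modulus bound $|\Phi(a+bi)|\le\Phi(0)<1$. The identification $\Phi(0)=\mathcal R_{inv}^j$ via $\bar s(y_0)\bar S_v=N_v\Lambda(y_0)/(\mathcal R_0^k\mu)$ and the substitution of $\bar r_k$ is exactly what the paper does.

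The one place where your argument is genuinely thinner than the paper's is the resident block, i.e.\ the eigenvectors with $\hat I_v^j=0$. You assert its stability ``by arguments analogous to Proposition \ref{disfree_prop} applied to the single-strain endemic problem,'' but Proposition \ref{disfree_prop} concerns the disease-free equilibrium, where the characteristic equation contains no feedback from the perturbation $\hat s$ of the susceptible density; at the endemic equilibrium that feedback is present and the DFE modulus argument does not transfer verbatim. The paper closes this case by an explicit computation: setting $\hat I_v^j=0$ forces $\hat r_j=0$, adding the two remaining vector equations gives $\lambda(\hat S_v+\hat I_v)=-\mu_v(\hat S_v+\hat I_v)$, so either $\lambda=-\mu_v$ or $\hat S_v=-\hat I_v$, and in the latter case substitution of the endemic relations yields a quotient inequality that contradicts $\mathrm{Re}\,\lambda\ge 0$. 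You would need to supply something of this sort (or an equivalent Routh--Hurwitz-type estimate for the resident subsystem) rather than cite Proposition \ref{disfree_prop}; with that filled in, your proof matches the paper's. Your two flagged obstacles (spectral decoupling in the infinite-dimensional setting, and the Fubini interchange identifying $\Phi(0)$ with $\mathcal H_{kj}$) are real but are also passed over lightly in the paper, which works formally at the level of the separated-variables eigenvalue equations.
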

\begin{proof}
Without loss of generality, let $k=1, j=2$.  The linearized PDE system solved with separation of variables, similar to \eqref{eigEq0} reduces to the following equations for $\lambda\in\mathbb C$ and $ \hat S_v,\hat I_v^1, \hat I_v^2 \in \mathbb R_+$: 
\begin{align}
\lambda \hat S_v&=-\left(\mu_v + \bar I_v^1\int\limits_0^{\infty}\bar s(y)\beta_v^1 \mathcal F_1(y) dy\right) \hat S_v  \label{eigEq} \\
& \qquad -\bar S_v\int\limits_0^{\infty}\left[\beta_v^1(\bar s(y_0)\hat I_v^1+ \hat s(y_0) \bar I_v^1)\mathcal L(y_0)\br{\varphi_1}(\lambda) +\beta_v^2\bar s(y_0)\hat I_v^2\mathcal L(y_0)\br{\varphi_2}(\lambda) \right]dy_0 \notag \\ 
& \qquad - \hat I_v^2 \int\limits_{0}^{\infty} \left(\beta_v^2\bar r_1(y_0)\mathcal L(y_0)\br{\varphi_{12}}(\lambda)  +\beta_v^1\hat r_2(y_0)\mathcal L(y_0)\br{\varphi_{21}}(\lambda) \right)\,dy_0   \notag \\
\lambda \hat I_v^1&=-\mu_v \hat I_v^1 +\bar S_v\int\limits_0^{\infty}\beta_v^1(\bar s(y_0)\hat I_v^1+ \hat s(y_0) \bar I_v^1)\mathcal L(y_0)\br{\varphi_1}(\lambda)\,dy_0 \notag \\ 
& \qquad +\hat I_v^2 \int\limits_{0}^{\infty} \beta_v^1\hat r_2(y_0)\mathcal L(y_0)\br{\varphi_{21}}(\lambda)\,dy_0 + \bar I_v^1\int\limits_0^{\infty}\bar s(y)\beta_v^1 \mathcal F_1(y) dy \notag \\ 
\lambda \hat I_v^2&=-\mu_v \hat I_v^2 +\bar S_v\int\limits_0^{\infty}\beta_v^2\bar s(y_0)\hat I_v^2\mathcal L(y_0)\br{\varphi_2}(\lambda) \,dy_0+\hat I_v^2 \int\limits_{0}^{\infty} \beta_v^2\bar r_1(y_0)\mathcal L(y_0)\br{\varphi_{12}}(\lambda)  \,dy_0 \label{eigEq2}
\end{align}
where $\varphi_k(y,y_0)=\frac{\pi_k(y,y_0)}{g_k(y,y_0)}\beta_k(y,y_0)$, $\mathcal L(y_0)\br{\cdot}(\lambda)$ denotes Laplace transform (with additional variable $y_0$), and $$\hat s(y)=\frac{\Lambda (y)}{\lambda+\mu+\beta_v^1\bar I_v^1+\sum_k\beta_v^k\hat I_v^k}, \quad \hat r_2(y)=\frac{1}{\omega_k(y)}\int\limits_y^{\infty}e^{\int\limits_a^y \frac{\lambda+\mu+\beta_v^1\bar I_v^1}{\omega_k(s)}ds}\int\limits_{0}^{\infty}\beta_v^2\bar s(y_0)\hat I_v^2\mathcal L(y_0)\br{\varphi_2}(\lambda)dy_0da.$$ 
Assume that $\hat I_v^2>0$.  Upon plugging in equilibrium values, we use that 
$$\bar s(y_0)\bar S_v=\frac{\mu_v\Lambda(y_0)}{\beta_v^1\int_0^\infty \Lambda(y)\mathcal F_1(y) dy}=\frac{N_v\Lambda(y_0)}{\mathcal R_0^1\mu} \ .$$ The $I_v^2$ equation \eqref{eigEq2} becomes 
\begin{align}\label{chareqn0}
1 &=\frac{1}{\lambda+\mu_v} \left[\frac{N_v}{\mathcal R_0^1\mu} \int\limits_0^{\infty}\beta_v^2\Lambda(y_0)\mathcal L(y_0)\br{\varphi_2}(\lambda) \,dy_0+ \int\limits_{0}^{\infty} \beta_v^2\bar r_1(y_0)\mathcal L(y_0)\br{\varphi_{12}}(\lambda)  \,dy_0 \right]
\end{align}
This yields the characteristic equation $1=G(\lambda)$ for an eigenvalue $\lambda$ where $G(\lambda)$ is the right hand side of \eqref{chareqn0}. 
Note $G(0)=\mathcal R_{inv}^2$ and $\lim\limits_{\lambda\rightarrow\infty} G(\lambda)=0$ for $\lambda\in\mathbb R$.  Thus we readily infer that if $\mathcal R_{inv}^2>1$, then $\mathcal E_1^1$ is unstable since there exists eigenvalue $\lambda>0$ corresponding to eigenvector with $\left(\hat S_v,\hat I_v^1, \hat I_v^2\right)=(0,0,1)$.  

Now consider the case where $\mathcal R_{inv}^2<1$.  We claim that $\mathcal E_1^1$ is locally asymptotically stable. First consider the case $\hat I_v^2>0$.  Suppose by way of contradiction that there exists an eigenvalue with non-negative real part; $\lambda=a+bi, \ a\geq 0$.  Then taking modulus of \eqref{chareqn0}, we find that
\begin{align*}
1=|G(\lambda)| & \leq \frac{1}{\mu_v} \left[\frac{N_v}{\mathcal R_0^1\mu} \int\limits_0^{\infty}\beta_v^2\Lambda(y_0)|\mathcal L(y_0)\br{\varphi_2}(\lambda)| \,dy_0+ \int\limits_{0}^{\infty} \beta_v^2\bar r_1(y_0)|\mathcal L(y_0)\br{\varphi_{12}}(\lambda)|  \,dy_0 \right] \\
&\leq G(0) = \mathcal R_{inv}^2<1,
\end{align*}
which gives a contradiction.  

Next consider the case where $\hat I_v^2=0$.  This also implies that $\hat r_2=0$.  Notice that all of terms referring to strain 2 are now zero in \eqref{eigEq}, thus we drop the superscript referring to strain 1.  Adding two vector equations in \eqref{eigEq}, we obtain $\lambda (\hat S_v +\hat I_v)=-\mu_v (\hat S_v +\hat I_v)$.  Therefore if $\hat S_v \neq -\hat I_v$, then $\lambda=-\mu_v$.  So consider the case $\hat S_v = -\hat I_v$.  Suppose by way of contradiction that there exists an eigenvalue with non-negative real part; $\lambda=a+bi, \ a\geq 0$.  Then after substitution and cancellation involving the equations of \eqref{eigEq}, we obtain
\begin{align*}
\left(\lambda+\mu_v+\beta_v\bar I_v\int\limits_0^\infty\mathcal F(y)\bar s(y)dy\right)(\lambda +\beta_v\bar I_v+\mu)=(\lambda+\mu)& \beta_v\bar S_v \int\limits_0^\infty\bar s(y)\mathcal L(y)\br{\varphi}(\lambda)dy \\
\Rightarrow 1 < \frac{|\lambda +\beta_v\bar I_v+\mu|}{|\lambda +\mu|}=\frac{|\beta_v\bar S_v \int\limits_0^\infty\bar s(y)\mathcal L(y)\br{\varphi}(\lambda)dy)|}{|\lambda+\mu_v+\mathcal F\beta_v\bar S\bar I_v|} &\leq  \frac{\beta_v\int\limits_0^\infty\mathcal F(y)\bar s(y)\bar S_v dy }{\mu_v+\int\limits_0^\infty\mathcal F(y)\bar s(y)dy\beta_v\bar I_v}\\
&=\frac{\mu_v }{\mu_v+\int\limits_0^\infty\mathcal F(y)\bar s(y)dy\beta_v\bar I_v} <1
\end{align*}
This yields a contradiction.  
\end{proof}

 \subsection{Coexistence equilibrium} 
 The complexity of the model challenges explicit formulation and conditions for a coexistence equilibrium.  However, general equations of two variables for coexistence equilibria can be derived, which reduced to a single variable equation in the case of symmetric strains.  Furthermore, when there is no waning ($\omega\equiv 0$), a quadratic equation determines coexistence equilibria.  Thus we first consider the case of no waning.
 
 \emph{Case: No waning} \\ Recall the host to vector \emph{force of infection} quantities $\mathcal G_{k}$ and $\mathcal H_{kj}$ given by \eqref{invG} and \eqref{invHnoW}.
Let $\bar x_k=\beta_v \bar I_v^k$.  Then the following equations for $\bar x_k, \ k=1,2$ can be derived for a coexistence equilibrium (denoted by $\mathcal E_c$):
 \begin{align} 
 \frac{\mu_v}{\beta_v^k}&=\left(\frac{\bar x_j\mathcal H_{kj}}{\mu+\bar x_j} + \frac{\mathcal G_k}{\mu+\bar x_1+\bar x_2} \right) \left(N_v-\frac{\bar x_1}{\beta_v^1}-\frac{\bar x_2}{\beta_v^2}\right), \quad j\neq k, \label{genCoexEq} \\
 \Rightarrow &  \frac{\bar x_2\mathcal H_{21}}{\mu+\bar x_1}+\frac{\mathcal G_1}{\mu+\bar x_1+\bar x_2}= \frac{\bar x_1\mathcal H_{12}}{\mu+\bar x_2}+\frac{\mathcal G_2}{\mu+\bar x_1+\bar x_2}\label{genCoexEq2}
 \end{align} 

Indeed obtain the following equations for $\mathcal E_c$ from the model:
 \begin{align*}
 \bar S_v^k&=N_v-\sum_k \bar x_k, \quad \bar s(y)=\frac{\Lambda (y)}{\mu+\sum_k \bar x_k}, \\
 \bar i_k(y,y_0)&=\bar s(y_0)\bar x_k \frac{\pi_k(y,y_0)}{g_k(y,y_0)}, \quad \bar r_k(y)=\frac{x_k}{\mu+x_j}\int_{0}^{\infty}\gamma_k(y,y_0)\bar s(y_0) \frac{\pi_k(y,y_0)}{g_k(y,y_0)}dy_0, \\
\bar i_{kj}(y,y_0)&=\beta_v\bar r_k(y_0)\bar I_v^j \frac{\pi_{kj}(y,y_0)}{g_{kj}(y,y_0)}, \\
\frac{\mu_v\bar x_k}{\beta_v^k}&= \bar S_v\bar x_k\int\limits_{0}^{\infty}\mathcal F_k(y_0)\bar s(y_0) \,dy_0 + \bar S_v \frac{\bar x_k\bar x_j}{\mu+\bar x_k}\int\limits_{0}^{\infty}\mathcal F_{jk}(y_0)\tilde s_k(y_0) \,dy_0, \\
\frac{\mu_v}{\beta_v^k}&= \left(N_v-\frac{\bar x_1}{\beta_v^1}-\frac{\bar x_2}{\beta_v^2}\right)\left(\int\limits_{0}^{\infty}\mathcal F_k(y_0)\Lambda(y_0) \,dy_0 + \frac{(\mu+\bar x_1+\bar x_2)\bar x_j}{\mu+\bar x_k}\int\limits_{0}^{\infty}\mathcal F_{jk}(y_0)\tilde s_j(y_0) \,dy_0\right),
\end{align*}
where $$\mathcal F_{kj}(y_0)= \int\limits_{y_0}^{\infty}\frac{\pi_{kj}(y,y_0)}{g_{kj}(y,y_0)}\beta_{kj}(y,y_0)\,dy, \quad \tilde s_k(y_0)=\int\limits_{0}^{\infty}\frac{\pi_k(y_0,z)}{g_k(y_0,z)}\Lambda(z)\gamma_k(y_0,z)\,dz. $$
The equations \eqref{genCoexEq} and \eqref{genCoexEq2} follow from the definition of $\mathcal G_{k}$ and $\mathcal H_{kj}$.

 Under general parameters, the equations for the coexistence equilibrium yield a quadratic equation for $\bar x_2$ in terms of $\bar x_1$, and therefore it does not reduce to a polynomial equation in the single variable $\bar x_1$.  Thus, for tractability, we consider the case of identical strains and constant vector to host transmission rate, i.e. $\beta_v=\beta_{v}^1=\beta_{v}^2, \mathcal G=\mathcal G_1=\mathcal G_2, \mathcal H=\mathcal H_{12}=\mathcal H_{21}$ .    Then the equation \eqref{genCoexEq2} simplifies to
 $(\bar x_2-\bar x_1)(\mu+\bar x_1+\bar x_2)=0$, which implies that $\bar x_1=\bar x_2$.  From \eqref{genCoexEq}, we obtain the following quadratic equation for $\bar x= \bar x_k=\beta_v \bar I_v^k$:
 \begin{align}\label{CoexEq}
2\mu_v\left((\mathcal G+\mathcal H)+\mu_v\right)x^2+\left[\mu_v\mu(3\mu_v+2\mathcal G)-\Lambda_v\beta_v(\mathcal G+\mathcal H)\right]x+(\mu_v\mu)^2\left(1-\mathcal R_0\right) =0
 \end{align} 
We cannot rule out the existence two positive (subthreshold) coexistence equilibria when $\mathcal R_0 \ (:=\max_k(\mathcal R_0^k)) \ <1$, known as \emph{backward bifurcation} \cite{gulbudak2013forward}.  However, we can preclude existence of subthreshold equilibria if the second (linear) coefficient of the quadratic \eqref{CoexEq} is positive.   For instance, the following result follows from Proposition \ref{disfree_prop} and equation \eqref{CoexEq}:
\begin{proposition}\label{coex_prop}
Consider the case of symmetric strains, constant vector-host transmission and no waning, i.e.  $\mathcal R_0=\mathcal R_k$, $\beta_v^k(y)=\beta_v$ and $\omega_k\equiv 0$, $k=1,2$.  Furthermore, assume that $\mathcal H\leq\mathcal G$  and $\Lambda_v\beta_v<2\mu_v\mu$.  If $\mathcal R_0<1$, then there are no endemic equilibria.  If $\mathcal R_0>1$, then (in addition to existence of single-strain equilibria $\mathcal E_k$) there exists a unique coexistence equilibrium, $\mathcal E_c$.
\end{proposition}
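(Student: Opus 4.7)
The plan is to reduce the coexistence problem to a single scalar quadratic in the common value $\bar x:=\beta_v\bar I_v^1=\beta_v\bar I_v^2$ and then analyze its roots according to the sign of $\mathcal R_0-1$. First, the symmetry assumptions collapse equation \eqref{genCoexEq2} to $\bar x_1(\mu+\bar x_1)=\bar x_2(\mu+\bar x_2)$, which factors as $(\bar x_2-\bar x_1)(\mu+\bar x_1+\bar x_2)=0$; non-negativity forces $\bar x_1=\bar x_2=:\bar x$. Substituting back into \eqref{genCoexEq} yields exactly \eqref{CoexEq}, which I write as $P(\bar x):=a\bar x^2+b\bar x+c=0$ with $a=2\mu_v(\mu_v+\mathcal G+\mathcal H)>0$, $b=\mu_v\mu(3\mu_v+2\mathcal G)-\Lambda_v\beta_v(\mathcal G+\mathcal H)$, and $c=(\mu_v\mu)^2(1-\mathcal R_0)$; here $\mathcal R_0=\Lambda_v\beta_v\mathcal G/(\mu\mu_v^2)$ by \eqref{R0f}--\eqref{invG}.

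For $\mathcal R_0<1$, $c>0$, and I would rule out positive coexistence by showing $b>0$: then Vieta's formulas force either a non-real conjugate pair or two roots of negative real part. Use $\mathcal H\leq\mathcal G$ to bound $\Lambda_v\beta_v(\mathcal G+\mathcal H)\leq 2\Lambda_v\beta_v\mathcal G$, and apply $\mathcal R_0<1$ (equivalently $\Lambda_v\beta_v\mathcal G<\mu\mu_v^2$) to conclude $b>\mu_v\mu(\mu_v+2\mathcal G)>0$. The hypothesis $\Lambda_v\beta_v<2\mu_v\mu$ furnishes a complementary direct bound $\Lambda_v\beta_v(\mathcal G+\mathcal H)<2\mu_v\mu(\mathcal G+\mathcal H)$, which removes any reliance on the sharp $\mathcal R_0$-inequality in the regime where $\mathcal G$ is large. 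Combined with Proposition~\ref{disfree_prop}, which rules out single-strain boundary equilibria when $\mathcal R_0<1$, only the DFE survives.

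For $\mathcal R_0>1$, $c<0$ and $a>0$, so the discriminant $b^2-4ac\geq -4ac>0$ and the roots of $P$ are real with product $c/a<0$; exactly one is positive, giving a unique coexistence candidate $\bar x$. To certify feasibility I would verify $\bar S_v=N_v-2\bar x/\beta_v>0$, equivalently $\bar x<x^*:=\Lambda_v\beta_v/(2\mu_v)$, by evaluating $P(x^*)$: the contributions involving $\mathcal G+\mathcal H$ cancel between $a(x^*)^2$ and $bx^*$, and substituting $\mathcal R_0=\Lambda_v\beta_v\mathcal G/(\mu\mu_v^2)$ into $c$ removes the remaining $\mathcal G$-dependence, leaving
\begin{equation*}
P(x^*)=\frac{1}{2}(\Lambda_v\beta_v)^2+\frac{3}{2}\Lambda_v\beta_v\mu\mu_v+(\mu_v\mu)^2>0.
\end{equation*}
Since $P$ opens upward with $P(0)=c<0$, its unique positive root lies strictly below $x^*$, so $\bar S_v>0$; the remaining components $\bar s,\bar i_k,\bar r_k,\bar i_{kj}$ of $\mathcal E_c$ are non-negative by inspection of the closed forms preceding \eqref{genCoexEq}.

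The principal obstacle is the middle-coefficient sign analysis in the subcritical case, which is precisely where a backward bifurcation producing two subthreshold positive equilibria would ordinarily appear; eliminating that possibility relies on $\mathcal H\leq\mathcal G$ to prevent the secondary-infection feedback from overpowering the primary-infection feedback. The supercritical case is structurally easier since the sign flip of $c$ automatically gives a single positive root, and only the short admissibility computation $P(x^*)>0$ is needed to close the argument.
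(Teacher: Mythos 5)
Your argument is correct and follows essentially the same route as the paper: symmetry in \eqref{genCoexEq2} forces $\bar x_1=\bar x_2$, and the sign pattern of the coefficients of the quadratic \eqref{CoexEq} (with $\mathcal H\le\mathcal G$ and the sign of $\mathcal R_0-1$ controlling the linear and constant terms) settles both cases, while your extra verification that $P(\Lambda_v\beta_v/(2\mu_v))>0$, which guarantees $\bar S_v>0$ for the positive root, is a worthwhile detail the paper leaves implicit. One small quibble: your parenthetical claim that $\Lambda_v\beta_v<2\mu_v\mu$ alone secures $b>0$ ``in the regime where $\mathcal G$ is large'' is backwards --- that direct bound only yields $b>\mu_v\mu(3\mu_v-2\mathcal G)$, which helps when $\mathcal G$ is small --- but this does not affect the proof, since your main estimate combining $\mathcal H\le\mathcal G$ with $\Lambda_v\beta_v\mathcal G<\mu\mu_v^2$ already gives $b>\mu_v\mu(\mu_v+2\mathcal G)>0$ in the subcritical case.
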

Note that the condition $\mathcal H\leq\mathcal G$ in the hypothesis can be interpreted as ``secondary'' force of infection is less than primary.

A coexistence equilibrium under the condition of symmetric strains, denoted by $\mathcal E_c$, has the following components:
\begin{align}
\bar I_v^k &= \bar x/\beta_v, \qquad \text{where } \bar x \ \text{is root of \eqref{CoexEq}}, \notag \\
\bar S_v^k&=N_v-\bar I_v^k, \quad \bar s(y)=\frac{\Lambda (y)}{\mu+\beta_v\sum_k\bar I_v^k}, \quad \bar i_k(y,y_0)=\beta_v\bar s(y_0)\bar I_v^k \frac{\pi_k(y,y_0)}{g_k(y,y_0)}, \notag \\ \bar r_k(y)&=\int\limits_{0}^{\infty}\frac{\gamma_k(y,y_0)\bar i_k(y,y_0)}{\mu+\beta_v^j\bar I_v^j} dy_0,  \label{rk_coex}\\
\bar i_{kj}(y,y_0)&=\beta_v\bar r_k(y_0)\bar I_v^j \frac{\pi_{kj}(y,y_0)}{g_{kj}(y,y_0)}, \notag
\end{align}
where the quantities $\mathcal G_k$ and $\mathcal H_{kj}$, given by \eqref{invG} and \eqref{invHnoW}, are identical for strains $k,j$.  

 \emph{Case: Waning} \\
In the case of continuous waning, there is no analytical solution for the coexistence equilibria.  Indeed, even for symmetric strains, the equations become transcendental, shown below:
 \begin{align}
 \bar r_k(y)&=\frac{1}{\omega_k(y)}\int\limits_y^{\infty}e^{\int\limits_a^y \frac{\mu+\bar x_j}{\omega_k(s)}ds} \int\limits_0^{\infty} \gamma_k(a,y_0) \bar i_k(a,y_0) dy_0 da \notag  \\
 \frac{1}{\beta_v^k}&= \bar S_v\int\limits_{0}^{\infty}\mathcal F_k(y_0)\bar s(y_0) \,dy_0 + \bar S_v \bar x_j\int\limits_{0}^{\infty}\frac{1}{\omega_j(y_0)}\int\limits_{y_0}^{\infty}e^{\int\limits_a^{y_0} \frac{\mu+\bar x_k}{\omega_j(s)}ds}\mathcal F_{jk}(y_0)\tilde s_j(a) \,da \,dy_0   \notag \\
\frac{\mu_v}{\beta_v^k}&= \left(N_v-\frac{\bar x_1}{\beta_v^1}-\frac{\bar x_2}{\beta_v^2}\right)\left(\frac{\mathcal G_k}{\mu+\bar x_1+\bar x_2} +\bar x_j\int\limits_{0}^{\infty}\frac{\mathcal F_{jk}(y_0)}{\omega_j(y_0)}\int\limits_{y_0}^{\infty}e^{\int\limits_a^{y_0} \frac{\mu+\bar x_k}{\omega_j(s)}ds}\tilde s_j(a) \,da \,dy_0\right), \quad k\neq j. \label{eqForm}
\end{align}
In this case, we are also not able to prove that a coexistence equilibrium, $\mathcal E_c$, of symmetric strains must have equal components ($\bar x_1=\bar x_2$), as shown when there is no waning.  The above equations do allow for numerical approximation of roots, which we will perform for examples considered in Section \ref{SecEx} in the presence of waning.

\section{Numerical Scheme}\label{Numerical}
We develop a finite difference scheme combined with a ODE solver in MatLab in order to numerically solve the coupled immuno-epidemiological model.  To simulate the coupled system, first consider the within-host model ODE for relevant ranges of pre-existent antibody levels of susceptible hosts, $y_1\leq y_0 \leq y_{M_0}$, and time since infection $0\leq \tau\leq \tau_{end}$.  Here $y_0$ supplies the variable initial condition in within-host system \eqref{Secondary}, where other initial conditions $x_0,z_0$ are fixed.  Note that the distinct strains may have different parameter values in \eqref{Secondary}, in which case the ODE simulations must be conducted for each strain.  The mesh chosen for the interval $[y_1,y_{M_0}]$ can have equal or variable step size with $M_0$ mesh-points.  The output of the ODE solver is the solution vector, denoted here $\left(\phi(\tau_k;y_m)\right)_{k=1}^{N_0}$, where $\left(\tau_k\right)_{k=1}^{N_0}$ is a partition of $0\leq \tau\leq \tau_{end}$ for each $y_m, \ m=1,\dots, M_0$.  We utilize MatLab solver ODE45, which adaptively chooses the time partition and interpolates at time points $\left(\tau_k\right)_{k=1}^{N_0}$.  Consider  the state variable $y(\tau;y_m)$ giving antibody level $y$ during primary infection.  It is possible to consider the partitions $\left(y(\tau_k;y_m)\right)_{k=1}^{N_0}$ for each $m=1,\dots, M_0$. For small $M_0$, for instance the case of susceptible point distribution ($M_0=1$), this method of partitioning increases speed and is equivalent to transforming the infected host antibody level, $y$, to time since infection $\tau$, similar to the approach in \cite{gandolfi2015epidemic}.  However for $M_0>1$, the number of stored $y$ meshpoints will rapidly increase with $M_0$, and thus it is advantageous to utilize a ``global'' partition $\left( y_{\ell}\right)_{\ell=1}^{M}$ which contains as a sub-partition the initial mesh $y_1,\dots,y_{M_0}$ and covers all necessary stored antibody variables.  Then we can interpolate the pathogen and specific antibody, $x$ and $z$, as functions of $y$ onto this global mesh $\left( y_{\ell}\right)_{\ell=1}^{M}$, in order to compute linking functions $\gamma(y,y_0), \beta(y,y_0), g_k(y,y_0)$ at each (reachable) grid point $(y_{\ell},y_m)$, $1\leq \ell \leq M, 1\leq m \leq M_0$.  The same logic is utilized for secondary infection, along with numerical integration for reproduction numbers and equilibria values.

For the epidemiological model, we approximate solutions to the antibody-level structured PDE vector-host model with the stored within-host calculations.  Let $0\leq t\leq T$ be the time interval of interest and $\left\{t_n\right\}_{n=1}^N$ be a partition of $0\leq t\leq T$ with fixed time step $\de t=T/N$.  In the following, we denote the time iteration, $n$, in the superscript of state variables and the antibody levels as function arguments, e.g. $I_v^{k,n}$, $i^{n+1}_{k}(y_{\ell},y_{m})$.  For clarity, we utilize antibody variables $(y_{\ell},y_m)$ in function arguments, but note that the outer state variables and linking parameters are computed at (reachable) grid points in within-host part of numerical scheme.  The numerical algorithm for approximating solutions at times $t_n, \ n=1,\dots,N$, is the following:
 \begin{align*}
\mathcal I_k^n=\sum_{m=1}^{M_0} \sum_{\ell=1}^M &\beta_{k}(y_{\ell+1},y_{m+1})  i_{k}^n(y_{\ell+1},y_m) \de y_\ell \de y_m, \quad  \mathcal I_{j k}^n=\sum_{i=1}^M \sum_{\ell=1}^M \beta_{j k}(y_{i+1},y_{\ell+1})  i_{jk}^n(y_{i+1},y_{\ell+1})  \de y_i \de y_\ell,\\ S_v^{n+1}&=\frac{ S_v^{n}+\Lambda_v\de t}{1+\de t\big(\mu_v+\sum\limits_{\substack{k,j=1 \\ k\neq j}}^2 \mathcal I_k^n+\mathcal I_{jk}^n\big)},\qquad
I_v^{k,n+1}=\frac{ I_v^{k,n}+\de t S_v^{n+1}(\mathcal I_k^n+\mathcal I_{jk}^n)}{1+\de t \mu_v} \\
s^{n+1}(y_{m})&=\frac{s^{n}(y_{m})+\de t \Lambda(y_m)}{1+\de t\left(\sum\limits_{k=1}^2\beta_{v}^k(y_m) I_v^{k,n+1} + \mu \right)}, \\ i^{n+1}_k(y_m,y_m)&=\frac{\beta_{v}^k(y_m)}{g_{k}(y_{m},y_{m})} I_v^{k,n+1} s^{n+1}(y_{m}), \\
i^{n+1}_{k}(y_{\ell+1},y_{m})&=\frac{m^{n}_{k}(y_{\ell+1},y_{m+1})+\frac{\de t}{\de y_\ell}g_{k}(y_{\ell},y_{m})i^{n+1}_{k}(y_{\ell},y_{m})}{1+\de t\left(\frac{g_{k}(y_{\ell+1},y_{m})}{\de y_\ell}+\gamma_{k}(y_{\ell+1},y_{m})+ \mu \right)}, \\
  r^{n+1}_{k}(y_{M-\ell})&=\frac{ r^{n}_{k}(y_{M-\ell})+\de t\left(\frac{\omega_{k}(y_{M-\ell})}{\de y_{M-\ell}}r^{n+1} _{k}(y_{M-\ell+1})+\sum\limits_{i=1}^M \gamma_{k}(y_{M-\ell},y_i)i^{n+1}_{k}(y_{M-\ell},y_i)\de y_i \right)}{1+\de t\left(\frac{\omega_{k}(y_{M-\ell+1})}{\de y_{M-\ell}}+\beta_v^2(y_{M-\ell})I_v^{2,n+1}+ \mu \right)}, \\
  i^{n+1}_{jk}(y_{\ell},y_{\ell})&=\frac{\beta_{v}^k(y_{\ell})}{g_{jk}(y_{\ell},y_{\ell})} I_v^{k,n+1} r_j^{n+1}(y_{\ell}), \\ i^{n+1}_{jk}(y_{i+1},y_{\ell})&=\frac{i^{n}_{jk}(y_{i+1},y_{\ell+1})+\frac{\de t}{\de y_i}g_{jk}(y_{i},y_{\ell})i^{n+1}_{k}(y_{i},y_{\ell})}{1+\de t\left(\frac{g_{jk}(y_{i+1},y_{\ell})}{\de y_i}+\gamma_{jk}(y_{i+1},y_{\ell})+ \mu \right)}, 
\end{align*}
where $i,\ell,m$  denote index for discretized antibody level during secondary infection, during and after recovery of primary infection, and before primary infection, respectively, with $1\leq i\leq M, 1\leq \ell \leq M, 1\leq m \leq M_0$ being the ranges of allowed antibody level, and $j,k$ denote distinct strains.  Note that we utilize an implicit-explicit approximation mixture in the above finite difference scheme.  In particular, the calculation procedure allows for implicit terms involving components that have already been updated, thereby gaining advantages of an implicit form without having to pay the computational price of matrix inversion (which is common with implicit schemes).  We remark that an explicit-implicit mixture approach has been used for the approximation of size-structured models \cite{ackleh2014structured}. In the appendix, we present several numerical tests of rates of convergence for the algorithm described here.


\section{Epidemiological Implications} \label{SecEx}
In this section, we consider examples and potential implications for vaccination, utilizing formulas established in Section \ref{secAnal} and the numerical scheme developed in Section \ref{Numerical}.  First, we point out an important severity measure for dengue; the prevalence of DHF.
Define the measure of DHF in the population of strain-$k$ infected individuals by the following: \begin{align}\mathcal D_k(t)=\int\limits_{0}^{\infty}\int\limits_{y_0}^{\infty}\mathds{1}_{\left\{x(\tau)>V_c\right\}} \left(i_k(t,y,y_0)+i_{jk}(t,y,y_0)\right) \,dy \,dy_0=\int\limits_{\tilde y_l}^{\tilde y_u} \int\limits_{y_l}^{y_u} \left(i_k(t,y,y_0)+i_{jk}(t,y,y_0)\right) \,dy \,dy_0, \label{dhfC}
\end{align}
where the constant $V_c$ is a threshold critical lower bound such that if viral load during infection, $x(\tau)$, is above $V_c$, the patient will experience DHF.   The right hand side of the above equation reflects that this will translate into a particular range of antibody level, $y_l\leq y \leq y_u$, and a particular window of pre-existent antibody level, $\tilde y_l\leq  y_0 \leq \tilde y_u$, which will precipitate DHF.  A related measure which can be useful is the incidence of individuals who will experience DHF (by strain-$k$) at time $t$ predicated on their initial antibody level at time of infection:
\begin{align}
\mathcal I_k(t)=I_v(t)\int\limits_{\tilde y_l}^{\tilde y_u} \beta_v^k(y_0) \left(s(t,y_0)+r_{j}(t,y_0)\right) \,dy_0, \label{iik}
\end{align}
In the following examples, we perform numerical simulations utilizing the derived equilibria formulae and numerical scheme for solutions of the model.  In order to simplify the model for numerical validation, we assume symmetric strains.  We fix parameter values, except for waning rate ($\rho(y)$), vector-host transmission rate ($\beta_v(y)$) and susceptible recruitment rate ($\Lambda(y)$), in order to compare simulations for cases with or without waning, temporary cross-immunity, and a distribution of susceptible antibody level. 
\subsection{No Waning}
    \begin{figure}[t!]
\subfigure[][]{\label{fig3a}\includegraphics[width=7.35cm,height=4cm]{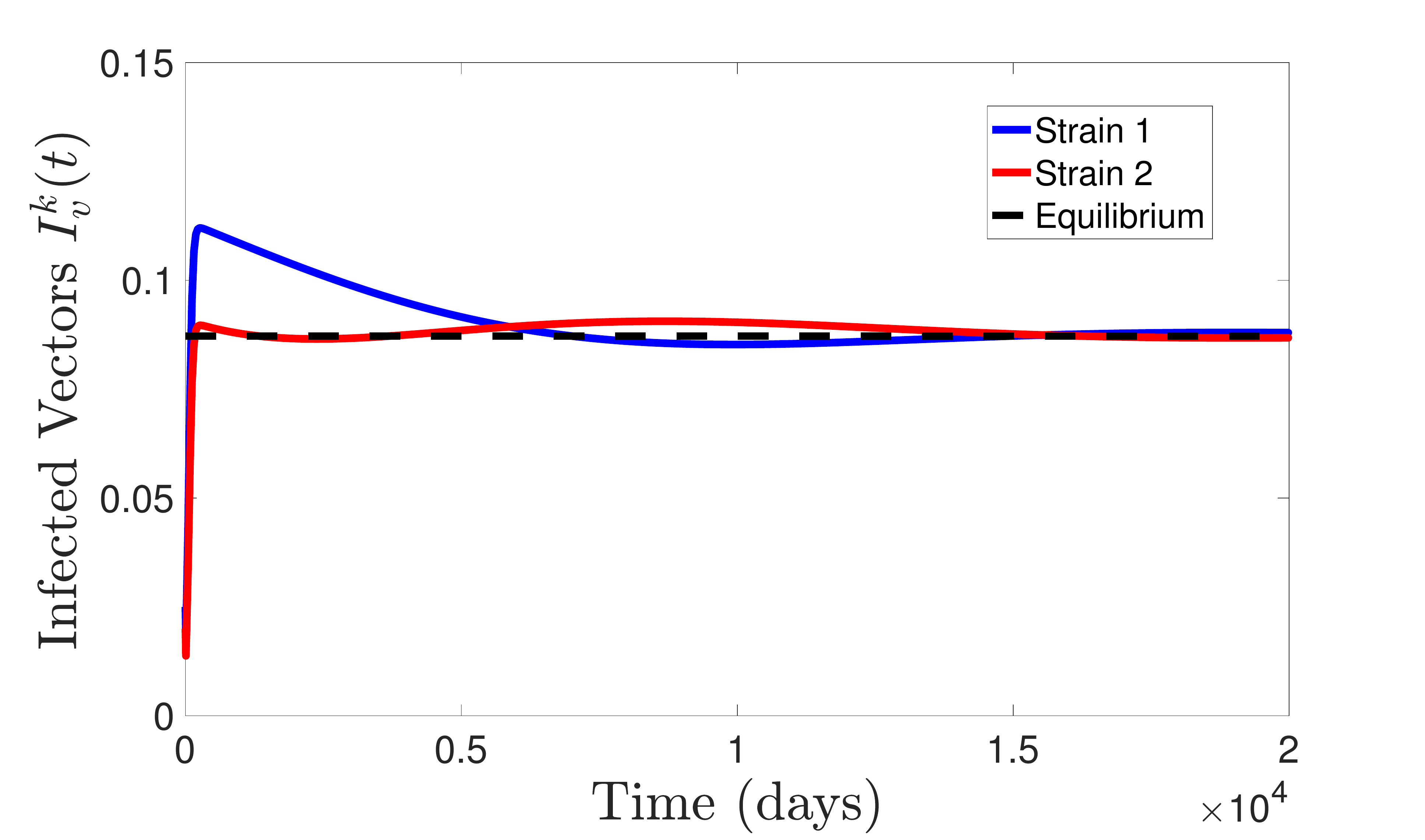}}
\subfigure[][]{\label{fig3b}\includegraphics[width=7.35cm,height=4cm]{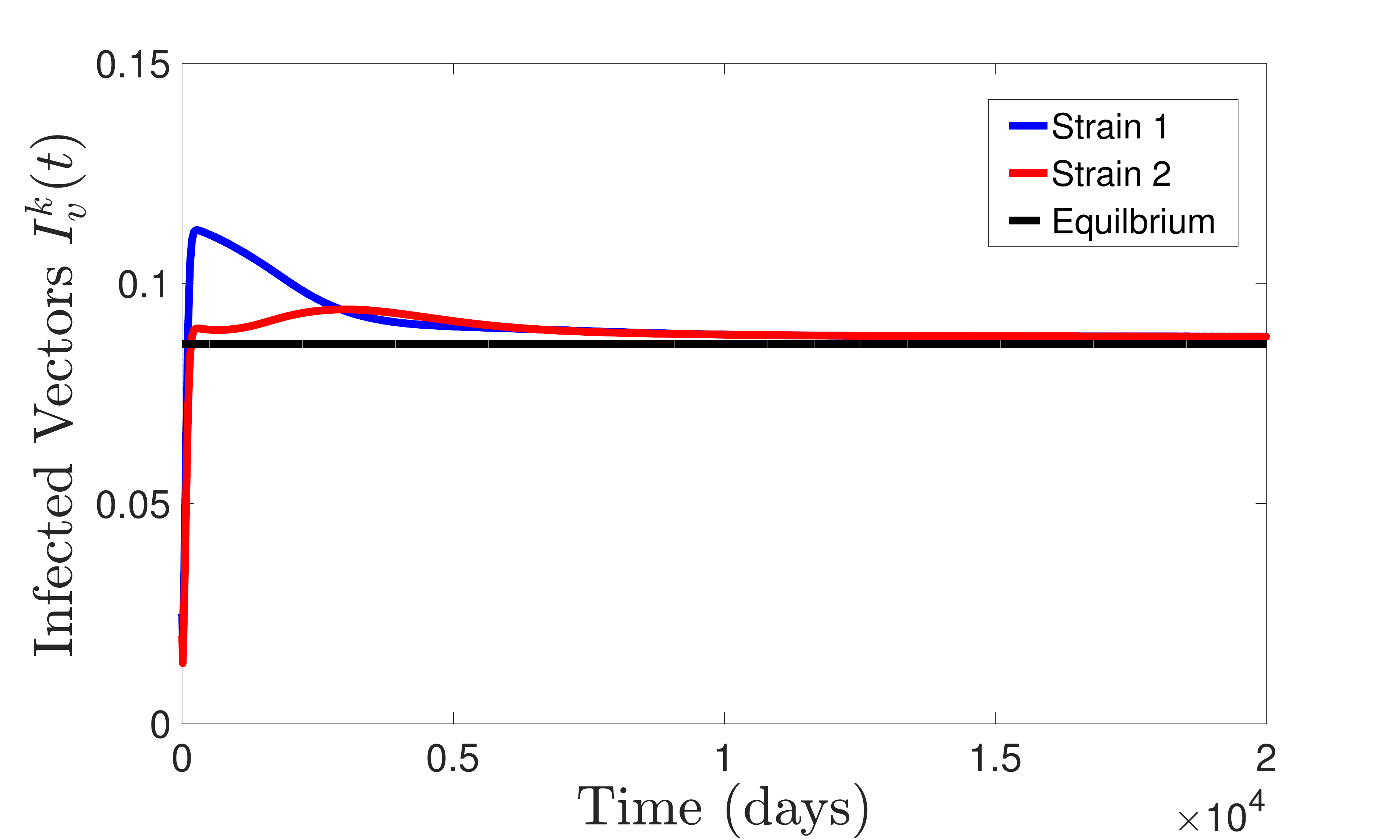}} \\
\subfigure[][]{\label{fig3c}\includegraphics[width=7.35cm,height=4cm]{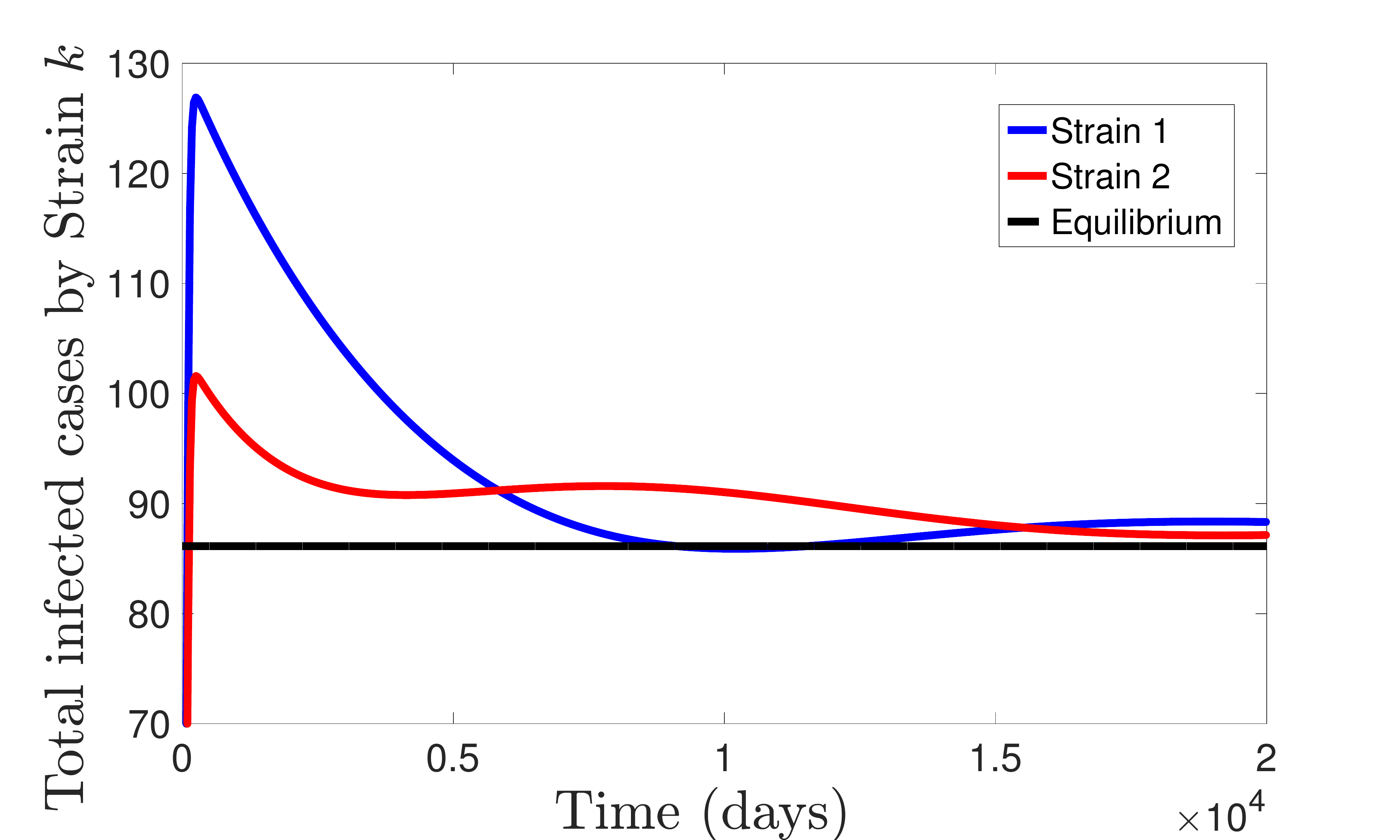}}
\subfigure[][]{\label{fig3d}\includegraphics[width=7.35cm,height=4cm]{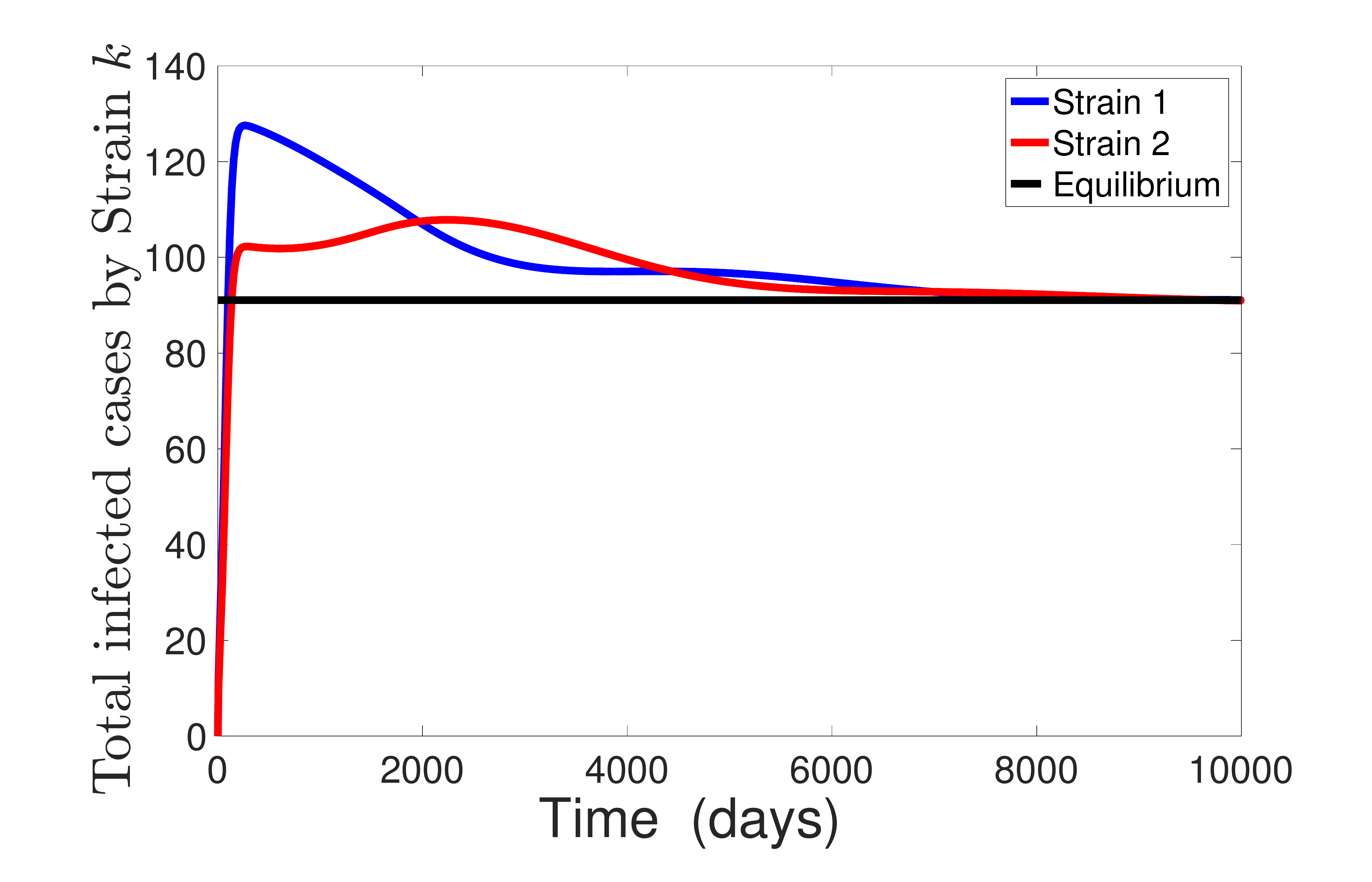}} \\
\subfigure[][]{\label{fig3e}\includegraphics[width=7.35cm,height=4cm]{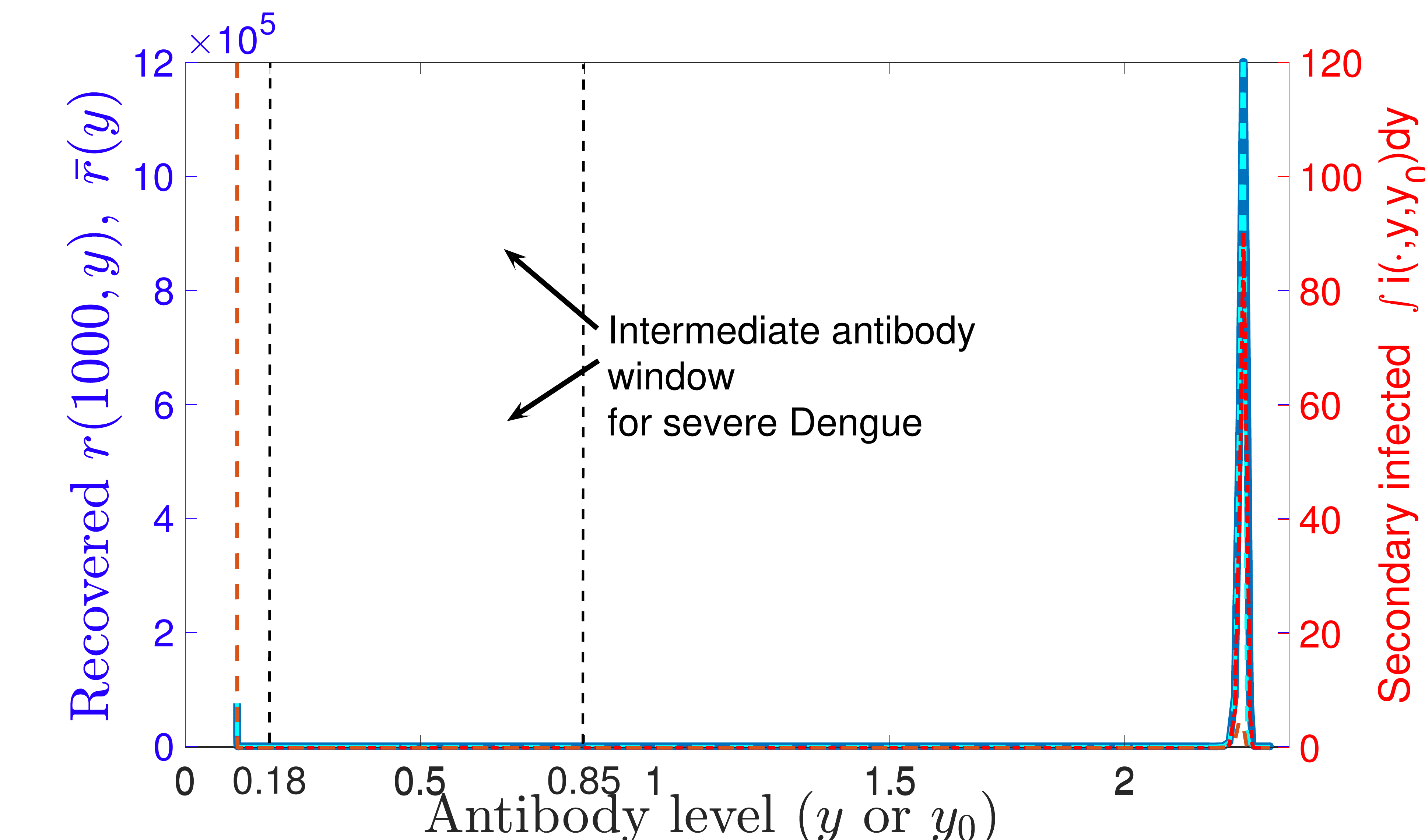}}
\subfigure[][]{\label{fig3f}\includegraphics[width=7.35cm,height=4cm]{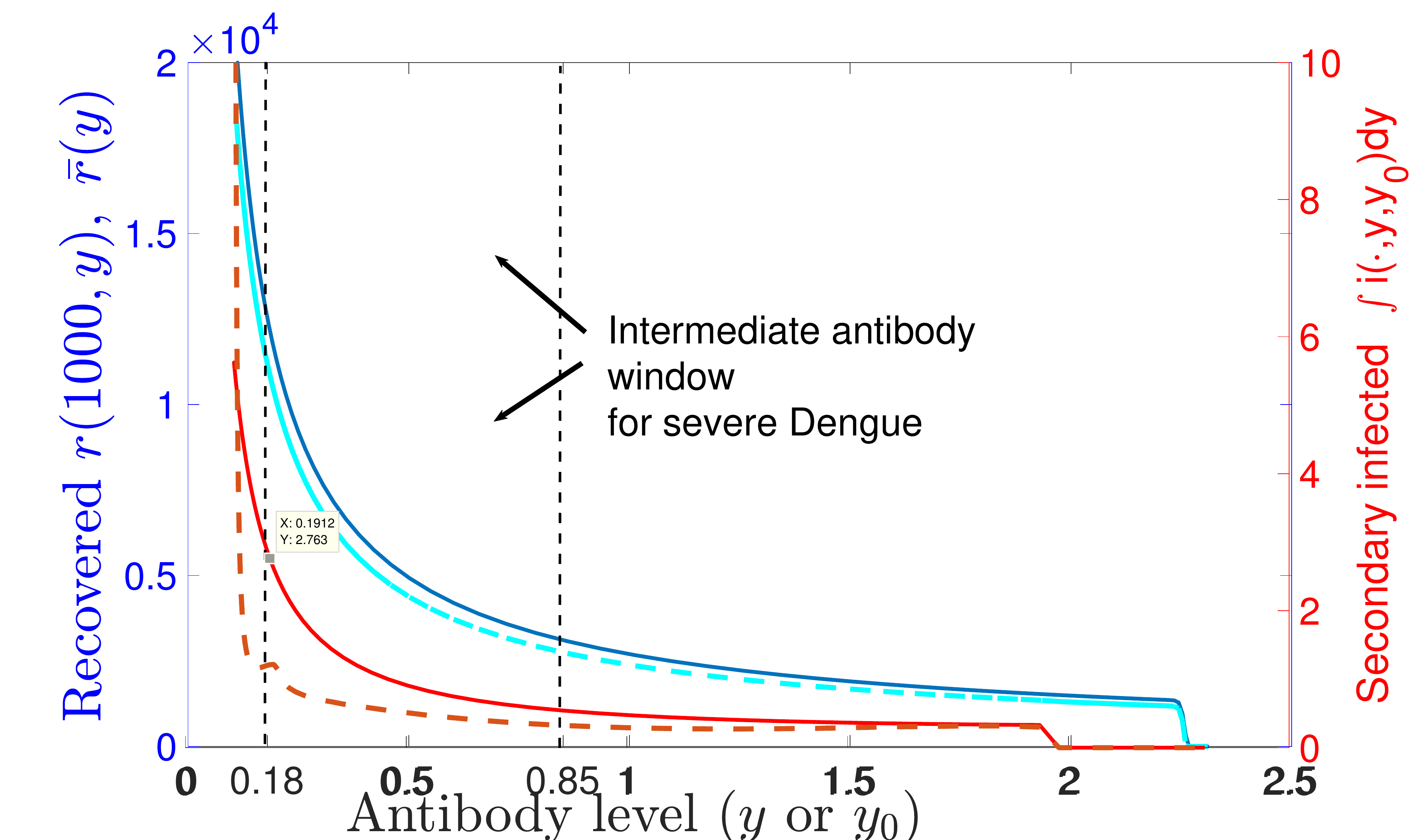}} 
\caption{Numerical Simulations for the case of no waning (a),(c),(e), and for case of waning and temporary cross-immunity (b),(d),(f).  Here, we utilize a point distribution for susceptible antibody level  $\Lambda(y)=\Lambda \delta(y_s), S(0,y)=S_0\delta(y_s)$, with $y_s=0.11$, along with all other within-host and linking parameters as in Figs \ref{fig1} and \ref{fig2}.  The demographic parameters are $\Lambda=100, \Lambda_v=0.02, \mu=1/(10\times 365),\mu_v=1/20$.  The vector-host transmission rate $\beta_v(y)$ is constant with $\beta_v=0.00025$ in (a,c,e) and piecewise constant given by \eqref{tempci} with $\beta_v=0.00025, y_p=2$ in (b,d,f).  The waning rate in (a,c,e) is $\omega(y)=0$ and is given by \eqref{expwane} with $\xi=0.002$, $y_c=0.02$.  Additionally, we let the viral load threshold for infected individual to experience DHF in \eqref{dhfC} to be $V_c=30$, which generates the vertical dashed lines in (e), (f) giving the antibody level window of DHF risk.} 
  \label{fig3}
  \end{figure}
In the first example, we consider the case where there is no waning, i.e. $\omega_k(y)\equiv 0$.  Although cross-reactive antibody levels are thought to decline after primary, the absence of waning may be a reasonable approximation for Dengue endemic regions.  In particular, some studies have found antibody levels to be stable because of continual exposure to Dengue providing boosting of immunity \cite{katzelnick2016neutralizing}.  The explicit inclusion of boosting through exposure to virus as in \cite{barbarossa2015immuno} would significantly complicate the model and is beyond the scope of the current paper.  Here, we assume constant vector-host transmission rate ($\beta_v(y)\equiv \beta_v$) and susceptible point distribution, i.e. $\Lambda(y)=\Lambda \delta(y_s), s(0,y)=S_0\delta(y_s)$.  Observe Figures \ref{fig3a}, \ref{fig3c}, \ref{fig3e} displaying simulations of time-dependent solutions of infected vectors and hosts, and the final time distribution of recovered and secondary infected individuals with respect to antibody level $y$ and $y_0$, respectively.  Calculations of equilibria are also displayed, and note the long time for convergence, along with some numerical error.  In this case without waning, the vast majority of recovered individuals have a large cross-reactive antibody level, whereby secondary exposure leads to mild infection.


\subsection{Waning and temporary cross-immunity}
For examples in this and the next section, we consider particular forms of waning, $\omega_k(y)$ and antibody dependent vector-host transmission rate $\beta_v^k(y)$, motivated by epidemiological observations. 
We utilize the form of waning given $\omega_k(y)=\xi (y-y_c)$, which is consistent with a lower bound $y_c$ and exponential decline of antibodies as formulated in equation \eqref{expwane}.
 In this case, inserting \eqref{expwane} into \eqref{ssEq}, we find the (strain-$k$) recovered equilibrium component
\begin{align}
 \bar r_k(y)&=\frac{(y-y_c)^{\frac{\mu}{\xi}-1}}{\xi}\int\limits_y^{\infty}(a-y_c)^{-\frac{\mu}{\xi}}\int\limits_{y_c}^{\infty} \gamma_k(a,y_0) \bar i_k(a,y_0) dy_0  da, \notag \\
 \text{with}  \ \ \int\limits_{y_c}^{\infty} \bar r_k(y)&= \frac{1}{\mu} \int\limits_{y_c}^{\infty} \int\limits_{0}^{\infty}\gamma_k(y,y_0)\bar i_k(y,y_0)dy_0 dy.  \label{recwaneT}
\end{align}
Note the the total amount of recovered individuals obtained by the integration above \eqref{recwaneT} is precisely the number at the equilibrium with no waning \eqref{invHnoW}.   In the instance of recovery occurring at a constant level of antibody $y_*$ \eqref{recov3}, we obtain
\begin{align}
 \bar r_k(y)=\frac{(y-y_c)^{\frac{\mu}{\xi}-1}}{\xi}(y_*-y_c)^{-\frac{\mu}{\xi}}\int\limits_{0}^{\infty} \bar i_k(y_*,y_0) dy_0  .
 \label{recwaneT2}
\end{align}
Observe that recovered individuals at the lower bound of antibody level satisfies $\bar r_k(y_c)=0$ when $\mu>\xi$ (natural death rate greater than waning rate), and $\bar r_k(y_c)=+\infty$ when $\mu<\xi$, however the total amount of recovered individuals stays finite by \eqref{recwaneT}.   

Furthermore, as mentioned previously, primary Dengue infection induces a period of temporary cross-immunity.  The simplest way to include this feature is to assume that 
\begin{align}
\beta_v^k(y)=\beta_v^k\mathds{1}_{\left\{y<y_p\right\}}=\begin{cases}  0 & y\geq y_p \\ \beta_v^k & y<y_p \end{cases}, \label{tempci}
\end{align} 
so that the vector to host transmission rate is a piecewise function where there exists a threshold antibody level $y_p$ providing complete protection above it.  In this case, let $c_k:=r_k \mathds{1}_{\left\{y\geq y_p\right\}}$ and $s_k:=r_k\mathds{1}_{\left\{y<y_p\right\}}$, denoting density of recovered individuals with cross-immunity and susceptible to secondary infection, respectively.  From \eqref{host_epi}, we derive 
\begin{align}\label{host_epiW2}
\frac{\partial c_k(t,y)}{\partial t}- \frac{\partial (\omega_k(y) c_k(t,y))}{\partial y} &=\mathds{1}_{\left\{y\geq y_p\right\}}\int_0^\infty \gamma_k(y,y_0)i_k(t,y,y_0)dy_0-\mu r_k(t,y),   \\
\frac{\partial s_k(t,y)}{\partial t} - \frac{\partial (\omega_k(y) s_k(t,y))}{\partial y}&= \mathds{1}_{\left\{y< y_p\right\}}\int_0^\infty \gamma_k(y,y_0)i_k(t,y,y_0)dy_0-\mu s_k(t,y)- \beta_v^j(y) s_k(t,y)I_v^j(t),  \notag \\
s_k(t,y_p)&= c_k(t,y_p), \qquad \omega_k(y_c) s_k(t,y_c)= \lim_{y\rightarrow\infty}\omega_k(y)c_k(t,y)=0.
\end{align}

Previous results requiring constant vector-host transmission $\beta_v^k$, in particular Theorem \ref{invThm}, Proposition \ref{coex_prop} and explicit equilibria formulas \eqref{ssEq} and \eqref{rk_coex}, can be extended to this case of piecewise constant $\beta_v^k(y)$.  To see this first note that for integrals in the formulae for $\mathcal R_k, \mathcal R_{inv}^k$ and equilibria, the upper limit of integration $y_p$ will appear (in place of $\infty$) wherever $\beta_v^k(y)$ appears.  Also for $\mathcal R_{inv}^k$, $\bar r_k(y)$ can be replaced by $\bar s_k(y)$, which makes the arguments in proof of Theorem \ref{invThm} work for this case of piecewise constant $\beta_v^k(y)$.   Furthermore for the symmetric coexistence equilibrium \eqref{rk_coex}, the recovered equation formulas will be altered as follows:
\begin{align*}   
\bar c_k(y)&=\frac{1}{\omega_k(y)}\int\limits_y^{\infty}\exp{\left(\int\limits_a^y \frac{\mu}{\omega_k(s)}ds\right)} \int\limits_{0}^{y_p} \gamma_k(a,y_0) \bar i_k(a,y_0) dy_0 da, \quad y\geq y_p  \\
\bar s_k(y)&=\frac{1}{\omega_k(y)}\int\limits_y^{y_p}  \exp{\left(\int\limits_a^y \frac{\mu+\beta_v^j\bar I_v^j}{\omega_k(s)}ds\right)} \int\limits_0^{y_p}  \gamma_k(a,y_0) \bar i_k(a,y_0) dy_0da + \frac{\omega_k(y_p)}{\omega_k(y)}\bar c_k(y_p)\exp{\left(\int\limits_{y_p}^y \frac{\mu+\beta_v^j\bar I_v^j}{\omega_k(s)}ds\right)}, 
\end{align*}
where $y < y_p$ for the domain of $\bar s_k(y)$.  The secondary vector-host force infection at equilibrium then depends upon $\bar s_k(y)$.  The component $r_k(y)$ in boundary equilibrium remains as is in \eqref{ssEq} (with upper limit of integration $y_p$), as can be seen in the above formula when removing secondary infection ($I_v^j=0$), where $r_k=c_k+s_k$.

In numerical simulations, we first consider the case of susceptible point distribution ($\Lambda(y)=\Lambda \delta(y_s), S(0,y)=S_0\delta(y_s)$).  We compute time dependent solutions from initial conditions corresponding to outbreak initiation, which are shown in Figures \ref{fig3b}, \ref{fig3d}, \ref{fig3f}.  Observe how waning and temporary cross-immunity shape the distribution of recovered and secondary infected individuals with respect to $y$ in this example, as opposed to the previous case with constant vector-host transmission and no waning.  In particular, the waning allows for secondary infected cases in the window of antibody level causing DHF, resulting in around $2$ DHF cases per time unit instead of zero in previous case without waning.  Note that since there is only one susceptible antibody level, the variable step size partition of antibody level $y$ (equivalent to transforming $y$ to time-since-infection $\tau$ with fixed step size in $\tau$) is advantageous for reducing error, as described in Section \ref{Numerical}.  In contrast for the next simulation where $\Lambda(y)$ is a distribution, we utilize the fixed antibody level ($\Delta y$) step size partition combined with interpolation of the within-host ODE solver output on to this partition (see Section \ref{Numerical}). 

\subsection{Heterogeneity among susceptible antibody level}
\begin{figure}[t]
\subfigure[][]{\label{fig2bh_b}\includegraphics[width=7.4cm,height=4cm]{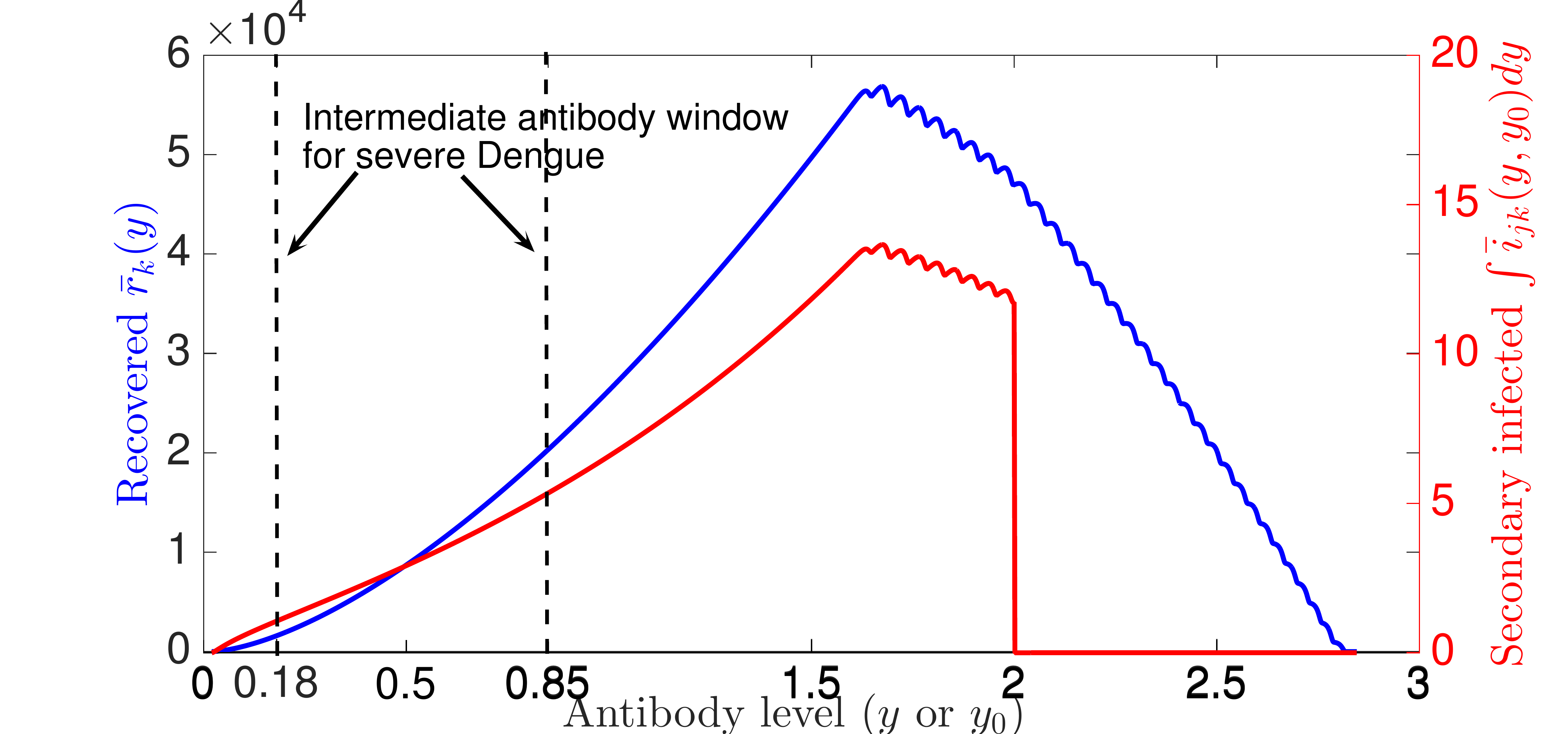}}
\subfigure[][]{\label{fig2bh_a}\includegraphics[width=7.4cm,height=4cm]{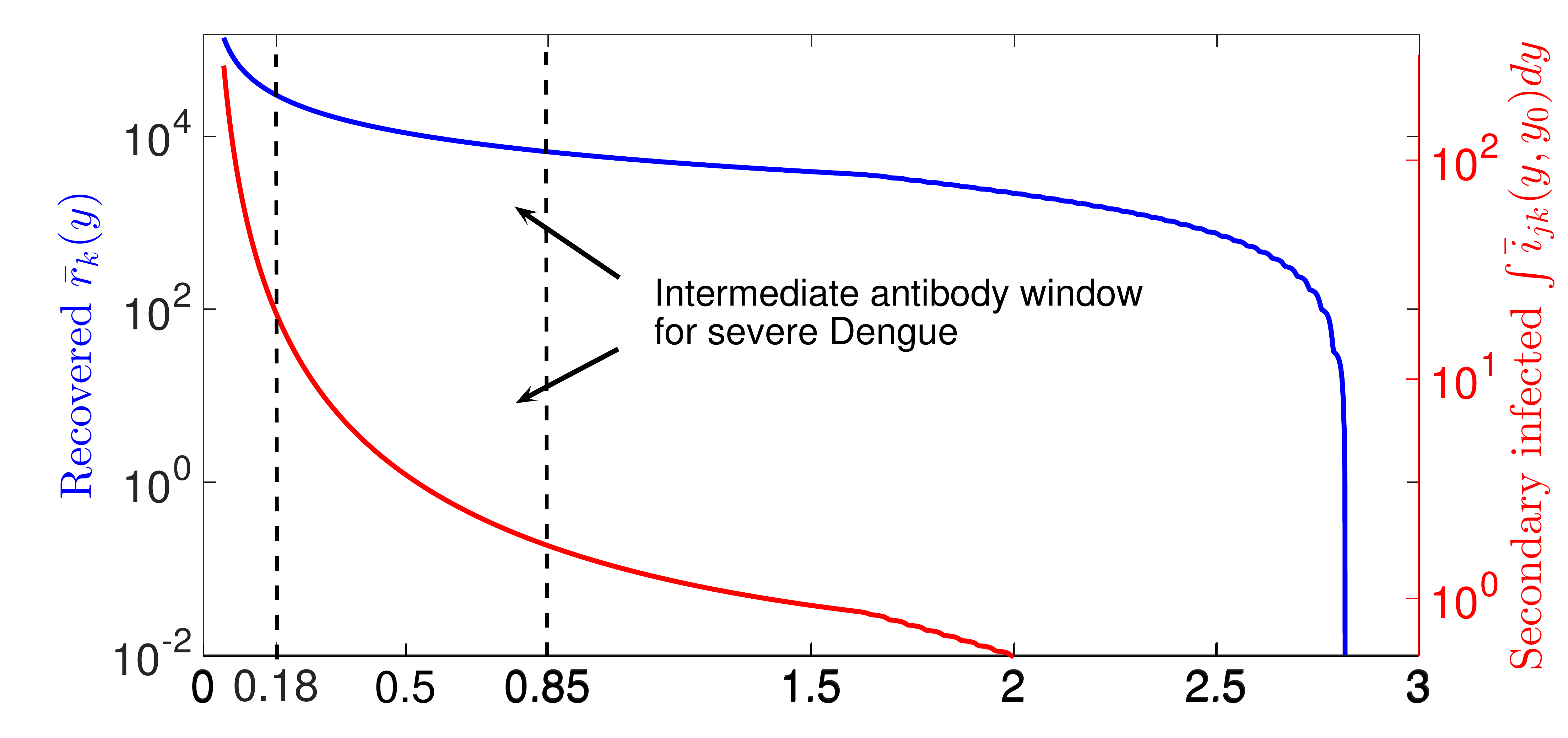}} 
\caption{\emph{Equilibria recovered host density distribution (blue curve) and total secondary infected cases (red curve) with respect to initial antibody size $y_0$ for distinct cases,} when waning rate is constant $\xi$, the transmission rate $\beta_v(y)$  is constant (describing temp. cross immunity), $\mu>\xi$, recovery rate is chosen as: $\gamma(y,y_0)=e^{-\rho x(\tau)} \mathds{1}_{\left\{f_k(\tau)<0\right\}}$, and susceptible antibody distribution $\bar s(y_0)$ is truncated normal. (a) Death rate larger than waning rate ($\mu>\xi$) leads to recovered distribution $\bar r(y)$ to go to zero as $y\searrow y_c$, (b) On the other hand if $\mu<\xi$, $\bar r(y)$ goes to $\infty$ as $y\searrow y_c$.  Other parameters, besides $\xi,\gamma(\cdot),\mu$ are same as in Fig. \ref{fig4}. }
  \label{fig2bh}
  \end{figure}
 In this section, for the (initial) distribution of susceptible antibody level given by $\Lambda(y)$, we choose the symmetric truncated normal distribution with support on the interval $[y_m-\alpha,y_m+\alpha]$ (insert in Figure \ref{fig6a}).  First, in Fig. \ref{fig2bh}, we display calculations of equilibrium recovered and secondary infected distributions for two different cases of waning rates ($\xi$), $\mu>\xi$ and $\mu<\xi$ ($\mu$ is host death rate), displaying different limiting behavior as antibody level $y$ approaches lower bound $y_c$.  Note that we analytically derived this limit dichotomy in \eqref{recwaneT2}, and larger waning ($\mu<\xi$) corresponds to larger accumulation of individuals in DHF risk window before $\bar r_1(y)\rightarrow \infty$ as $y\searrow y_c$ in this case.  
 
 Next we perform simulations utilizing the finite difference and multi-scale method outlined in Section \ref{Numerical}.  The numerical solutions (shown in Figure \ref{fig4}) are computationally more expensive due to the distribution (with width $2\alpha$) of susceptible antibody level.  Comparing to the previous case of susceptible point distribution at $y_m$, the number of DHF cases increases to eventually around 8 per unit time, however this may partially be due to the altered numerical algorithm which utilizes interpolation and fixed step size $\de y$.  Observe how the current and pre-existent antibody levels in recovered and secondary infected populations, respectively, evolve with time after initial outbreak in Fig. \ref{fig4c} and \ref{fig4d}.  In particular, individuals with pre-existent susceptible antibody levels recover with a certain boosted antibody level offering temporary cross-immunity until waning spreads recovered individuals' antibody levels to intermediate levels at risk of DHF with accumulation at smaller antibody titres for the chosen parameter regime, resulting in secondary and DHF cases.

\begin{figure}[t!]
\subfigure[][]{\label{fig4a}\includegraphics[width=8cm,height=4cm]{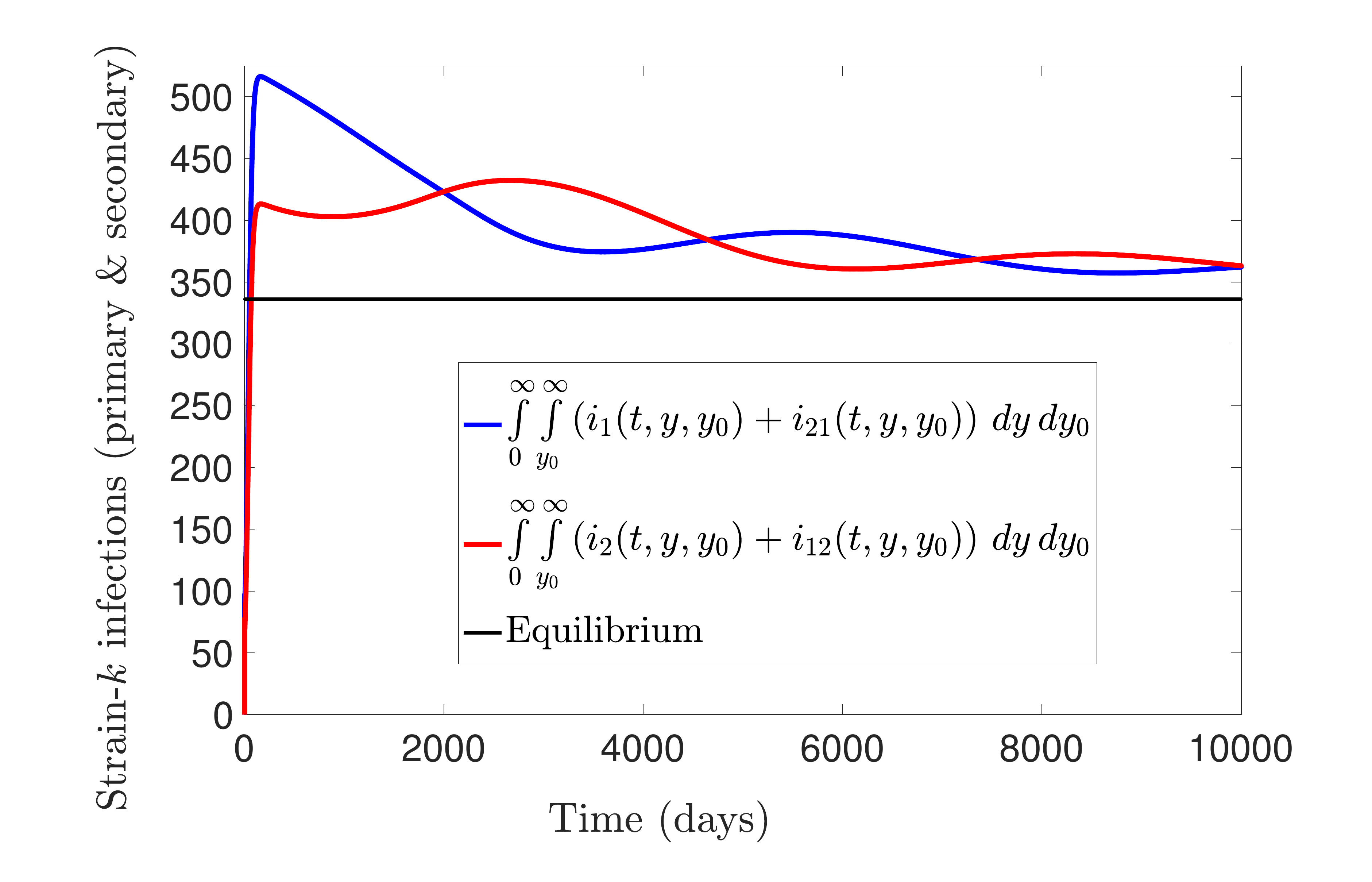}}
\subfigure[][]{\label{fig4b}\includegraphics[width=7.5cm,height=4cm]{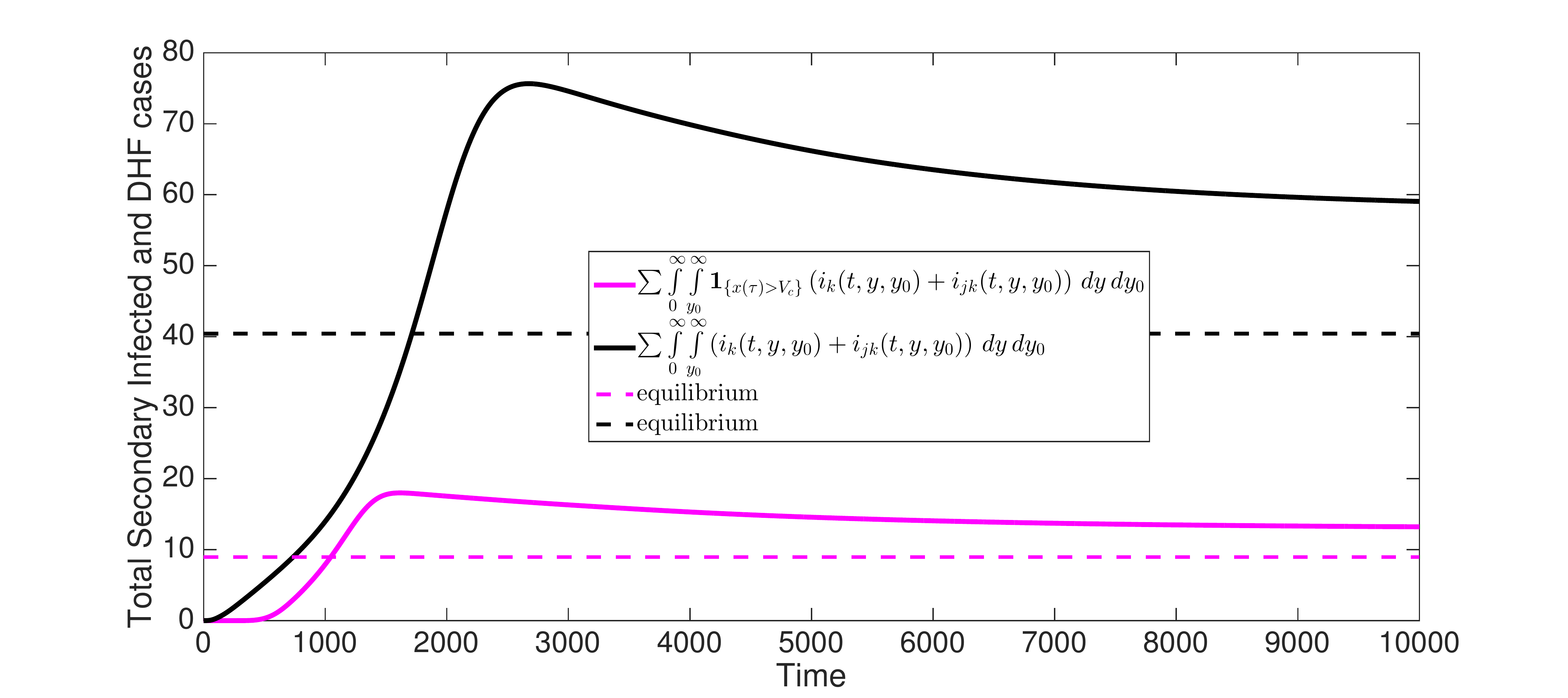}} \\
\subfigure[][]{\label{fig4c}\includegraphics[width=7.5cm,height=4cm]{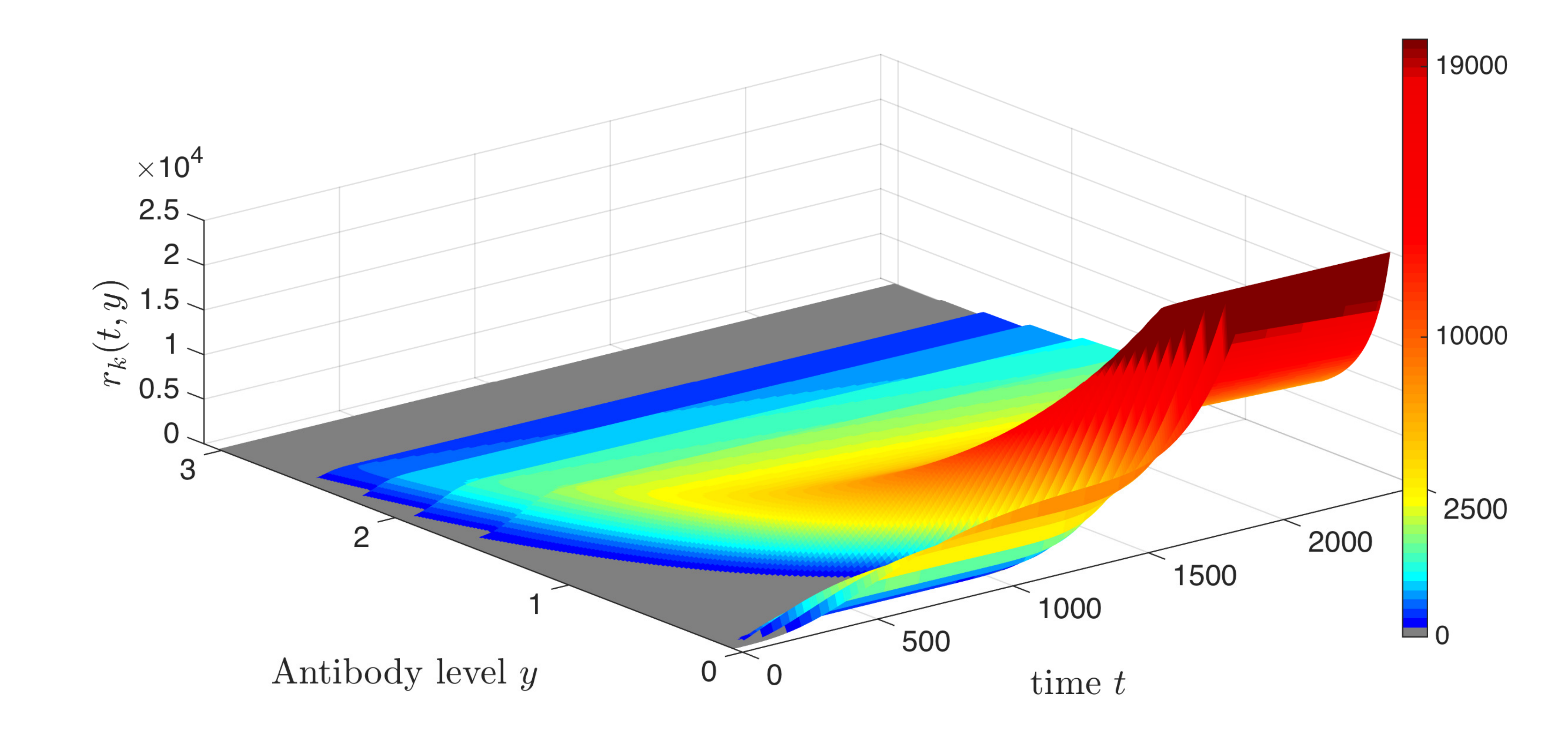}}
\subfigure[][]{\label{fig4d}\includegraphics[width=7.5cm,height=4cm]{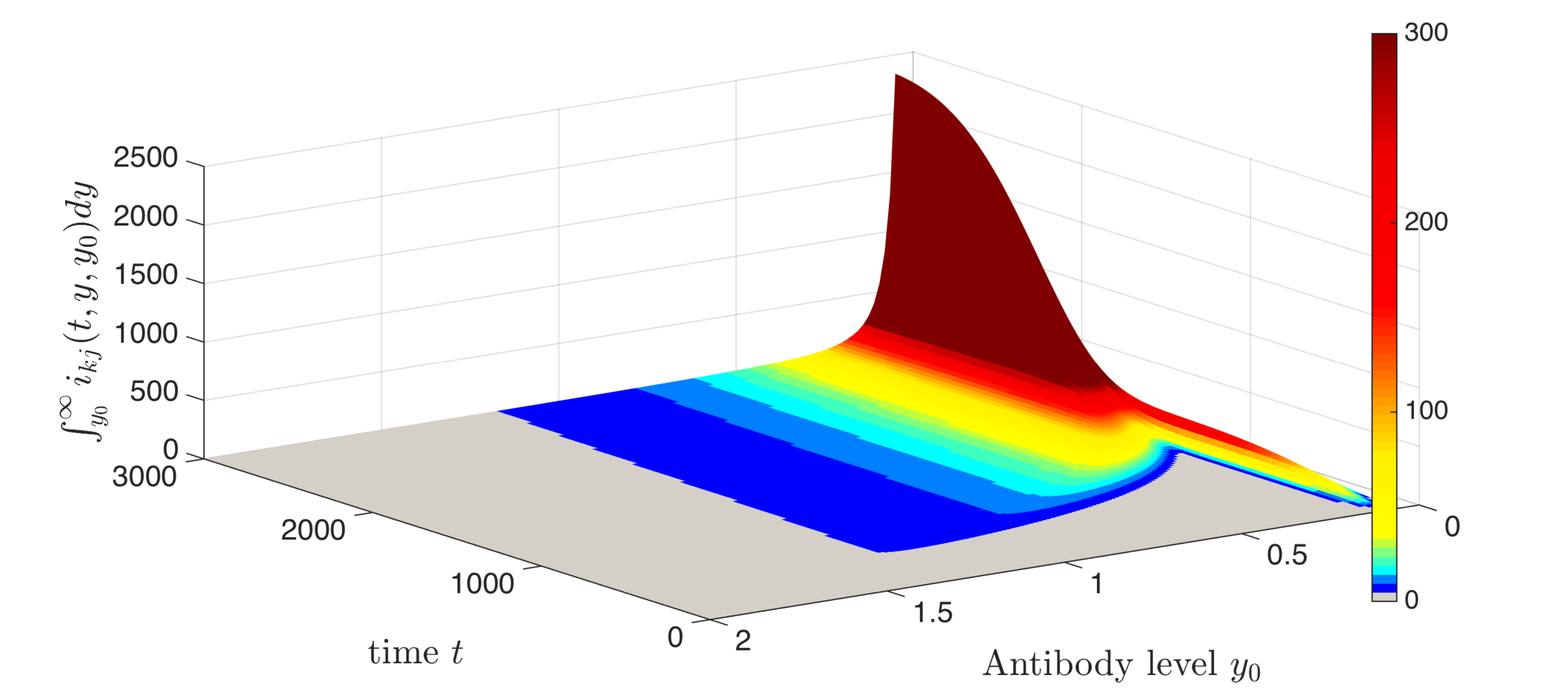}} 
\caption{Numerical simulations in the case of susceptible antibody distribution as symmetric truncated normal having support on $[y_m-\alpha,y_m+\alpha]$ (Fig. \ref{fig6a}) where $y_m=0.11$ and $\alpha=0.03$, along with initial conditions $I_v^1(0)=0.025$, $I_v^2(0)=0.02$ and other components starting at (disease-free equilibrium) $\mathcal E_0$ (outbreak scenario). (a) Total infected by strain $k$ versus time; (b) Secondary infected and DHF cases versus time; (c) Density of (strain 1) recovered individuals with respect to antibody level $y$, $r_1(t,y)$, evolving with time $t$; (d) density of secondary infected individuals with pre-existent antibody level $y_0$ (imported from population $r_1(t,y_0)$ at time of infection) as function of $t$.} 
  \label{fig4}
  \end{figure}

For the last set of numerical simulations, we consider how shifting the mean antibody level in a naive susceptible population affects disease incidence, especially DHF cases, upon introduction of both Dengue strains.  In this way, we investigate how population level antibody level can determine risk of severe dengue infection.  In particular, this can help explore the consequences of a Dengue vaccination boosting cross-reactive antibody levels.  As before, consider the distribution of susceptible antibody level (given by $\Lambda(y)$) as the symmetric truncated normal distribution with support on the interval $[y_m-\alpha,y_m+\alpha]$ (Fig. \ref{fig6a}).  Fixing $\alpha$, we vary the mean antibody level $y_m$ and calculate reproductive numbers, along with (severe) DHF cases at coexistence equilibrium $\mathcal E_c$ and initial DHF transmission level.  

Observe that $\mathcal R_0$ is unimodal with respect to $y_m$, initially increasing, peaking at a relatively small antibody level and then decaying as the population antibody level increases further (Fig. \ref{fig6a}), as opposed to invasion reproduction number $\mathcal R_{inv}$ which is decreasing (Fig. \ref{fig6c}).  The initial rise in $\mathcal R_0$ follows from the host-vector transmission, as found in Fig. \ref{fig2b}, caused by increasing within-host peak viral load from ADE.  However, decreasing infectious period also occurs with the increasing viral load leading $\mathcal R_0$ to peak at smaller antibody level than the observed absolute maximum in DHF cases (Fig. \ref{fig6a}).  Here the total DHF cases, calculated at equilibrium $\mathcal E_c$ in the immuno-epidemiological model for each $y_m$, bears some resemblance the unimodal shape of peak viral load on the within-host scale (Fig. \ref{fig1c}), but there are more ``across-scale'' forces to decipher.  In particular, first there is a local maximum caused by an overall peak in cases at maximal $\mathcal R_0$.  Next, population antibody levels rise to the intermediate window where ADE causes severe primary infection leading to a second (much larger) peak in DHF cases. However the decreasing $\mathcal R_0$ applies an opposing dampening force on total equilibrium DHF cases, and thus the absolute maximum in DHF cases occurs around $y_m=0.2$, whereas the maximum peak viral load on the within-host scale (Fig. \ref{fig1c}) occurs around $y_0=0.5$.  The sharp rise in DHF cases as a function of $y_m$ to the second maximum is caused by primary infections, as the DHF cases from secondary infections (Fig. \ref{fig6e}) is relatively small and peaks around $y_m=0.1$ for the chosen parameters.  

 Even though DHF cases at $\mathcal E_c$ decreases to zero at around $y_m=0.52$ when $\mathcal R_0$ becomes less than one, the rate of initial DHF cases per infected vector in a susceptible population is large and at maximum value for all $y_m\in [\tilde y_{\ell}+\alpha,\tilde y_u-\alpha]$ (when $\Lambda(y)$ has support in the window of antibody level causing DHF), which is centered around $y_m=0.5$, as shown in Fig. \ref{fig6f}.  Here we measure this rate of initial DHF cases by $\frac{\beta_v}{\mu_v}\int_{\tilde y_l}^{\tilde y_u} \Lambda(y)dy$.  Thus, there can be a conflict between reduction or even elimination/prevention of dengue cases by a cross-reactive antibody boosting vaccine administered to a naive population and a (possibly temporary) rise in DHF cases.  Since DHF cases can cause death and should be avoided, an effective Dengue vaccine would need to boost antibody levels of vaccinated individuals to a sufficiently large level.  This can be achieved by increased vaccine efficacy or only vaccinating individuals with large enough pre-existent antibody levels (from prior exposure to Dengue) that can be boosted past the intermediate window of DHF risk.

        \begin{figure}[t!]
    \centering  \subfigure[]{\label{fig6a}\includegraphics[width=10cm,height=5.2cm]{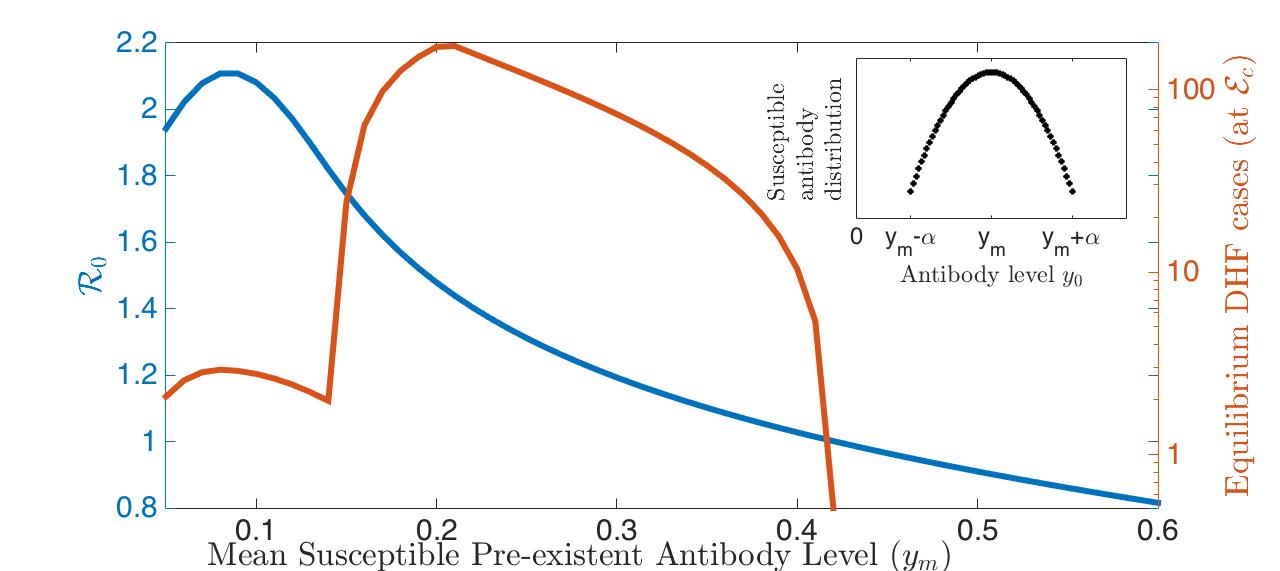}} \\
\subfigure[][]{\label{fig6c}\includegraphics[width=4.7cm,height=4cm]{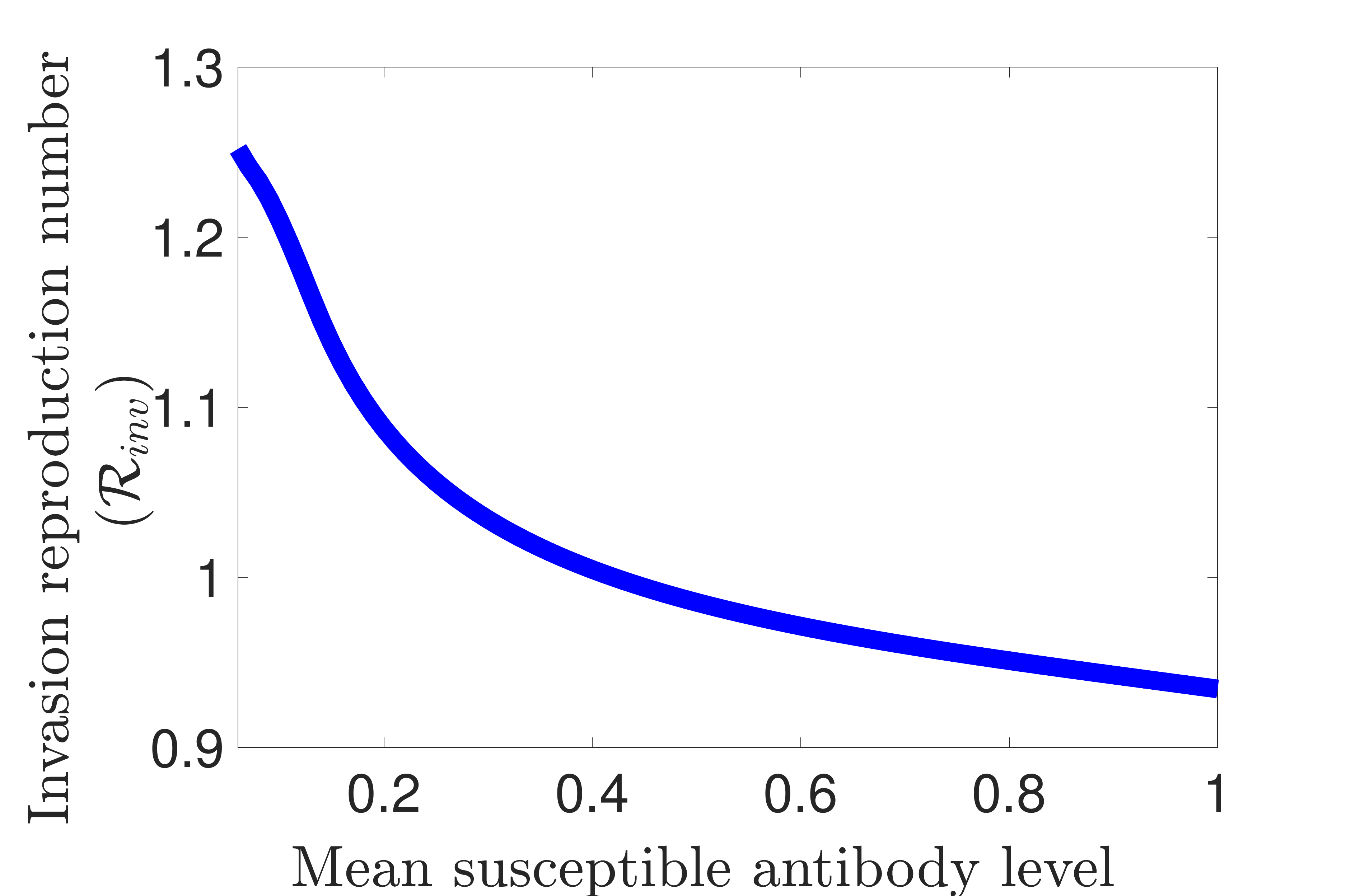}}
\subfigure[][]{\label{fig6e}\includegraphics[width=4.7cm,height=4cm]{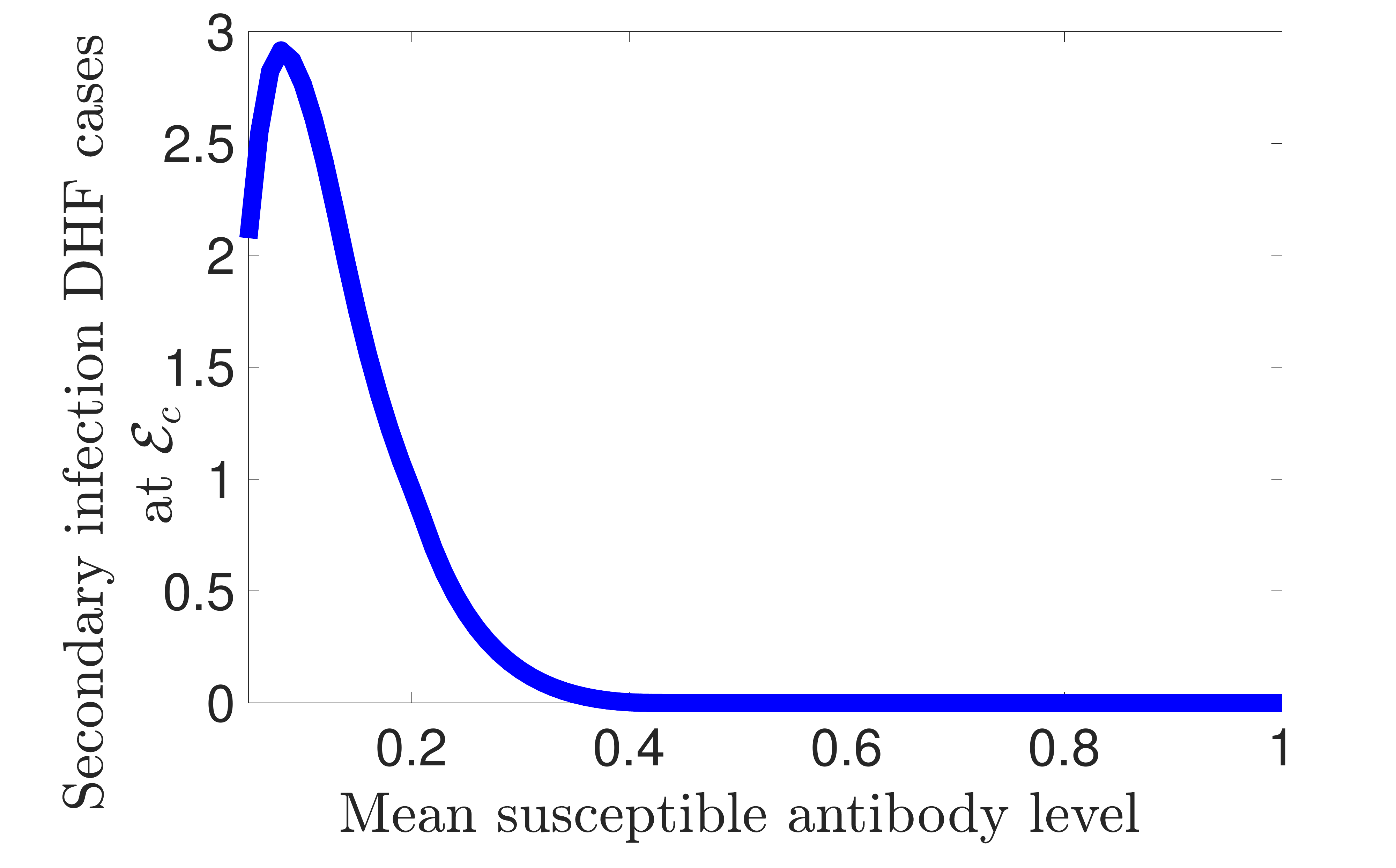}}
\subfigure[][]{\label{fig6f}\includegraphics[width=4.7cm,height=4cm]{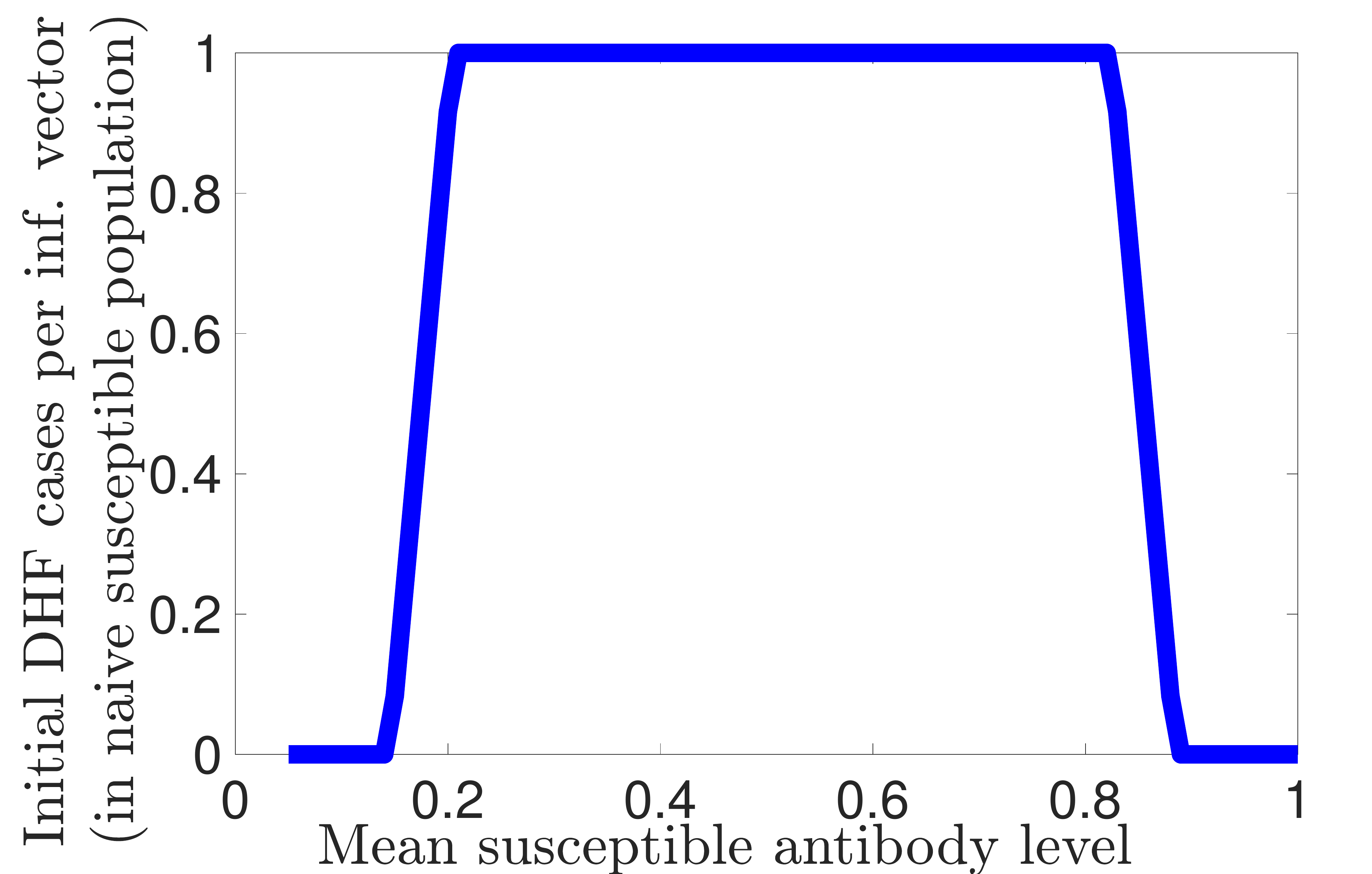}}
\caption{ \emph{ Epidemiological quantities ($\mathcal R_0, \mathcal R_{inv}$) and disease incidence (DHF) versus  immune antibody ($y$) distribution among susceptible population with mean antibody level, $y_m$.} Susceptible (naive individuals) recruitment rate, $\Lambda(y)$, is chosen as symmetric truncated normal distribution on $[y_m-\alpha,y_m+\alpha]$ with mean $y_m$.  In the simulations here, we fix $\alpha=0.03$ and vary $y_m$, in order to compute  (a)  basic reproduction number $\mathcal R_0$ (blue solid line), and DHF cases at coexistence equilibrium $\mathcal E_c$ (orange solid line), (b) invasion reproduction number $\mathcal R_{inv}$, (c) secondary infected DHF cases at $\mathcal E_c$, (d) DHF cases per infected vector in a (naive) susceptible population.}
\label{fig6}
  \end{figure}
\FloatBarrier

\section{Discussion}

In this paper, we develop an immuno-epidemiological model of Dengue tracking dynamic host cross-reactive antibody levels through infection by multiple strains and waning, which affect the overall infection trajectory and severity across the within-host and between-host scales.  The model recapitulates: 
\emph{(i) how intermediate levels of preexistent cross-reactive antibodies enhance infection within a host, and 
(ii) how to scale up to distributions of antibody levels among epidemiological classes in the host population to determine risk of DHF prevalence. }
The system is novel in its multi-scale connection of host immunity and infectious disease dynamics, and provides a unified model of Dengue with ADE phenomenon.

Our mathematical model consists of two linked systems, namely immunological (within-host) and epidemiological (between-host).  First, we formulate ordinary differential equations describing dynamics of virus and immune response for primary or secondary Dengue infection.  The model is motivated by experimental and epidemiological evidence that certain levels of pre-existent cross-reactive antibodies cause the more severe DHF infection.  Indeed, we show that our inclusion of distinct binding and neutralizing kinetics with ADE phenomenon in cross-reactive IgG (memory antibodies), along with neutralizing specific IgG, induce severe infection solely by varying pre-existent cross-reactive IgG in an intermediate window of concentration.  

Next, we connect the within-host dynamics to population scale through a hybrid ODE-PDE vector-host system structured by host antibody level.  Several features of the linking and model structure are distinguished from prior work:
\begin{itemize}
\item The epidemic model is size-structured with respect to a within-host immune variable as opposed to the infection age utilized in most immuno-epidemiological studies.  The one-to-one relationship between cross-reactive memory antibodies $(y(\tau))$ and infection age $\tau$ allows us to formulate epidemiological parameters as functions of within-host variables, similar to \cite{gandolfi2015epidemic}.
\item Immune status, in particular antibody level, is traced through multiple infections by distinct strains and waning during recovery stage, with across-scale feedback on within-host and between-host dynamics. 
\item Heterogeneity in immune response among susceptible individuals given by preexistent antibody level distribution.  Different from \cite{pugliese2011role}, our structuring variable for susceptible population is dynamic through epidemiological stages described above.  
\end{itemize}
The interplay across scales between host immunity and infection severity exhibited in Dengue induced by ADE motivate the above characteristics of our modeling framework.

On the epidemiological scale, in addition to boundedness of solutions, certain threshold dynamics are determined by strain (basic) reproduction number $\mathcal R_k$ and invasion reproduction number $\mathcal R_{inv}^j$, and these quantities depend upon within-host variables.  By linearizing the system around disease-free and single-strain equilibria, we show that local stability is sharply determined by $\mathcal R_k$ or $\mathcal R_{inv}^j$. Furthermore, we derive the formula for a unique coexistence equilibrium $\mathcal E_c$ in the case of no waning and a transcendental equation for existence of $\mathcal E_c$ when waning occurs, under the assumption of symmetric strains.  Lastly, for certain choices of vector-host transmission $\beta_v(y)$ and waning rates $\rho(y)$ representing temporary cross-immunity and exponential decay of antibodies, we obtain explicit formula for equilibria.  Further results are difficult given the model complexity; for instance there may be backward bifurcation preventing global stability of disease-free equilibrium.  Future work will explore stronger analytical results such as uniform persistence of strain $j$ when $\mathcal R_{inv}^j>1$.

Moreover, we construct a multi-scale numerical scheme in order to verify the equilibria calculations and to determine DHF risk landscape in different scenarios.  To accomplish this, we develop a finite difference algorithm combined with numerical solutions of ODE (Runge-Kutta method) and numerical integration of antibody dependent functions in order to simulate the model on both scales.  The complexity of the intertwined systems challenges efficient numerical computation.  Indeed, the combination of heterogeneity in susceptible antibody distribution, and tracking antibody levels through recovery, waning and re-infection, leads us to utilize interpolation of within-host ODE numerical solutions nested in the discretized size-structured epidemic model.  The resulting error magnifies with increasing antibody step size $\Delta y$, causing an unavoidable tradeoff between speed and accuracy.  It may be possible to efficiently transform the problem to infection-age structured system (with sufficiently small number of meshpoints), as we did for the homogenous susceptible antibody distribution (point distribution).  Future work will explore this idea, along with employing higher order finite difference methods to increase accuracy of numerical solutions.  

The numerical simulations suggest that waning after primary infection can lead to a buildup of individuals with antibody levels in risk window for DHF upon secondary infection with a distinct strain.  Different from prior Dengue epidemic modeling studies with ADE, DHF is not simply caused by secondary infection, rather it depends upon preexistent cross-reactive antibody titre, as shown in biological literature.   The model allows us to calculate the exact distribution of host population under risk for severe infection, along with DHF prevalence.  Indeed, by varying the mean of susceptible antibody level distribution ($y_m$), we observe that the reproduction numbers are decreasing in precisely the window of $y_m$ where DHF prevalence is rising.  This suggests there can be conflict between reduction or even elimination of dengue cases by cross-reactive antibody boosting vaccine administered to naive population and a rise in DHF cases. An effective Dengue vaccine would need to boost antibody levels to sufficiently large level to avoid increased DHF incidence which can lead to fatalities.

Future work can build upon the epidemiological implications by explicitly incorporating vaccination into the model to inform Dengue vaccine policy.  The controversy surrounding apparent rise in DHF among certain vaccinated individuals after a vaccination campaign \cite{aguiar2016risks,yang2017reply} and questions into how to safely vaccinate Dengue-at-risk populations, motivates the need for predictive modeling frameworks including effects of targeted vaccination on population antibody levels.   Such a model would also require robust parameter estimation and identifiability analysis extending prior work in multi-scale data fitting \cite{tuncer2016structural}.  Ideally, the model may suggest whom to target for vaccination dependent on cross-reactive antibody level measured by blood samples, in order to prevent DHF incidence and ultimately eradicate the disease.

\section*{Acknowledgement}
The authors thank two anonymous reviewers for their helpful comments and feedback on the manuscript. This work was supported by a grant from the Simons Foundation/SFARI($638193$, HG). CJB is partially supported by a U.S. National Science Foundation grant (DMS-1815095). 

\FloatBarrier

\section*{Appendix: Numerical Convergence Rates} \label{appendix}
In this section, we provide tables showing computed rates and order of convergence for numerical experiments of the finite difference and multi-scale simulation procedure described in Section \ref{Numerical}.  For the numerical tests, we calculate the error in norm between computed solutions of the $t,y$ stepping method at certain step sizes $\de t,h=\de y$ and reference solutions at some final time $t=T$.  We utilize three different types of reference solutions: (i) the numerically approximated equilibrium given by our derived formula \eqref{eqForm}, (ii)  solution of the numerical scheme with smallest step sizes $\de t,\tilde h=\de y$, and (iii) solution of the numerical scheme with step sizes multiplied by factor of $1/2$, $\frac{\de t}{2},\frac{h}{2}$.  For each error calculation at step size $h$, $e_h$, we form a sequence by successively decreasing step size by $1/2$, whereby we compute order of convergence by $\log_2(e_h/e_{h/2})$.  Furthermore, we consider two different scenarios: (a) we start the initial condition where infected vectors, $I_v^1,I_v^2$ are slightly perturbed from $\mathcal E_0$ (outbreak scenario) with final time $T=50 \ days$, (b) we start the initial condition at numerically calculated equilibrium with final time $T=500 \ days$.  For the former scenario (a), we do not use numerically calculated equilibrium as a reference solution since this may be far off from simulation at $t=50$.  

We compute the different orders of convergence because here are several sources of error and to test different initial condition scenarios.  Our method relies on distinct algorithms in addition to the finite difference scheme, such as Runge-Kutta method for within-host ODE (ode45 in MatLab), interpolation, integration and, in the case of numerical equilibrium formula, nonlinear root-finding.   Each routine can produce error, which can also propagate in the form of discontinuities in recovered distribution corresponding to an influx of recovery from primary infected individuals with pre-existent antibody levels at a certain mesh points from the initial susceptible antibody distribution.  In order to efficiently reduce error we utilize the trapezoidal integration when integrating with respect to initial susceptible antibody level $y_0$, but left endpoint integration for other antibody variables since there is small number of mesh points ($M_0$) for $s(\cdot, y_0)$ when compared to the range of antibody levels after infection and waning.  We do also provide one numerical test with only left-endpoint integration shown in last two tables, which gives more error than trapezoidal, but has more regular order of convergence pattern.  In addition, we include comparisons with a larger step size ($\de y=0.06$) for scenarios (a) and the last two tables, which forces a point distribution for susceptible antibody levels, creating different error structure.  Overall, from the different numerical tests, we observe convergence to certain error rates within a particular compartment and/or test scenario, ranging from orders that are sub-linear ($<1$) to larger than quadratic ($>2$).  When comparing with reference solutions computed by numerical simulation at smaller step size, the order of convergence is mostly faster than linear.

\begin{table}[h!]
\caption{Error analysis of $i^h_1(50,y,y_0)$ with initial $I_v^1(0)=I_v^2(0)=.02$, and other components starting at $\mathcal E_0$, for step size $\de y=h$ compared to reference solution $i^{\tilde h}_1$ ($\tilde h=\frac{.00375}{2}$) and $i^{h/2}_1$, respectively, in $L^1$ norm.}
\begin{tabular}{llllll}
$\de t$  & $h=\de y$     & $\left\| i^h_1-i^{\tilde h}_1 \right\|$   & order &       $\left\| i^{h}_1-i^{h/2}_1 \right\|$   & order    \\
\toprule
\\
0.02   & 0.03    & 524.6908885 &             & 91.15745068 &             \\
0.01   & 0.015   & 270.1464444 & 0.957726043 & 88.82019677 & 0.037472811 \\
0.005  & 0.0075  & 102.1797402 & 1.402632521 & 52.36309638 & 0.762337357 \\
0.0025 & 0.00375 & 27.53347556 & 1.891850535 & 27.53347556 & 0.927363678 \\
\bottomrule
\end{tabular}
\end{table}

\begin{table}[h!]
\caption{ Error analysis of $i^h_{12}(50,y,y_0)$ with initial $I_v^1(0)=I_v^2(0)=.02$, and other components starting at $\mathcal E_0$, for step size $\de y=h$ compared to reference solution $i_{12}^{\tilde h}$ ($\tilde h=\frac{.00375}{2}$) and $i^{h/2}_{12}$, respectively, in $L^1$ norm.}
\begin{tabular}{llllll}
$\de t$  & $h=\de y$     & $\left\| i^h_{12}-i^{\tilde h}_{12} \right\|$   & order &       $\left\| i^{h}_{12}-i^{h/2}_{12} \right\|$   & order    \\
\toprule
\\
 0.04      &    0.06     & 0.675368267 &             & 0.006352265 &             \\
0.02   & 0.03    & 0.258213027 & 1.38711261  & 0.157617642 & -4.633            \\
0.01   & 0.015   & 0.064940472 & 1.991372007 & 0.045005349 & 1.808260627 \\
0.005  & 0.0075  & 0.013576682 & 2.257986926 & 0.010354484 & 2.119840769 \\
0.0025 & 0.00375 & 0.002644786 & 2.359908155 & 0.002644786 & 1.969032926  \\
\bottomrule
\end{tabular}
\end{table}

\begin{table}[h!]
\caption{Error analysis of $r^h_1(50,y)$ with initial $I_v^1(0)=I_v^2(0)=.02$, and other components starting at $\mathcal E_0$, for step size $\de y=h$ compared to reference solution $r^{\tilde h}_1$ ($\tilde h=\frac{.00375}{2}$) and $r^{h/2}_1$, respectively, in $L^1$ norm.}
\begin{tabular}{llllll}
$\de t$  & $h=\de y$     & $\left\| r^h_1-r^{\tilde h}_1 \right\|$   & order &       $\left\| r^{h}_1-r^{h/2}_1 \right\|$   & order    \\
\toprule
\\
0.04   & 0.06    & 199.170148  &             & 19.71504953 &              \\
0.02   & 0.03    & 153.0385939 & 0.380105906 & 132.3873372 & -2.747395896 \\
0.01   & 0.015   & 87.31621564 & 0.809574014 & 81.85961804 & 0.693541296  \\
0.005  & 0.0075  & 37.51280945 & 1.218866289 & 35.05161783 & 1.223670901  \\
0.0025 & 0.00375 & 15.17871831 & 1.305333341 & 15.17871831 & 1.20743106  
\end{tabular}
\end{table}

\begin{table}[h!]
\caption{Error analysis of $I_v^{1,h}(50)$ with initial $I_v^1(0)=I_v^2(0)=.02$, and other components starting at $\mathcal E_0$, for step size $\de y=h$ compared to reference solution $I_v^{\tilde h}$ ($\tilde h=\frac{.00375}{2}$) and $I_v^{h/2}$, respectively, in $L^1$ norm.}
\begin{tabular}{llllll}
$\de t$  & $h=\de y$     & $\left\| I_v^h-I_v^{\tilde h} \right\|$   & order &       $\left\| I_v^{h}-I_v^{h/2} \right\|$   & order    \\
\toprule
\\
0.04   & 0.06    & 0.011545595 &              & 0.000253176 &              \\
0.02   & 0.03    & 0.011798771 & -0.031294078 & 0.002043507 & -3.012833819 \\
0.01   & 0.015   & 0.009755265 & 0.274383701  & 0.003205022 & -0.649287231 \\
0.005  & 0.0075  & 0.006550243 & 0.574632587  & 0.003355452 & -0.066172867 \\
0.0025 & 0.00375 & 0.003194791 & 1.035826728  & 0.003194791 & 0.07078528  
\end{tabular}
\end{table}

\begin{table}[h!]
\caption{Error analysis of $s^h(50,y)$ with initial $I_v^1(0)=I_v^2(0)=.02$, and other components starting at $\mathcal E_0$, for step size $\de y=h$ compared to reference solution $s^{\tilde h}$ ($\tilde h=\frac{.00375}{2}$) and $s^{h/2}$, respectively, in $L^1$ norm.}
\begin{tabular}{llllll}
$\de t$  & $h=\de y$     & $\left\| s^h-s^{\tilde h} \right\|$   & order &       $\left\| s^{h}-s^{h/2} \right\|$   & order    \\
\toprule
\\
0.04   & 0.06    & 332.6520601 &             & 41.58150751 &              \\
0.02   & 0.03    & 62.96919952 & 2.401295736 & 132.3873372 & -1.670751168 \\
0.01   & 0.015   & 49.82665921 & 0.33772848  & 81.85961804 & 0.693541296  \\
0.005  & 0.0075  & 32.64463389 & 0.610071986 & 35.05161783 & 1.223670901  \\
0.0025 & 0.00375 & 15.64718881 & 1.060942377 & 15.17871831 & 1.20743106  
\end{tabular}
\end{table}

\begin{table}[h!]
\caption{ Error analysis of $i^h_1(500,y,y_0)$ with initial condition set at numerically calculated equilibrium $\bar i_1(y,y_0)$ for step sizes $\de y=h$ compared to $\bar i_1$, reference solution $i^{\tilde h}_1$ ($\tilde h=.00375$) and $i^{h/2}_1$, respectively, in $L^1$ norm.}
\begin{tabular}{llllllll}
$\de t$  & $h=\de y$   & $\left\| i^h_1-\bar i_1 \right\|$ &  order &       $\left\| i^h_1-i^{\tilde h}_1 \right\|$   & order &       $\left\| i^{h}_1-i^{h/2}_1 \right\|$   & order    \\
\toprule
\\
0.02   & 0.03    & 324.5169207 &             & 47.41855063 &             & 36.62544876 &             \\
0.01   & 0.015   & 196.0711709 & 0.726916278 & 6.727908881 & 2.817221495 & 5.311540273 & 2.785644248 \\
0.005  & 0.0075  & 105.7195462 & 0.891135289 & 0.846383875 & 2.99077412  & 0.846383875 & 2.649746236 \\
0.0025 & $0.00375$ & 56.5106755  & 0.903646798 &             &             &             &    \\
\bottomrule        
\end{tabular}
\end{table}

\begin{table}[h!]
\caption{ Error analysis of $i^h_{12}(500,y,y_0)$ with initial condition set at numerically calculated equilibrium $\bar i_{12}(y,y_0)$ for step sizes $\de y=h$ compared to $\bar i_{12}$, reference solution $i^{\tilde h}_{12}$ ($\tilde h=.00375$) and $i^{h/2}_{12}$, respectively, in $L^1$ norm.}
\begin{tabular}{llllllll}
$\de t$  & $h=\de y$   & $\left\| i^h_{12}-\bar i_{12} \right\|$ &  order &       $\left\| i^h_{12}-i^{\tilde h}_{12} \right\|$   & order &       $\left\| i^{h}_{12}-i^{h/2}_{12} \right\|$   & order    \\
\toprule
\\
0.02   & 0.03    & 8.783517629 &             & 6.042508363 &             & 3.481283278 &             \\
0.01   & 0.015   & 4.503971602 & 0.963601096 & 1.631284321 & 1.88913931  & 1.082280827 & 1.685544319 \\
0.005  & 0.0075  & 2.240824134 & 1.007168304 & 0.32158709  & 2.342726861 & 0.32158709  & 1.7507935   \\
0.0025 & 0.00375 & 1.175949762 & 0.930202999 &             &             &             &            
\end{tabular}
\end{table}

\begin{table}[]
\caption{ Error analysis of $r^h_1(500,y)$ with initial condition set at numerically calculated equilibrium $\bar r_1(y)$ for step sizes $\de y=h$ compared to $\bar r_1$, reference solution $r^{\tilde h}_1$ ($\tilde h=.00375$) and $r^{h/2}_1$, respectively, in $L^1$ norm.}
\begin{tabular}{llllllll}
$\de t$  & $h=\de y$   & $\left\| r^h_1-\bar r_1 \right\|$ &  order &       $\left\| r^h_1-r^{\tilde h}_1 \right\|$   & order &       $\left\| r^{h}_1-r^{h/2}_1 \right\|$   & order    \\
\toprule
\\
0.02   & 0.03    & 618.0541417 &             & 775.4388502 &             & 599.9811402 &             \\
0.01   & 0.015   & 461.0633089 & 0.422768362 & 356.8297922 & 1.119776945 & 285.8162025 & 1.069829449 \\
0.005  & 0.0075  & 392.9077021 & 0.230774413 & 143.9789694 & 1.309377976 & 143.9789694 & 0.989229606 \\
0.0025 & 0.00375 & 357.6509033 & 0.135638364 &             &             &             &            
\end{tabular}
\end{table}

\begin{table}[]
\caption{ Error analysis of $I_v^{1,h}(500)$ with initial condition set at numerically calculated equilibrium $\bar I_v$ for step sizes $\de y=h$ compared to $\bar I_v$, reference solution $I_v^{\tilde h}$ ($\tilde h=.00375$) and $I^{h/2}_v$, respectively.}
\begin{tabular}{llllllll}
$\de t$  & $h=\de y$   & $\left | I^h_v-\bar I_v \right |$ &  order &       $\left | I^h_v-I^{\tilde h}_v \right |$   & order &       $\left | I^{h}_v-I^{h/2}_v \right |$   & order    \\
\toprule
\\
0.02   & 0.03    & 0.006679126 &              & 0.00106215 &             & 0.0008444   &             \\
0.01   & 0.015   & 0.007472821 & -0.161993635 & 0.00021775 & 2.286242828 & 0.000181798 & 2.215591493 \\
0.005  & 0.0075  & 0.007599984 & -0.024343416 & 3.60E-05   & 2.598522051 & 3.60E-05    & 2.338184328 \\
0.0025 & 0.00375 & 0.007596424 & 0.000675948  &            &             &             &            
\end{tabular}
\end{table}

\begin{table}[]
\caption{ Error analysis of $s^h(500,y)$ with initial condition set at numerically calculated equilibrium $\bar s(y)$ for step sizes $\de y=h$ compared to $\bar s$, reference solution $s^{\tilde h}$ ($\tilde h=.00375$) and $s^{h/2}$, respectively, in $L^1$ norm.}
\begin{tabular}{llllllll}
$\de t$  & $h=\de y$   & $\left\| s^h-\bar s \right\|$ &  order &       $\left\| s^h-s^{\tilde h} \right\|$   & order &       $\left\| s^{h}-s^{h/2} \right\|$   & order    \\
\toprule
\\
0.02   & 0.03    & 462.4651032 &              & 45.73632432 &              & 33.06512108 &              \\
0.01   & 0.015   & 518.3459247 & -0.164570714 & 78.8058773  & -0.784962803 & 31.29116995 & 0.079554584  \\
0.005  & 0.0075  & 526.9051322 & -0.023628012 & 47.51537534 & 0.729908803  & 47.51537534 & -0.602638825 \\
0.0025 & 0.00375 & 526.0055158 & 0.002465304  &             &              &             &             
\end{tabular}
\end{table}

\begin{table}[] \label{not_trap1}
\caption{Error analysis with initial condition set at numerically calculated equilibrium $\mathcal E_c$ for step sizes $\de y=h$ compared to $\mathcal E_c$, using left-end point approximation integration.}
\begin{tabular}{llllllll}

$\de t$ & $\de y$ & $s(500,y)$   &   Order          & $i_1(500,y,y_0)$   &    Order         & $i_{12}(500,y,y_0)$   &    Order               \\
\toprule
\\
0.4   & 0.06    & 660.9536    &              & 14.97353256 &              & 32.78271708 &              \\
0.2   & 0.03    & 2696.278157 & -2.028348437 & 1830.012949 & -6.933295423 & 34.9009843  & -0.090332293 \\
0.1   & 0.015   & 1728.915761 & 0.64110176   & 641.6430016 & 1.51201112   & 12.19733364 & 1.516701918  \\
0.05  & 0.0075  & 1078.164779 & 0.68128989   & 214.2830149 & 1.582253332  & 4.159619189 & 1.552042445  \\
0.025 & 0.00375 & 695.6780402 & 0.632086     & 74.63216773 & 1.521648004  & 1.541971374 & 1.431675474 
\end{tabular}
\end{table}

\begin{table}[]
 \label{not_trap2}
\caption{Error analysis with initial condition set at numerically calculated equilibrium $\mathcal E_c$ for step sizes $\de y=h$ compared to $\mathcal E_c$, using left-end point approximation integration.}
\begin{tabular}{llllllll}

$\de t$ & $\de y$ & $r_1(500,y)$   &   Order          & $I_v^1(500)$   &    Order                    \\
\toprule
\\
0.4   & 0.06    & 2522.062793 &             & 0.001742764 &              \\
0.2   & 0.03    & 2166.277168 & 0.582119138 & 0.038248294 & -4.455946151 \\
0.1   & 0.015   & 1559.573153 & 0.694509637 & 0.024707943 & 0.63042051   \\
0.05  & 0.0075  & 1167.784572 & 0.667748654 & 0.015523692 & 0.670503164  \\
0.025 & 0.00375 & 942.6623956 & 0.619407636 & 0.010085476 & 0.622192549 
\end{tabular}
\end{table}

\FloatBarrier

\bibliography{dengue_ref}
\bibliographystyle{spmpsci}

\end{document}